\documentclass[lettersize,onecolumn]{IEEEtran}

\usepackage{amssymb}
\usepackage{verbatim}
\usepackage{array}
\usepackage{latexsym}
\usepackage{enumerate}
\usepackage{amsmath}
\usepackage{amsfonts}
\usepackage{amsthm}
\usepackage{color}
\usepackage[english]{babel}
\usepackage{hyperref}
\usepackage{enumitem}
\usepackage{cleveref}
\crefname{subsection}{subsection}{subsections}
\Crefname{subsection}{Subsection}{Subsections}
\crefname{equation}{equation}{equations}

\usepackage{tikz}
\usepackage{wasysym}
\usepackage{mathrsfs}
\usepackage{setspace}
\usepackage{textcomp}
\usepackage{graphicx}
\usepackage{cite}
\usepackage{comment,cancel}
\graphicspath{ {images/} }
\usepackage{listings}
\usepackage[all]{xypic}

\theoremstyle{definition}
\newtheorem{theorem}{Theorem}[section]
\newtheorem{proposition}[theorem]{Proposition}
\newtheorem{lemma}[theorem]{Lemma}
\newtheorem{corollary}[theorem]{Corollary}

\newtheorem{definition}{Definition}
\newtheorem{example}[theorem]{Example}
\newtheorem{construction}[theorem]{Construction}

\def\F{\mathbb F}
\def\Q{\mathbb Q}

\def\Z{\mathbb Z}

\def\cC{\mathcal C}

\def\cO{\mathcal O}

\newcommand{\fq}{\mathbb {F}_q}
\newcommand{\fqq}{\mathbb {F}_{q^2}}

\def\End{\mbox{\rm End}}

\def\ord{\mbox{\rm ord}}
\def\deg{\mbox{\rm deg}}

\def\div{\mbox{\rm div}}
\def\min{{\rm min}}

\def\Div{\mbox{\rm div}}

\def\dim{\mbox{\rm dim}}
\def\supp{\mbox{\rm Supp}}
\def\fq{{\mathbb F}_q}

\def\supp{{\rm Supp}}

\newcommand\pl{\mathfrak{p}} 
\newcommand\ql{\mathfrak{q}} 
\newcommand\rl{\mathfrak{r}}

\newcommand{\mC}{{\mathcal{C}}}

\def\neg1{\text{\boldmath$1$}}

\def\neg1{\text{\boldmath$1$}}

\def\cC{\mathcal C}

\def\PP{\mathbb{P}}

\def\Fq{{\mathbb{F}_q}}
\def\Fqq{{\mathbb{F}_{q^2}}}
\def\Fqr{{\mathbb{F}_{q^r}}}

\DeclareMathOperator{\Hom}{Hom}

\newcommand{\ie}{\textit{i.e.,~}}

\usepackage{subcaption}

\definecolor{BC}{HTML}{0000aa}
\definecolor{GV}{HTML}{aa0000}

\newtheorem{remark}[theorem]{Remark}

\begin{document}
\title{Explicit and asymptotically good constructions of Algebraic Geometry codes in the sum-rank metric}
\author{Peter Beelen\thanks{Department of Applied Mathematics and Computer Science, Technical University of Denmark, DK-2800, Kongens Lyngby, Denmark, Email: pabe@dtu.dk} \and Elena Berardini\thanks{CNRS; IMB, University of Bordeaux, 33405 Talence, France, Email: elena.berardini@math.u-bordeaux.fr} \and Anina Gruica\thanks{Department of Applied Mathematics and Computer Science, Technical University of Denmark, DK-2800, Kongens Lyngby, Denmark, Email: anigr@dtu.dk} \and Maria Montanucci\thanks{Department of Applied Mathematics and Computer Science, Technical University of Denmark, DK-2800, Kongens Lyngby, Denmark, Email: marimo@dtu.dk}}

\maketitle

\begin{abstract}
Algebraic Geometry (AG) codes (\ie linear codes from algebraic function fields) in the Hamming metric were proposed by Goppa in 1980 and have been intensively studied ever since. Linearized Algebraic Geometry codes, the analogue of AG codes in the sum-rank metric, were instead introduced more recently \cite{BC2024}, using quotients of the ring of Ore polynomials with coefficients in an algebraic function field. In this paper we further investigate the results in \cite{BC2024}, providing explicit, optimal and asymptotic constructions. 
\end{abstract}

\begin{IEEEkeywords}
sum-rank metric code, algebraic function field, Ore polynomial, finite field
\end{IEEEkeywords}

\section{Introduction}

Linear codes in the Hamming metric have been investigated for more than seventy years, and have played and still play a central role in correcting errors in noisy communication channels. Linear codes in the rank metric were introduced by Delsarte \cite{D1978} for purely combinatorial interest. It is, however, well known that rank metric codes offer a solution to the problem of error amplification in linear network coding, both in the one-shot and multi-shot regime \cite{KK2008,RK2018,SKK2008}.
In the latter scenario, using codes with the sum-rank metric can significantly reduce the size of the
network alphabet \cite{NU2010}, which is particularly handy in fast-evolving systems. 

Sum-rank metric codes were introduced more recently than Hamming and rank metric codes, and can be seen a generalization of them. They can in fact be defined as follows.

Let $\mathbb{F}_q$ be a finite field with $q=p^n$ elements ($p$ a prime number) and let $s \in \mathbb{Z}$ be a positive integer. Let further $\underline{V}=(V_1,\ldots,V_s)$ and $\underline{W}=(W_1,\ldots,W_s)$, where $V_i$ and $W_i$ are vector spaces over $\mathbb{F}_q$ for all $i\in \{1,\dots,s\}$. Denote with $n_i$ and $m_i$ the dimension of $V_i$ and $W_i$ over $\mathbb{F}_q$, respectively. Furthermore, let
$$\Hom_{\fq}(\underline{W},\underline{V})=\Hom_{\fq}(W_1,V_1) \times \Hom_{\mathbb{F}_q}(W_2,V_2) \times \cdots \times \Hom_{\mathbb{F}_q}(W_s,V_s),$$
where for $i \in \{1,\dots,s\}$, $\Hom_{\mathbb{F}_q}(W_i,V_i)$ denotes the ${\mathbb{F}_q}$-vector space of dimension $m_in_i$ of ${\mathbb{F}_q}$-linear homomorphism $W_i \rightarrow V_i$. 
\begin{definition}
Let $\underline{\varphi}=(\varphi_1,\ldots,\varphi_s) \in \Hom_{\mathbb{F}_q}(\underline{W},\underline{V})$. Then, the \textit{sum-rank weight} of $\underline{\varphi}$ is
$$w_{srk}(\underline{\varphi}):=\sum_{i=1}^{s}\mathrm{rk}(\varphi_i).$$
Given $\underline{\varphi},\underline{\psi} \in \Hom_{\mathbb{F}_q}(\underline{W},\underline{V})$ we define the sum-rank metric distance of $\underline{\varphi}$ and $\underline{\psi}$ as
$$d_{srk}(\underline{\varphi},\underline{\psi}):=w_{srk}(\underline{\varphi}-\underline{\psi}).$$
\end{definition}

With this notion of distance in mind we can define codes in the sum-rank metric as follows.

\begin{definition}
A \textit{sum-rank metric code} $\mC$ over ${\mathbb{F}_q}$ is an ${\mathbb{F}_q}$-linear subspace of $\Hom_{\mathbb{F}_q}(\underline{W},\underline{V})$. The parameters of $\mC$ are 
\begin{itemize}
    \item[(i)] \textit{length}: $n:=\dim_{\mathbb{F}_q} \Hom_{\mathbb{F}_q}(\underline{W},\underline{V})=\sum_{i=1}^s m_in_i$;
    \item[(ii)] \textit{dimension}: $k:=\dim_{\mathbb{F}_q} (\mC)$;
    \item[(iii)] \textit{minimum distance}: $d:=\min \{w_{srk}(\underline{c}) : \underline{c} \in \mC, \ \underline{c} \ne \underline{0}\}$.
\end{itemize}
\end{definition}

Note that when $n_i = m_i =1$ for all $i=1,\ldots,s$ the previous definition reduces to codes of length
$s$ with the Hamming metric and, in the case where $s = 1$, to rank metric codes. 

Apart from applications in multi-shot linear network coding \cite{MPK20191} and space-time coding, sum-rank metric codes can also be
used in distributed storage systems \cite{MPK20192}. Furthermore, convolutional codes endowed with the sum-rank metric have been considered in the literature, see \cite{MBK2016}, in order to address network streaming problems.

Despite its relatively recent introduction, several constructions of sum-rank metric codes are known. A general construction of MSRD codes (the analogue of MDS codes in the sum-rank metric) up to a certain length, along with a Welch--Berlekamp sum-rank decoding algorithm, can be found in \cite{MP2018} (see
also \cite{MPK20192}). These codes are linearized Reed--Solomon codes, that is, a hybrid between Reed--Solomon codes and Gabidulin codes, and their duals are again linearized Reed--Solomon codes \cite{MPK20191}. In \cite{MPK20192}, sum-rank alternant codes were introduced, while sum-rank BCH codes, Goppa codes and Reed--Muller codes have been
introduced in \cite{MP2021}, \cite{CD23} and \cite{BC2025}, respectively.

As will happen in this paper, in all the above-mentioned results, most of the sum-rank metric codes considered can be viewed as linear spaces over an extension field, which in particular requires that all matrix blocks (when representing each homomorphism as a matrix) have the same number of columns.
Constructions for which this is not the case exist; see, for example, \cite{MP2024, NSZ2023, MP2023}.

The analogue of Algebraic Geometry (AG) codes in the sum-rank metric, called linearized AG (LAG) codes, was introduced in \cite{BC2024}, using Ore polynomials with coefficients over a curve function field. Successively, further results were obtained in \cite{ZZ2025}, while the duality theory and a decoding algorithm for these codes were developed in \cite{BCD26}. 

In this paper, we will consider an important case of interest for sum-rank metric codes, which occurs when we are looking at a finite extension ${\mathbb{F}_{q^r}}$ of degree $r$ of $\mathbb{F}_q$, and we set $V_i={\mathbb{F}_{q^r}}$ for every $i\in\{1,\ldots,s\}$. In this case, $n_i = r$ for all $i$ and the ambient space
$\Hom_{\mathbb{F}_q}(W,{\mathbb{F}_{q^r}})$ is itself a vector space over ${\mathbb{F}_{q^r}}$. We are then more particularly interested in
${\mathbb{F}_{q^r}}$-linear codes which are, by definition, ${\mathbb{F}_{q^r}}$-linear subspaces $\mC \subseteq \Hom_{\mathbb{F}_q}(\underline{W},{\mathbb{F}_{q^r}})$. We can then
define ${\mathbb{F}_{q^r}}$-variants of the parameters of the code, namely the ${\mathbb{F}_{q^r}}$-length of $\mC$, $n_{r}=\sum_{i=1}^s m_i$, and the ${\mathbb{F}_{q^r}}$-dimension of $\mC$,  $k_{r} := \dim_{\mathbb{F}_{q^r}}\mC$. The minimum distance $d$ of $\mC$ stays unchanged. Those three main
parameters are related by the equivalent version of the Singleton bound in the sum-rank metric \cite[Prop.~34]{MP2018},
$$d + k_{r} \leq n_{r} + 1.$$ Codes with
parameters attaining this bound are called Maximum Sum-Rank Distance (MSRD).

In this paper, we further investigate the machinery proposed in \cite{BC2024} for AG sum-rank metric codes. More precisely in \Cref{sec:prelim}, we recall the basic notions on Ore polynomial rings and the construction from \cite{BC2024}. In \Cref{sec:parameters}, we analyze LAG codes further and give several results that allow for a more explicit construction of LAG codes in under some technical but not particularly restrictive assumptions. In the same section, we extend the bound for the parameters of the resulting AG sum-rank metric codes by allowing the corresponding evaluation map not to be injective (see \Cref{thm:exact_dimension}). In \Cref{sec:explicitconstructions}, we provide several explicit AG codes in the sum-rank metric with good parameters. In particular, we recover the construction of linearized Reed--Solomon codes (\Cref{constr:1}) and obtain another class of MSRD codes from Kummer extensions of rational function fields (\Cref{constr:Kummer}). In \Cref{sec:maxfunctfields}, we focus on a special class of algebraic function fields called maximal. These codes have a designed minimum distance (like the standard AG codes), with a penalty from being optimal upper bounded by $rg$, where $g$ is the genus of the function field. In \Cref{sec:asymptotic}, we provide new asymptotic results. A first improvement of the known Gilbert--Varshamov bound for sum-rank metric codes is provided in \Cref{thm:asymptotic3}. This result is followed by further improvements obtained using explicit optimal and good towers, both if $q$ is a square (\Cref{thm:asymptotic}) and if $q$ is not a square (\Cref{thm:asymptotic4}). Lastly, in \Cref{sec:conclusion}, we draw some conclusions and offer some ideas for further development of our work.


\section{Preliminaries} \label{sec:prelim}
In this section, we recall basic notions on Ore polynomial rings and introduce the objects which we shall use to summarize the construction of linearized Algebraic Geometry codes as introduced in~\cite{BC2024}.
\smallskip

For the rest of the paper, $\Fq$ will denote the finite field with $q$ elements. We consider two function fields $K$ and $L$ defined over $\Fq$. Moreover, we assume that $L/K$ is Galois with cyclic Galois group of order~$r$ generated by $\Phi$. We use the notation $N_{L/K}$ for the field norm.

\subsection{Ore polynomial rings and the algebra $D_{L,x}$} \label{subsec:ore}
For us to explain the algebraic geometry construction of sum-rank metric codes from~\cite{BC2024}, we need some preliminaries on Ore polynomial rings, which is the purpose of this subsection. We are going to work in our setting, that is, with $L/K$ being a function field extension. Note however that all the theory developed below applies unchanged to any field extension.

We denote by $L[T;\Phi]$ the ring of Ore polynomials with coefficients in $L$ in the variable $T$. The elements of this ring are the polynomials in $L[T]$, with usual addition but multiplication twisted with the following rule
$$T \cdot a=\Phi(a) \cdot T.$$

For $x \in K^*$, consider the algebra
$$D_{L,x}:=L[T;\Phi]/\langle T^r-x \rangle.$$
This inherits a ring structure from $L[T;\Phi]$, because every polynomial in $L[T;\Phi]$ commutes with $T^r-x$ (recall that $x \in K$ so $\Phi(x)=x$). In fact, $D_{L,x}$ can also be seen as a free module over $L$ of rank $r$ with basis given by $T^i+\langle T^r-x \rangle$ with $i=0,\ldots r-1$. For simplicity of notation we will write $f(T) \in D_{L,x}$ instead of the coset notation $f(T)+\langle T^r-x \rangle$ (\ie the basis written before is $\{1,T,\ldots,T^{r-1}\}$).

The following two ring homomorphisms on the quotient ring $D_{L,x}$ will be crucial for the code construction (see \cite[Sec.~II]{BC2024} for more details):
\begin{itemize}
\item Let $u \in L^*$ and $v:=N_{L/K}(u) \in K$. Then, there is a (well-defined) ring homomorphism
\begin{align*}
    \gamma_u:D_{L,x} &\rightarrow D_{L,v^{-1}x}\\ T &\mapsto uT
\end{align*}
    which is invertible, with inverse $\gamma_{u^{-1}}$. In particular, the rings $D_{L,x}$ and $D_{L,v^{-1}x}$ are isomorphic.
    \item When $x=1$, we have the following well-defined isomorphism:
    \begin{align*}
    \varepsilon: D_{L,1} &\rightarrow \End_K(L)\\ T &\mapsto\Phi
\end{align*}
\end{itemize}

Finally, for $f\in D_{L,x}$, consider the multiplication by $f$ $L$-linear map 
     $D_{L,x} \rightarrow D_{L,x}, g \mapsto gf.$ 
Its matrix form is
\begin{equation}\label{eq:Mf}
M_f:=\begin{pmatrix}
f_0 & x\cdot \Phi(f_{r-1}) & \ldots & x \cdot \Phi^{r-1}(f_1) \\
f_1 & \Phi(f_0) & \ldots & x \cdot \Phi^{r-1}(f_2) \\
\vdots & \vdots & & \vdots \\
f_{r-2} & \Phi(f_{r-3}) & \ldots & x \cdot \Phi^{r-1}(f_{r-1})\\
f_{r-1} & \Phi(f_{r-2}) &  \ldots & \Phi^{r-1}(f_0)
\end{pmatrix}.
\end{equation}

This map will play a role in the explicit construction of linearized AG codes later.

\subsection{Linearized AG codes}
We are now going to present AG codes in the sum-rank metric, termed linearized AG codes, as introduced in \cite{BC2024}.

We let $\Div_\Q(K)$ and $\Div_\Q(L)$ be the group of divisors on $K$ 
and $L$, respectively, where we allow the coefficients to be in $\Q$ (instead of only in $\Z$, in which case we will use the notations $\Div(K)$ and $\Div(L)$). 
To avoid confusion, we reserve the letter $\pl$ (resp.~$\ql$) to
denote places of $K$ (resp.~of $L$). We say that a place $\ql$ is \emph{above}
$\pl$, denoted as $\ql | \pl$, when $\ql \cap K=\pl$. We denote by $\PP_K$ and $\PP_L$ the set of places of $K$ and $L$, respectively, and set $\pi : 
\mathbb{P}_L\to \mathbb{P}_K$ to be the map sending $\ql\mapsto \pl=\ql\cap K$.

 We denote by $\nu_\pl(\cdot)$ and $\nu_{\ql}(\cdot)$ the valuations associated with $\pl$ and $\ql$, respectively. Since we are working with Galois extensions, the ramification indices $e_{\ql|\pl}$ of $\ql$ over $\pl$ are all equal for a fixed $\pl$, thus, we simply denote them by $e_\ql$ or $e_\pl$. 
 
Finally, for a place $\pl$ in $K$ we denote with $K_\pl$ the completion of $K$ at $\pl$, and for a place $\ql$ in $L$ we denote with $L_\ql$ the completion of $L$ at $\ql$. Note that $L_\ql/K_\pl$ is a field extension, so that one can consider the field norm $N_{L_\ql/K_\pl}$. We also set the following notation $$L_\pl:=K_\pl \otimes_K L \sim \prod_{i=1}^{m_\pl} L_{\ql_i},$$
where $m_\pl$ denotes the number of places $\ql$ of $L$ above $\pl$. Note that $m_\pl$ divides $r$, since $L/K$ is Galois extension.

We now fix a function $x \in K^\times$ and consider the algebra
$$D_{L,x}=L[T;\Phi]/\langle T^r-x \rangle$$
introduced in the previous section.
For a place $\pl \in K$, we set
\begin{align*}   \rho_{\pl}=\frac{e_\pl\cdot\nu_{\pl}(x)}{r}=\frac{\nu_{\ql}(x)}{r}=\frac{a_{\pl}}{b_{\pl}},
\end{align*}
where $b_p$ is a positive integer relatively prime to $a_\pl$. 
For the construction, we also fix:
\begin{enumerate}[label=(\roman*)] 
    \item a divisor $E=\sum_{\ql \in \mathbb{P}_L} n_{\ql} \ql \in \div_{\Q}(L)$, where for all $\ql$ we have $n_{\ql} \in \frac{1}{b_{\pl}}\Z$ where $\pl$ is the place of $K$ below $\ql$;
    \item a positive integer $s$ and $s$ rational places $\pl_1,\dots,\pl_s \in K$ which do not belong to $\pi(\supp(E))$.
\end{enumerate}

We will also need the following hypotheses:
\begin{itemize}
    \item[(\textbf{H1})] There are no non-zero zero divisors in $D_{L,x}$.
    \item[(\textbf{H2})] For all $i \in \{1,\dots,s\}$ it holds: for all places $\ql$ above $\pl_i$ there exists $u_{\ql} \in L_{\ql}^{\times}$ such that $\nu_{\ql}(u_{\ql})=\frac{e_{\pl_i}}{r}\cdot{\nu_{\pl_i}(x)}$ and 
    \begin{align*}
        x=\displaystyle\prod_{\ql | \pl_i}N_{L_{\ql}/K_{\pl_i}}(u_{\ql}).
    \end{align*}
\end{itemize}
We are now ready to describe the code construction. We consider the Riemann--Roch space of $D_{L,x}$ associated with the divisor $E$ as introduced in \cite[Def.~3 and Eq.~5]{BC2024}
$$\Lambda_{L,x}(E) = \bigoplus_{i=0}^{r-1} L(E_i) \cdot T^i,$$
where 
the divisors $E_i$, for $(0 \leq i < r)$, are defined by
$$E_i := \sum_{\ql \in \mathbb{P}_L} \big\lfloor
n_\ql + i \cdot \rho_{\pi(\ql)} \big\rfloor \cdot \ql\in \Div(L)$$
and the $L(E_i)$s are classical Riemann--Roch spaces of $L$.

For each $\pl$, with the notation of \Cref{sec:prelim}, we have an isomorphism
$$\varepsilon_\pl :
D_{L_\pl,x} \stackrel{\gamma_{u_\pl}}\longrightarrow
D_{L_\pl,1} \stackrel{\varepsilon}\longrightarrow
\End_{K_\pl}(L_\pl),$$
where $D_{L_\pl,x}=L_\pl[T;\Phi]/\langle T^r-x \rangle$ and $D_{L_\pl,1}=L_\pl[T;\Phi]/\langle T^r-1 \rangle$, respectively.
Following the discussion in \cite[Sec.~A]{BC2024} we see that restricting $\varepsilon_\pl$ to $\Lambda_{L,x}(E) $ and composing with the reduction modulo $t_\pl$, a uniformizer of $\pl$, we  get
$$\bar \varepsilon_\pl :
\Lambda_{L,x}(E) \stackrel{\varepsilon_\pl }\longrightarrow
\End_{K_\pl}(\cO_{L_\pl})\xrightarrow{\mathrm{mod}\,t_\pl}
\End_{\Fq}(V_\pl),$$
where we used the notation 
$V_\pl$ for the $r$-dimensional $\Fq$-algebra $\cO_{L_{\pl}}/t_\pl \cO_{L_{\pl}}$.
Finally, the linearized AG code $\cC(x; E; \pl_1, \ldots, \pl_s)$ is defined as the image of the multi-evaluation map
\begin{equation*}
\begin{array}{rcl}
\alpha : \quad 
\Lambda_{L,x}(E) & \longrightarrow & \prod_{i=1}^s \End_{\Fq}(V_{\pl_i}) \\
f & \mapsto & \big(\bar \varepsilon_{\pl_1}(f),\dots,\bar \varepsilon_{\pl_s}(f)\big).
\end{array}
\end{equation*}

\section{Linearized algebraic geometry codes and their parameters}\label{sec:parameters}

The goal of this section is twofold. First, we give a complete study of the parameters of linearized AG codes, extending what was done in \cite{BC2024}. Secondly, we prove some technical lemmas which, under some minor hypotheses on the evaluation places, allow for a simplified construction of the codes, and which we shall use for the explicit construction of \Cref{sec:explicitconstructions}. 

In this section, we draw the notations from \Cref{sec:prelim}, and we let $g_K$ and $g_L$ denote the genus of $K$ and $L$ respectively.

\subsection{The parameters of LAG codes} We want to study the parameters of the linearized AG (LAG) code $\cC:=\cC(x; E; \pl_1, \ldots, \pl_s)$ introduced at the end of the previous section.

The $\Fq$-linear code $\cC$ clearly has length $n=sr^2$. Bounds on the dimension and minimum distance of the LAG code were given in \cite[Thm.~2]{BC2024}, assuming that $\deg_L(E) < rs$, which in turn implies the injectivity of the map $\alpha$. In what follows, we prove the parameters of the codes without the hypothesis on $\deg_L(E)$.

\begin{theorem}\label{thm:exact_dimension}
We assume (\textbf{H1}) and (\textbf{H2}). The parameters of the code $\cC(x; E; \pl_1, \ldots, \pl_s)$ satisfy
\begin{align*}
k &=\dim_{\mathbb{F}_q}\, \Lambda_{L,x}(E) - \dim_{\mathbb{F}_q} \, \Lambda_{L,x}\left(E-\sum_{i=1}^s \sum_{\ql|\pl_i} e_{\pl_i}\ql\right),\\
d & \geq sr - \deg_L(E).
\end{align*}
\end{theorem}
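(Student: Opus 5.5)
The plan is to identify the kernel of $\alpha$ exactly and then to reuse the weight argument of \cite[Thm.~2]{BC2024}, which does not depend on injectivity.

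\textbf{Dimension.} Since $\cC$ is the image of the $\Fq$-linear map $\alpha$, we have $k=\dim_{\Fq}\Lambda_{L,x}(E)-\dim_{\Fq}\ker\alpha$. It therefore suffices to show that
\[
\ker\alpha=\Lambda_{L,x}\Big(E-\sum_{i=1}^s\sum_{\ql|\pl_i}e_{\pl_i}\ql\Big).
\]
The question is local, so we fix one $\pl=\pl_i$ and write $f=\sum_j f_jT^j$.

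The first step is to make $\bar\varepsilon_\pl$ explicit. Using (\textbf{H2}), choose $u_\pl=(u_\ql)_{\ql|\pl}\in L_\pl^\times$ with $N(u_\pl)=x$. Then $\gamma_{u_\pl}(f)=\sum_j f_j\,u_\pl\Phi(u_\pl)\cdots\Phi^{j-1}(u_\pl)\,T^j$, and $\varepsilon$ sends this to $\sum_j c_j\Phi^j$ with $c_j=f_j\prod_{l<j}\Phi^l(u_\pl)$. The valuations are $\nu_\ql(c_j)=\nu_\ql(f_j)+j\rho_\pl$. Since $\pl\notin\pi(\supp E)$, we have $n_\ql=0$ and $b_\pl\mid$ everything relevant, so $f\in\Lambda_{L,x}(E)$ gives $\nu_\ql(c_j)\ge -\lfloor j\rho_\pl\rfloor+j\rho_\pl\ge 0$ after the appropriate normalisation. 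Hence $\sum_j c_j\Phi^j\in\End(\cO_{L_\pl})$.

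Reduction modulo $t_\pl$ gives the map $\bar\varepsilon_\pl(f)=\sum_j\bar c_j\bar\Phi^j$ on $V_\pl=\cO_{L_\pl}/t_\pl$. The key claim is that this is zero exactly when every $c_j\in t_\pl\cO_{L_\pl}$. This follows from an Artin/Dedekind-type independence of $\bar\Phi^0,\dots,\bar\Phi^{r-1}$ over $V_\pl$, which I expect to be the main obstacle. When $\pl$ is unramified, $V_\pl$ is étale of rank $r$ and the independence is classical. In the ramified case the $\bar\Phi^j$ can coincide modulo $t_\pl$, and one must instead work with the filtration of $\cO_{L_\ql}$ by powers of the uniformiser $t_\ql$. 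My first attempt would be to compare, via the matrix $M_f$ of \eqref{eq:Mf}, the condition $\bar\varepsilon_\pl(f)=0$ with the condition that $M_{\gamma(f)}$ has entries in $t_\pl\cO$. Here $t_\pl\cO_{L_\ql}=t_\ql^{e_\pl}\cO_{L_\ql}$.

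Granting the claim, $\nu_\ql(c_j)\ge e_\pl$ for all $\ql|\pl$ translates into $\nu_\ql(f_j)\ge e_\pl-\lfloor j\rho_\pl\rfloor$. This is precisely $f_j\in L(E_j-\sum_{\ql|\pl}e_\pl\ql)$ locally at $\pl$. The shifted divisor $E-\sum e_{\pl_i}\ql$ is again admissible, since its coefficients are integers and the floors commute with the integer shift. Intersecting over $i$ yields the kernel formula.

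\textbf{Distance.} Let $f\in\Lambda_{L,x}(E)$ with $\alpha(f)\ne0$. I follow \cite[Thm.~2]{BC2024}: by (\textbf{H1}), $D_{L,x}$ is a division algebra, and the reduced norm $\mathrm{Nrd}(f)$ lies in $K^\times$. The rank defect of $\bar\varepsilon_{\pl_i}(f)$ is bounded by $\nu_{\pl_i}(\mathrm{Nrd}(f))$, using that $r-\mathrm{rk}$ is at most the $t_\pl$-adic valuation of the determinant of the local matrix of $\varepsilon_\pl(f)$. The pole orders of $\mathrm{Nrd}(f)$ are controlled by $E$, giving $\deg$ of the positive part at most $\deg_L(E)$ (up to the normalising factor $r$). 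Summing, $\sum_i(r-\mathrm{rk}\,\bar\varepsilon_{\pl_i}(f))\le\deg_L(E)$, which is $d\ge sr-\deg_L(E)$. That argument never used injectivity, only $f\ne0$ with $\mathrm{Nrd}(f)\ne0$. So the bound holds verbatim: it is trivial when $\deg_L(E)\ge sr$ and is the old bound otherwise.
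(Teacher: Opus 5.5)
Your route is the paper's. You reduce to $\ker\bar\varepsilon_\pl=\Lambda_{L,x}(E-\sum_{\ql|\pl}e_\pl\ql)$ for each evaluation place, write $\varepsilon_\pl(f)=\sum_j c_j\Phi^j$ with $c_j=f_j\prod_{l<j}\Phi^l(u_\pl)$, and for $d$ fall back on \cite[Thm.~2]{BC2024}, the bound being trivially true when $\deg_L(E)\ge sr$; that last part is fine. The gap is the step you explicitly grant: that $\sum_j\bar c_j\bar\Phi^j=0$ in $\End_{\Fq}(V_\pl)$ forces $c_j\in t_\pl\cO_{L_\pl}$ for every $j$. This is the entire content of the dimension formula, and you justify it only for unramified $\pl$. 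There $V_\pl$ is a Galois algebra for $\langle\bar\Phi\rangle$, the $\bar\Phi^j$ are left $V_\pl$-independent, and your argument is correct. Comparing with $M_{\gamma_{u_\pl}(f)}$ does not close the gap. That is the matrix of right multiplication in $D_{L_\pl,1}$ over $L_\pl$, not the matrix of $\varepsilon_\pl(f)$ acting on $\cO_{L_\pl}$, so ``$M_{\gamma_{u_\pl}(f)}\equiv 0\bmod t$'' is just a restatement of your claim. The paper's proof asserts exactly this restatement without justification, so it has the same hole.

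In fact the claim fails at every ramified evaluation place, so no filtration argument can rescue it. If $e_\pl>1$, the radical $J$ of $V_\pl$ is a nonzero proper ideal with $\bar\Phi(J)=J$, so every $\sum_j\bar c_j\bar\Phi^j$ stabilises $J$. Hence $(\bar c_j)_j\mapsto\sum_j\bar c_j\bar\Phi^j$, a map between two $\Fq$-spaces of dimension $r^2$, is not surjective and therefore not injective.

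Here is a concrete instance. Take $q$ odd, $r=2$, $K=\Fq(t)$, and $L=K(y)$ with $y^2=t$ and $\Phi(y)=-y$. Let the evaluation place be the zero $\pl_0$ of $t$, so $e_{\pl_0}=2$ and $V_{\pl_0}=\Fq[y]/(y^2)$. Let $x=1-t/a$ with $a\in\Fq^*$ a non-square. Then (\textbf{H1}) holds by \Cref{lem:H11}, since the zero of $t-a$ is inert in $L$ and a simple zero of $x$. Also (\textbf{H2}) holds at $\pl_0$ with $u=\sqrt{x}\in\Fq[[t]]$, $u\equiv1\bmod t$. For $E=2\ql_\infty$ one finds $E_0=2\ql_\infty$ and $E_1=\ql_\infty$. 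So $f=y-yT\in\Lambda_{L,x}(E)$ and $\bar\varepsilon_{\pl_0}(f)=\bar y(\mathrm{id}-\bar\Phi)=0$, although $\nu_{\ql_0}(y)=1<2$. Since $\dim_{\Fq}\Lambda_{L,x}(E)=5$ and $\dim_{\Fq}\Lambda_{L,x}(E-2\ql_0)=1$, the formula predicts $k=4$. But every codeword stabilises the line $\Fq\bar y$, so $k\le 3$ (in fact $k=3$).

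What is missing is therefore a hypothesis, not an argument: the $\pl_i$ must be unramified in $L/K$, as they are in all of the paper's applications. Under that hypothesis $e_{\pl_i}=1$, and your \'etale-independence argument completes the proof.
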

\begin{proof}
The statement on the dimension follows once we show that $\ker \alpha = \Lambda_{L,x}(E-\sum_{i=1}^s \sum_{\ql|\pl_i} e_{\pl_i}\ql).$
This in turn follows if we can show that 
$\ker \bar \varepsilon_\pl=\Lambda_{L,x}(E- \sum_{\ql|\pl} e_\pl\ql)$ for all $\pl \in \{\pl_1,\dots,\pl_s\}.$ Let $f=\sum_{i=0}^{r-1}f_i T^i \in \ker \bar \varepsilon_\pl$. Since $\ker \bar \varepsilon_\pl \subseteq \Lambda_{L,x}(E)$, we only need to consider $\nu_\ql(f_i)$ for $\ql$ lying above $\pl \in \{\pl_1,\dots,\pl_s\}$. 

Now let $t$ be a local parameter for such $\pl$. Then $\gamma_{u_\pl}(f)=\sum_{i=0}^{r-1}f_i u_\pl \cdots \Phi^{i-1}(u_\pl) T^i$ and since $f \in \ker \bar \varepsilon_\pl$,
the matrix $M_{\gamma_{u_\pl}(f)}$, see equation \eqref{eq:Mf}, is in the kernel of the reduction modulo $t$ map.
Since $\nu_{\ql}(t)=e_\pl$, this implies that $\nu_\ql(f_i)+i e_\pl \nu_\pl(x)/r=\nu_\ql(f_i u_\pl\cdots \Phi^{i-1}(u_\pl)) \ge e_\pl$ for all $i$. Hence $f_i \in L(E_i)$ for all $i$, which implies that
$f \in \Lambda_{L,x}(E- \sum_{\ql|\pl} e_\pl\ql)$.

Conversely, assume that $f=\sum_{i=0}^{r-1}f_i T^i \in \Lambda_{L,x}(E- \sum_{\ql|\pl} e_\pl\ql)$. Then by definition of the space $\Lambda_{L,x}(E- \sum_{\ql|\pl} e_\pl\ql)$, we can conclude that $\nu_\ql(f_i) \ge e_\pl-i e_\pl \nu_\pl(x)/r.$ Therefore $f_iu_\pl\cdots \Phi^{i-1}(u_\pl) \equiv 0 \pmod{t}$, which implies that $f \in \ker \bar \varepsilon_\pl$. This concludes the proof.

Finally, the bound on the minimum distance $d$ was proven in \cite[Thm.~2]{BC2024}, assuming $\deg_L(E) < rs$. Clearly the bound also holds when $rs - \deg_L(E)\leq 0$.
\end{proof}

Whenever the evaluation map $\alpha$ is injective the dimension $k$ of $\cC(x; E; \pl_1, \ldots, \pl_s)$ satisfies
$k=\dim_\Fq \Lambda_{L,x}(E)$. This can be computed exactly via the Riemann--Roch theorem proved in \cite[Thm.~1.1.10]{BCD26}. However, in the following we will content ourselves with Riemann's inequality, which allows one to prove as in \cite[Thm.~2]{BC2024} that
\begin{equation}\label{ineq:dim}
k  \geq r \deg_L(E) - r(g_L - 1)-\frac {r^2} 2 \sum_{\pl \in \PP_K} \frac{b_\pl{-}1}{b_\pl e_\pl} \deg_K(\pl),  
\end{equation}
 when $K$ and $L$ have the same constant field, and 
\begin{equation} \label{ineq:dim2}
k  \geq r\deg_L(E) - r^2(g_K - 1)-\frac {r^2} 2 \sum_{\pl \in \mathbb{P}_K} \frac{b_\pl{-}1}{b_\pl} \deg_K(\pl),  
\end{equation}
when $L$ is a constant field extension of $K$ of degree $r$. 

\begin{remark}
In \cite{BC2024}, \cref{ineq:dim,ineq:dim2} are given in a unified way. There the language of curves is used, whereas here we use the language of function fields. In the setting of curves, a constant field extension gives rise to reducible curves, while in the function field setting this effect is not visible. This means that the notion of genus becomes different in the two settings.
\end{remark}

\begin{remark}\label{rem:any_dimension}
{Note that if the degree of $E$ is increased by $1$ (by adding a place of degree one), obtaining a new divisor $\tilde E$, then the dimension of $\Lambda_{L,x}(\tilde E)$ increases by at most $r$ over $\mathbb{F}_{q}$. This follows from \cite[Lem.~4]{BC2024}. In fact, applying this lemma, one has that 
 $$ \sum_{i=0}^{r-1} \deg_L(\tilde E_i)- \sum_{i=0}^{r-1} \deg_L(E_i) =\deg_L(\tilde E)-\deg_L(E)=r.$$
 Since $\deg_L(E_i)+1 \geq \deg_L(\tilde E_i)$ for all $i$, we get that necessarily $\deg_L(\tilde E_i)=\deg_L(E_i)+1$, and hence $\dim_{\mathbb{F}_q}\ L(E_i)+1 \geq \dim_{\mathbb{F}_q} \ L(\tilde E_i)$ for all $i$. This implies that 
 $\dim{\mathbb{F}_q}(\Lambda_{L,x}(\tilde E)) \leq  \dim_{\mathbb{F}_q}(\Lambda_{L,x}(E))+r$. In case $L/K$ is a constant field extension of degree $r$, this implies that the $\Fqr$-dimension of $\Lambda_{L,x}(\tilde E)$ can be at most one larger than that of $\Lambda_{L,x}(E)$.  
 Therefore, still assuming that $L=\Fqr K$, \Cref{thm:exact_dimension} implies that for any $k_r$ between $0$ and $sr$ there exists a choice of $E$ such that the code $\cC(x; E; \pl_1, \ldots, \pl_s)$ has $\Fqr$-dimension $k_r$.}
 \end{remark}

\begin{corollary}[\textnormal{see \cite[Cor.~2]{BC2024}}] \label{cor:singdef}
Assume (\textbf{H1}) and (\textbf{H2}), and that $\deg_L(E) < s r$. Moreover, assume that $L$ has full constant field $\fq$. Writing $n$, $k$ and $d$ for the length, the dimension and the
minimum distance of $\cC(x; E; \pl_1, \ldots, \pl_s)$, respectively, 
we have
$$r d + k \geq n + r - \left(r g_L+\frac {r^2} 2 \sum_{\pl \in \mathbb{P}_K} \frac{b_\pl{-}1}{b_\pl e_\pl} \deg_K(\pl)\right).$$
\end{corollary}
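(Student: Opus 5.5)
The plan is to combine the lower bound on the dimension in \eqref{ineq:dim} with the lower bound on the minimum distance in \Cref{thm:exact_dimension}, after noting that the hypotheses make the evaluation map injective. Since $L$ has full constant field $\Fq$, the extension $L/K$ is geometric, so $K$ and $L$ share the constant field and \eqref{ineq:dim} is the relevant inequality (not \eqref{ineq:dim2}). The hypothesis $\deg_L(E)<sr$ is what allows us to use this inequality and the distance bound together, as in \cite[Thm.~2]{BC2024}.

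First step: show that $\alpha$ is injective, so that $k=\dim_{\Fq}\Lambda_{L,x}(E)$ and \eqref{ineq:dim} applies to $k$. By \Cref{thm:exact_dimension}, $\ker\alpha=\Lambda_{L,x}(E-\sum_{i}\sum_{\ql|\pl_i}e_{\pl_i}\ql)$. The divisor $D:=\sum_i\sum_{\ql|\pl_i}e_{\pl_i}\ql$ is the conorm of $\pl_1+\dots+\pl_s$, so it has degree $\sum_i[L:K]\deg_K(\pl_i)=sr$. Hence each component divisor $(E-D)_j$ has degree at most $\deg_L(E)/r+j\cdot(\text{average }\rho)-s$ in an appropriate averaged sense. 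More cleanly, \cite[Lem.~4]{BC2024} gives $\sum_j\deg_L((E-D)_j)=\deg_L(E)-sr<0$ up to the fractional corrections, which only lower the degrees.

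This averaged statement alone does not force every $L((E-D)_j)$ to vanish. So instead I would argue directly: a nonzero $f$ in the kernel gives a codeword of weight $0$. Its weight is bounded below by the distance argument of \cite[Thm.~2]{BC2024}, namely $w\ge sr-\deg_L(E)>0$, which is a contradiction. Thus injectivity follows from the same argument that proves the distance bound, and this is exactly how \cite{BC2024} proceeds.

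Second step: combine the two bounds. From \Cref{thm:exact_dimension}, $d\ge sr-\deg_L(E)$, so $rd\ge r^2s-r\deg_L(E)$. Adding \eqref{ineq:dim} gives
\[
rd+k\ \ge\ r^2 s - r(g_L-1)-\frac{r^2}{2}\sum_{\pl\in\PP_K}\frac{b_\pl-1}{b_\pl e_\pl}\deg_K(\pl).
\]
The $r\deg_L(E)$ terms cancel. Since $n=sr^2$, this is exactly $n+r-\bigl(rg_L+\frac{r^2}{2}\sum_{\pl}\frac{b_\pl-1}{b_\pl e_\pl}\deg_K(\pl)\bigr)$, which is the claimed inequality.

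The main obstacle is justifying that \eqref{ineq:dim} may be applied to $k$ itself, that is, the injectivity in the first step. I would handle it via the weight argument above rather than by degree counting on the individual $E_j$. The rest is arithmetic.
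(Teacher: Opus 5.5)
Your proposal is correct and takes the same approach as the result the paper cites (\cite[Cor.~2]{BC2024}; the paper gives no proof of its own). Injectivity of $\alpha$ comes from the per-element weight bound $w\ge sr-\deg_L(E)>0$ for nonzero $f$, which rests on (\textbf{H1}). Adding $r$ times the distance bound to \eqref{ineq:dim} cancels the $r\deg_L(E)$ terms and gives the claim since $n=sr^2$. The degree-averaging detour you abandoned is unnecessary but harmless.
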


\begin{corollary}[\textnormal{see \cite[Sec.~IV.C]{BC2024}}] \label{cor:singdefisotrivial}
Assume (\textbf{H1}) and (\textbf{H2}), and that $\deg_L(E) < s r$. Assume further that $L/K$ is a constant field extension of degree $r$.  Writing $n_r$, $k_r$ and $d$ for the $\Fqr$-length, the $\Fqr$-dimension and the
minimum distance of $\cC(x; E; \pl_1, \ldots, \pl_s)$, respectively, 
we have
\begin{equation}\label{eq:boundisotrivialcase}
d + k_r \geq n_r + 1 - \left(r (g_K-1)+1
  + \frac {r} 2 \sum_{\pl \in \PP_K} \frac{b_\pl-1}{b_\pl} \deg_K(\pl)\right).
  \end{equation}
\end{corollary}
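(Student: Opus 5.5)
We derive the bound from \Cref{thm:exact_dimension} together with the dimension inequality \eqref{ineq:dim2}. Because $\deg_L(E)<sr$, the kernel space $\Lambda_{L,x}(E-\sum_i\sum_{\ql|\pl_i} e_{\pl_i}\ql)$ vanishes. This was already observed in \cite{BC2024}; it also follows from the degree count of \cite[Lem.~4]{BC2024} applied to the divisor $E-\sum_{i=1}^s\sum_{\ql|\pl_i}e_{\pl_i}\ql$, whose $L$-degree is $\deg_L(E)-rs<0$, so that every $E_i$ has negative degree. Hence $\alpha$ is injective and $k=\dim_{\Fq}\Lambda_{L,x}(E)$.

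Since $L=\Fqr K$, every $L(E_i)$ is an $\Fqr$-space, so $\Lambda_{L,x}(E)$ is $\Fqr$-linear. Therefore $k_r=k/r$, and $n_r=sr$ because each $V_{\pl_i}$ has $\Fq$-dimension $r$. The rest is arithmetic.

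\begin{enumerate}[label=(\alph*)]
\item Divide \eqref{ineq:dim2} by $r$. This gives
\[ k_r\ge \deg_L(E)-r(g_K-1)-\frac r2\sum_{\pl}\frac{b_\pl-1}{b_\pl}\deg_K(\pl). \]
\item \Cref{thm:exact_dimension} gives $d\ge sr-\deg_L(E)$.
\item Add the two inequalities. The $\deg_L(E)$ terms cancel, leaving
\[ d+k_r\ge sr-r(g_K-1)-\frac r2\sum_{\pl}\frac{b_\pl-1}{b_\pl}\deg_K(\pl). \]
\item Rewrite $sr$ as $n_r+1-1$. This is exactly \eqref{eq:boundisotrivialcase}.
\end{enumerate}

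The main care point is the normalization of degrees in the constant field extension. Places of $L$ have degree measured over $\Fqr$, while \eqref{ineq:dim2} is stated with $\Fq$-dimensions. We should check that dividing by $r$ is legitimate and that the same $\deg_L$ appears in both the dimension bound and the distance bound. In particular, $\deg_L(\sum_{\ql|\pl_i}e_{\pl_i}\ql)=r$ for each rational $\pl_i$, and this is consistent with $m_\pl e_\pl f_\pl=r$ where the degree of $\ql$ is counted relative to the constant field of $L$. Once this is checked, the rest is immediate.
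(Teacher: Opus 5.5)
Your route matches the statement's source. The paper gives no separate proof; it cites \cite[Sec.~IV.C]{BC2024}, where the bound comes from adding \eqref{ineq:dim2}, divided by $r$, to $d\ge sr-\deg_L(E)$, with $\alpha$ injective because $\deg_L(E)<sr$. Your steps (a)--(d) and the identifications $n_r=sr$, $k_r=k/r$ are fine.

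What needs fixing is your normalization remark. It states the wrong convention and contradicts the degree count you actually use. Relative to the constant field $\Fqr$ of $L$, the unique place $\ql$ above a rational $\pl_i$ has degree $1$, since its residue field is $\Fqr$. So $\deg_L(\sum_{\ql|\pl_i}e_{\pl_i}\ql)=r$ holds only if $\deg_L(\ql):=[\kappa_\ql:\Fq]$, i.e.\ if degrees on $L$ are measured over $\Fq$. This is the convention of \cite{BC2024}, which works with the geometrically reducible $\Fq$-curve attached to $L$, whose genus over $\Fq$ is $r(g_K-1)+1$.

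The injectivity step and the distance bound both need this convention. With $\Fqr$-degrees, $\deg_L(E)<sr$ would not make $E-\sum_i\sum_{\ql|\pl_i}e_{\pl_i}\ql$ of negative degree. Moreover, $d\ge sr-\deg_L(E)$ would be false. For example, take \Cref{constr:1} with $m=r$, $q\ge 3$, $r\ge 2$: then $E=\ql_\infty$ has $\Fqr$-degree $1$, yet $t-\alpha_1\in L(E_0)$ gives a codeword of sum-rank weight $r(s-1)<sr-1$.

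With $\Fq$-degrees, \eqref{ineq:dim2} follows from $\dim_{\Fq}L(E_i)\ge \deg_L(E_i)+r(1-g_K)$ together with $\sum_i\deg_L(E_i)=r\deg_L(E)-\frac{r^2}{2}\sum_{\pl}\frac{b_\pl-1}{b_\pl}\deg_K(\pl)$. The two degree terms then cancel as you claim, and your steps go through verbatim.

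A minor point: let $E'=E-\sum_i\sum_{\ql|\pl_i}e_{\pl_i}\ql$. A formula for $\sum_i\deg_L(E'_i)$ alone does not show that each $E'_i$ has negative degree. Use instead $\lfloor n_\ql+i\rho_{\pi(\ql)}\rfloor\le n_\ql+i\rho_{\pi(\ql)}$, which gives $\deg_L(E'_i)\le \deg_L(E')+\frac{i}{r}\deg_L(\Div(x))=\deg_L(E')<0$.
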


Note that if $$r g_L + \frac {r^2} 2 \sum_{\pl \in \PP_K} \frac{b_\pl{-}1}{b_\pl e_\pl} \deg_K(\pl)=0$$
in \Cref{cor:singdef}, then the $\Fq$-linear code $\cC(x; E; \pl_1, \ldots, \pl_s)$ is MSRD.

In the same way, if
$$r (g_K-1)+1
  + \frac {r} 2 \sum_{\pl \in \PP_K} \frac{b_\pl-1}{b_\pl} \deg_K(\pl)=0$$
in \Cref{cor:singdefisotrivial}, then the $\Fqr$-linear code $\cC(x; E; \pl_1, \ldots, \pl_s)$ is MSRD.

\subsection{Some technical lemmas towards explicit examples}\label{sec:technicallemmas}

To conclude this section, we present some lemmas we shall extensively use in the next section to present explicit examples of LAG codes.

First, to construct LAG codes, one needs to satisfy hypotheses (\textbf{H1}) and (\textbf{H2}). Since neither of them is straightforward to check, as in~\cite{BC2024} we give sufficient conditions to ensure both (\textbf{H1}) and (\textbf{H2}), that are easier to apply.

\begin{lemma}[\textnormal{see \cite[Lem.~5]{BC2024}}] \label{lem:H11}
If there exists a place $\pl \in \mathbb{P}_K$ which is inert in $L$ and at which $\nu_{\pl}(x)$ is coprime to $r$, then (\textbf{H1}) holds.
\end{lemma}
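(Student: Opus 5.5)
The plan is to show that $D_{L,x}$ embeds into a division algebra. We pass to the completion at the given place $\pl$ and prove that $D_{L_\pl,x}$ is a division ring. Since $\pl$ is inert in $L$, there is a unique place $\ql$ above $\pl$, and $L_\pl = L_\ql$ is a field. The extension $L_\ql/K_\pl$ is unramified of degree $r$ with cyclic Galois group generated by (the extension of) $\Phi$. Tensoring with $K_\pl$ over $K$ gives an injective ring map $D_{L,x} = L[T;\Phi]/\langle T^r-x\rangle \hookrightarrow D_{L_\ql,x} = L_\ql[T;\Phi]/\langle T^r-x\rangle$, because $\{1,T,\dots,T^{r-1}\}$ is a basis on both sides and $L\subset L_\ql$. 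It therefore suffices to show that $D_{L_\ql,x}$ has no nonzero zero divisors.

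To do this, we use the valuation $\nu_\ql$, normalized so that $\nu_\ql = \nu_\pl$ on $K_\pl$ (here $e_\pl=1$), and define a map $w$ on $D_{L_\ql,x}$ by
\[
w\Bigl(\sum_{i=0}^{r-1} f_i T^i\Bigr) = \min_{0\le i<r}\Bigl(\nu_\ql(f_i) + \tfrac{i}{r}\nu_\pl(x)\Bigr),
\]
that is, we assign $T$ the weight $\nu_\pl(x)/r$, which is consistent with $T^r = x$. Since $\nu_\pl(x)$ is coprime to $r$, the values $\tfrac{i}{r}\nu_\pl(x)$ for $0\le i<r$ are pairwise distinct modulo $\Z$. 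The terms $\nu_\ql(f_i)\in\Z$ only shift by integers, so for $f\neq 0$ the minimum is attained at a \emph{unique} index $i$. Here we use that $\nu_\ql(L_\ql^\times)=\Z$, because the extension is unramified.

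The key step is to show that $w$ is multiplicative, $w(fg)=w(f)+w(g)$, for monomials first and then in general. For monomials, $(aT^i)(bT^j) = a\Phi^i(b)T^{i+j}$, and if $i+j\ge r$ the product becomes $a\Phi^i(b)\,x\,T^{i+j-r}$. Since $\nu_\ql\circ\Phi=\nu_\ql$ (there is only one place above $\pl$) and $T^r$ contributes $\nu_\pl(x)$, the weight is additive on monomials. For general $f,g$, we write each as a sum of monomials. Let $f$ and $g$ have unique minimal-weight terms $aT^i$ and $bT^j$. Then in $fg$ the product $(aT^i)(bT^j)$ has weight $w(f)+w(g)$, and every other product of monomials has strictly larger weight. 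Reducing the products into the basis can produce collisions in a single coefficient, but by the ultrametric inequality the coefficient containing the minimal product then has weight exactly $w(f)+w(g)$. Hence $w(fg)=w(f)+w(g)<\infty$, so $fg\neq 0$. This is the main subtle point: the uniqueness of the minimal term, which comes from coprimality, is exactly what prevents cancellation.

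Consequently $D_{L_\ql,x}$, and hence $D_{L,x}$, has no nonzero zero divisors, which is (\textbf{H1}).

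Alternatively, one could cite the classical fact that the cyclic algebra $(L_\ql/K_\pl,\Phi,x)$ has Hasse invariant $\nu_\pl(x)/r \bmod \Z$, which has order exactly $r$ when $\gcd(\nu_\pl(x),r)=1$. The algebra is then a division algebra of degree $r$, so the global algebra is a division algebra as well. The valuation argument above is a self-contained version of this fact.
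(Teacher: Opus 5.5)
The paper gives no argument for this lemma: it is simply quoted from \cite[Lem.~5]{BC2024}. Judged on its own, your proposal is a correct, self-contained proof, and it takes a different route from the one the paper's own tools suggest.

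The key points all hold. Because $\ql$ is the only place above $\pl$, one has $\nu_\ql\circ\Phi=\nu_\ql$, so $w$ is additive on monomials; the wrap-around $T^{i+j}=xT^{i+j-r}$ is harmless since $x$ is central. Coprimality makes the numbers $i\,\nu_\pl(x)/r$, $0\le i<r$, distinct modulo $\Z$. Hence the minimal monomials of $f$ and $g$ are unique, and every other product of monomials has strictly larger weight. The ultrametric inequality then forces the corresponding coefficient of $fg$ to be nonzero. The completion is in fact superfluous: the same $w$, built from $\nu_\ql$ on $L$ itself, works directly on $D_{L,x}$ and spares you extending $\Phi$ to $L_\ql$.

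The shortest route inside the paper goes through \Cref{lem:H12}. If $x^d=N_{L/K}(y)$ with $y\in L^\times$, then, $\pl$ being inert, $d\,\nu_\pl(x)=\nu_\pl(N_{L/K}(y))=r\,\nu_\ql(y)$, so $r\mid d$ by coprimality. Since $x^r=N_{L/K}(x)$, the least such $d$ is $r$, and (\textbf{H1}) follows. That is two lines, but it rests on the cyclic-algebra criterion imported from \cite{Rainer}. Your argument is elementary, and it also exhibits a valuation on the local algebra. That valuation is exactly the weighting $\nu_\ql(f_i)+i\rho_\pl$ behind the divisors $E_i$ that define $\Lambda_{L,x}(E)$.

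One small correction to your closing alternative. The invariant of $(L_\ql/K_\pl,\Phi,x)$ equals $\nu_\pl(x)/r$ only when $\Phi$ induces the Frobenius of the residue extension. For example, for $L=\Fqr K$ and an inert place $\pl$ of degree $d$, the invariant is $d\,\nu_\pl(x)/r$. Inertness forces $\gcd(d,r)=1$, and in general $\Phi$ is a power of Frobenius with exponent prime to $r$. So the invariant still has order $r$, and your conclusion is unaffected.
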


\begin{lemma}[\textnormal{see \cite[Cor.~30.7]{Rainer}}] \label{lem:H12}
If the smallest positive integer $d$ such that $x^d \in N_{L/K}(L)$ is equal to $r$, then (\textbf{H1}) holds.
\end{lemma}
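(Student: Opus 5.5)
The plan is to identify $D_{L,x}$ with a classical cyclic algebra and use the relation between period and index in the Brauer group, as in \cite[Cor.~30.7]{Rainer}. First I will check that $D_{L,x}=L[T;\Phi]/\langle T^r-x\rangle$ is the cyclic algebra $(L/K,\Phi,x)$. It has $K$-dimension $r^2$, with $L$-basis $1,T,\dots,T^{r-1}$, and relations $T a=\Phi(a)T$ and $T^r=x$. The standard argument shows it is central simple over $K$. For the centre: an element $\sum f_iT^i$ commuting with all $a\in L$ forces $f_i(\Phi^i(a)-a)=0$, hence $f_i=0$ for $0<i<r$. Commuting with $T$ then forces $f_0\in K$. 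For simplicity: a nonzero two-sided ideal contains a nonzero element of minimal length in the $T^i$. Commuting with suitable $a\in L$ shortens it unless it is a monomial $fT^i$. Such a monomial is a unit, since $T$ is invertible because $x\neq 0$.

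By Wedderburn's theorem, $D_{L,x}\cong M_m(\Delta)$ for a central division algebra $\Delta$ over $K$. The index is $\deg\Delta=r/m$. Since $D_{L,x}$ is finite-dimensional over $K$, it has no nonzero zero divisors if and only if $m=1$, that is, if and only if its index equals $r$. So it suffices to show that the index is $r$.

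Next I will use the well-known facts about cyclic algebras and relative Brauer groups:
\begin{itemize}
\item The map $K^\times/N_{L/K}(L^\times)\to \mathrm{Br}(L/K)$, $x\mapsto[(L/K,\Phi,x)]$, is an injective group homomorphism, using $(L/K,\Phi,x)\otimes_K(L/K,\Phi,y)\sim(L/K,\Phi,xy)$.
\item $(L/K,\Phi,x)$ is split if and only if $x\in N_{L/K}(L^\times)$.
\end{itemize}
Together these show that the order of $[D_{L,x}]$ in $\mathrm{Br}(K)$ (the period) equals the smallest $d>0$ with $x^d\in N_{L/K}(L)$. By hypothesis this order is $r$.

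Finally, I apply the general fact that the period divides the index, and the index divides the degree $r$. This gives $r\mid \mathrm{ind}\mid r$, so the index is $r$ and $D_{L,x}$ is a division algebra. Hence (\textbf{H1}) holds.

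The main obstacle is the middle step: the Brauer-group isomorphism with the norm residue group, and the divisibility of index by period. These are textbook results (Reiner, Maximal Orders, §30) rather than something to reprove here. I would cite them, and verify only the concrete identification of $D_{L,x}$ with $(L/K,\Phi,x)$ in the conventions of this paper.
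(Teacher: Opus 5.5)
Your proposal is correct. It is the standard argument behind the result the paper cites (Reiner, \emph{Maximal Orders}, Cor.~30.7); the paper gives no proof of its own beyond that citation. You identify $D_{L,x}$ with the cyclic algebra $(L/K,\Phi,x)$, read off its period in $\mathrm{Br}(K)$ as the order of $x$ in $K^\times/N_{L/K}(L^\times)$, and conclude from period $\mid$ index $\mid r$ that the index is $r$, so $D_{L,x}$ is a division algebra and (\textbf{H1}) holds.
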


\begin{lemma}[\textnormal{see \cite[Lem.~6]{BC2024}}] \label{lem:H2}
 If $\pl_i$ is unramified in $L$ and $\nu_{\pl_i}(x)$ is divisible by $r$ for all $i \in \{1,\dots,s\}$, then (\textbf{H2}) holds.
\end{lemma}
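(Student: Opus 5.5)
Fix $i\in\{1,\dots,s\}$ and write $\pl=\pl_i$. Since $r\mid \nu_\pl(x)$, we can write $x=t^{r m}w$ with $m=\nu_\pl(x)/r$, where $t\in K$ is a uniformizer at $\pl$ and $w$ is a unit of $\cO_{K_\pl}$. The plan is to find, for every $\ql\mid\pl$, an element $u_\ql\in L_\ql^\times$ with $\nu_\ql(u_\ql)=e_\pl\nu_\pl(x)/r$ and $\prod_{\ql\mid\pl}N_{L_\ql/K_\pl}(u_\ql)=x$. Because $\pl$ is unramified, $e_\pl=1$, so the required valuation is $m$. We then treat the power of $t$ and the unit $w$ separately.

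For the power of $t$, let $f$ be the common residue degree and $m_\pl$ the number of places above $\pl$, so that $r=f m_\pl$ and $[L_\ql:K_\pl]=f$. For each $\ql$ we have $N_{L_\ql/K_\pl}(t^m)=t^{mf}$. Taking the factor $t^m$ in every $u_\ql$ therefore contributes $\prod_\ql t^{mf}=t^{m f m_\pl}=t^{rm}$ to the product of norms. It also gives valuation $\nu_\ql(t^m)=m$, since $e_\pl=1$. This handles the power $t^{rm}$ of $x$ exactly.

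The main step is the unit $w$. Since $L_\ql/K_\pl$ is an unramified extension of local fields with finite residue fields, local class field theory gives that every unit of $K_\pl$ is a norm of a unit of $L_\ql$. The same fact can be obtained elementarily. The residue norm map $\F_{q^{f\deg\pl}}\to\F_{q^{\deg\pl}}$ is surjective, and Hensel's lemma, or successive approximation in the filtration $1+t^k\cO$ (where the trace on residue fields is surjective), lifts a preimage to an exact norm. So choose a unit $w'\in\cO_{L_{\ql_1}}^\times$ with $N_{L_{\ql_1}/K_\pl}(w')=w$. Then set $u_{\ql_1}=t^m w'$ and $u_\ql=t^m$ for all other $\ql\mid\pl$. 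All the valuations equal $m$, and the product of norms is $t^{rm}w=x$. Doing this for each $i$ yields (\textbf{H2}).

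The only nontrivial ingredient is the surjectivity of the norm on units for unramified local extensions. I would cite it as a standard fact, for example from Serre's \emph{Local Fields}, rather than reprove it.
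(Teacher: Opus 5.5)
Your argument is correct and takes essentially the paper's route. The paper defers to \cite[Lem.~6]{BC2024}, but the same mechanism is spelled out in \Cref{lem:constructionu} and the lemma after it: there, surjectivity of the local norm on units at an unramified place \cite[Chapt.~V, Prop.~3]{Serre_localfields} gives the tuple $(u_{\ql_1},1,\dots,1)$ when $\nu_\pl(x)=0$, and your extra factor $t^{\nu_\pl(x)/r}$ in every coordinate, together with $f\,m_\pl=r$, is exactly what extends this to $r\mid\nu_\pl(x)$.
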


Secondly, one needs to compute the $u_\ql$ from hypothesis (\textbf{H2}). We point out that under some hypotheses on the evaluation places, the $u_\ql$ are not needed explicitly, only their reduction modulo $\ql$. Seen in this light, the following lemma is helpful.

\begin{lemma}\label{lem:constructionu}
Let $x \in K$ and suppose that $\nu_\pl(x)=0$ for some place $\pl$ of $K$. Let $L/K$ be an extension of function fields and $\ql|\pl$ for some place $\ql$ of $L$. Further, denote by $\kappa_\pl$ (resp. $\kappa_\ql$) the residue field of $\pl$ (resp. $\ql$) and by $K_\pl$ (resp. $L_\ql$) the completion of $K$ at $\pl$ (resp. $L$ at $\ql$). Finally, assume that $\ql|\pl$ is unramified in the extension $L/K$, \ie $e_\pl=1$. 

Let $\xi_\ql \in \kappa_\ql$ satisfies $N_{\kappa_\ql/\kappa_\pl}(\xi_\ql)=x(\pl)$, then there exists $u_\ql \in L_\ql$ such that $N_{L_\ql/K_\pl}(u_\ql)=x$ and $u_\ql(\ql)=\xi_\ql$. 
\end{lemma}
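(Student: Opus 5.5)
Since $\ql|\pl$ is unramified, $L_\ql/K_\pl$ is an unramified extension of complete discretely valued fields with residue extension $\kappa_\ql/\kappa_\pl$. The plan is to construct $u_\ql$ as a lift of $\xi_\ql$ corrected by a norm-one-modulo-the-maximal-ideal unit, using surjectivity of the norm on principal units. We may assume $L_\ql\neq K_\pl$; the degenerate case is trivial, since then $\kappa_\ql=\kappa_\pl$ and $u_\ql=x$ works. Since $\nu_\pl(x)=0$, $x$ is a unit of $\cO_{K_\pl}$. Because the norm on finite fields is surjective and $x(\pl)\neq 0$, we have $\xi_\ql\neq 0$.

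First I take any lift $w\in\cO_{L_\ql}^\times$ of $\xi_\ql$. The residue map is compatible with norms in the unramified case: $N_{L_\ql/K_\pl}(w)(\pl)=N_{\kappa_\ql/\kappa_\pl}(\xi_\ql)$. Indeed, $\mathrm{Gal}(L_\ql/K_\pl)$ maps isomorphically onto $\mathrm{Gal}(\kappa_\ql/\kappa_\pl)$, so the reduction of $\prod_\sigma\sigma(w)$ is $\prod_{\bar\sigma}\bar\sigma(\xi_\ql)$. Hence $y:=x/N_{L_\ql/K_\pl}(w)$ is a principal unit of $K_\pl$, i.e.\ $y\equiv 1 \pmod{\pl}$.

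Next I show that $y=N_{L_\ql/K_\pl}(z)$ for some principal unit $z$ of $L_\ql$; then $u_\ql:=wz$ satisfies $N(u_\ql)=x$ and $u_\ql(\ql)=\xi_\ql$. This surjectivity $N(U_L^{(1)})=U_K^{(1)}$ for unramified extensions is the main obstacle. I would prove it by successive approximation. Fix a uniformizer $t$ of $K_\pl$; it is also a uniformizer of $L_\ql$ because $e_\pl=1$. For $a\in\cO_{L_\ql}$ and $n\ge1$ we have
\[
N(1+at^n)\equiv 1+\mathrm{Tr}(a)\,t^n \pmod{t^{n+1}}.
\]
The residue trace $\mathrm{Tr}_{\kappa_\ql/\kappa_\pl}$ is surjective for finite fields. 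Therefore, given $y\equiv 1+bt^n \pmod{t^{n+1}}$, I can choose $a$ with $\mathrm{Tr}(a)\equiv b$ and replace $y$ by $y/N(1+at^n)\equiv 1\pmod{t^{n+1}}$. Iterating, the product $z=\prod_n(1+a_nt^n)$ converges in the complete field $L_\ql$ and satisfies $N(z)=y$, by continuity of the norm. Its reduction is $1$, so multiplying by $z$ does not change the residue $\xi_\ql$.

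Alternatively, one could cite local class field theory: in an unramified extension every unit is a norm, i.e.\ $N(\cO_{L_\ql}^\times)=\cO_{K_\pl}^\times$ (e.g.\ Serre, \emph{Local Fields}, Ch.~V). Combined with the reduction argument above this gives the claim. I would still present the approximation argument, since it directly tracks the residue condition.
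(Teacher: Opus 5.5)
Your proof is correct, but it is organized differently from the paper's. The paper first cites Serre (\emph{Local Fields}, Ch.~V, Prop.~3) to obtain some unit $u'\in L_\ql$ with $N_{L_\ql/K_\pl}(u')=x$, with no control over its residue. Only afterwards does it repair the residue: since $N_{\kappa_\ql/\kappa_\pl}(u'(\ql))=x(\pl)=N_{\kappa_\ql/\kappa_\pl}(\xi_\ql)$, the element $\lambda=\xi_\ql/u'(\ql)\in\kappa_\ql$ has residue norm $1$. Viewing $\lambda$ as an element of $L_\ql$, one gets $N_{L_\ql/K_\pl}(\lambda)=1$, so $u_\ql=\lambda u'$ works.

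You do it the other way round. You fix the residue first, using an arbitrary lift $w$ of $\xi_\ql$, and push the whole difficulty into the principal units. You then prove $N(U^{(1)}_{L_\ql})=U^{(1)}_{K_\pl}$ by successive approximation with the surjective residue trace, which is essentially the proof behind the result the paper cites. Your route is self-contained and never needs to regard $\kappa_\ql$ as a subfield of $L_\ql$. By contrast, the paper's step $N_{L_\ql/K_\pl}(\lambda)=N_{\kappa_\ql/\kappa_\pl}(\lambda)$ tacitly uses the coefficient field (Teichm\"uller lifts, $L_\ql\cong\kappa_\ql((t))$) and its compatibility with the Galois action. The paper's route is shorter once the citation is granted.

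Two minor points. First, $\xi_\ql\neq 0$ follows simply from $N_{\kappa_\ql/\kappa_\pl}(0)=0$, not from surjectivity of the norm. Second, your fallback (cite $N(\cO_{L_\ql}^\times)=\cO_{K_\pl}^\times$ and combine it with the reduction step) is incomplete as stated. A unit $z$ with $N(z)=y$ need not be a principal unit, so $wz$ need not reduce to $\xi_\ql$. Closing that version requires either the principal-unit form of the surjectivity, which is what your approximation argument proves, or exactly the paper's extra correction by a norm-one lift of $\bar z^{-1}$.
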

\begin{proof}
Let $U_K:=\{f \in K_\pl \, : \, \nu_{\pl}(f)=0\}$ and similarly $U_L:=\{g \in L_\ql \, : \, \nu_{\ql}(g)=0\}$. Since $N_{\kappa_\ql/\kappa_\pl}(\kappa_\ql)=\kappa_\pl$ and $\ql$ is unramified, we know that $N_{L_\ql/K_\pl}(U_L)=U_K$, see \cite[Chapt.~V, Prop.~3]{Serre_localfields}. Hence, since we assumed $\nu_\pl(x)=0$, there exists $u' \in L_\ql$ such that $N_{L_\ql/K_\pl}(u')=x$. Then $N_{\kappa_\ql/\kappa_\pl}(u'(\ql))=x(\pl)=N_{\kappa_\ql/\kappa_\pl}(\xi_\ql)$, so that there exists $\lambda \in \kappa_\ql$ such that $\lambda\cdot u'(\ql)=\xi_\ql$. Note that $N_{L_\ql/K_\pl}(\lambda)=N_{\kappa_\ql/\kappa_\pl}(\lambda)=1$. Hence we can choose $u_\ql=\lambda \cdot u'$.
\end{proof}

Using the previous lemma, we can show an easy way to describe a possible choice for the $u_\ql$ from hypothesis (\textbf{H2}).

\begin{lemma}
Let $\pl$ be a rational place of $K$ and assume that $\ql_1,\dots,\ql_t$ are all places of $L$ lying above $\pl$. Moreover, assume that $e_\pl=1$. If $\nu_{\pl}(x)=0$, then there exists $u_{\ql_1} \in L_{\ql_1}$ such that $N_{L_{\ql_1}/K_\pl}(u_{\ql_1})=x$. In particular, the $t$-tuple $(u_{\ql_1},1,\dots,1)$ satisfies the conditions in  (\textbf{H2}). 
\end{lemma}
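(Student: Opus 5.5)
The plan is to derive the existence of $u_{\ql_1}$ from \Cref{lem:constructionu}, and then check the two conditions of (\textbf{H2}) for the tuple directly. First I need a residue element whose norm is $x(\pl)$. Since $\pl$ is rational, $\kappa_\pl=\Fq$. The residue field $\kappa_{\ql_1}$ is a finite extension $\mathbb{F}_{q^f}$ of $\Fq$. The norm map of finite fields $N_{\kappa_{\ql_1}/\kappa_\pl}$ is surjective onto $\kappa_\pl^\times$. Because $\nu_\pl(x)=0$, the value $x(\pl)$ lies in $\Fq^\times$, so there is some $\xi\in\kappa_{\ql_1}$ with $N_{\kappa_{\ql_1}/\kappa_\pl}(\xi)=x(\pl)$.

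With $e_\pl=1$ and $\nu_\pl(x)=0$, all hypotheses of \Cref{lem:constructionu} hold for the pair $\ql_1|\pl$. It therefore gives $u_{\ql_1}\in L_{\ql_1}$ with $N_{L_{\ql_1}/K_\pl}(u_{\ql_1})=x$, which proves the first claim. (One could also quote the surjectivity $N(U_L)=U_K$ for unramified local extensions directly, as in that proof.)

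For the second claim, I check the two conditions of (\textbf{H2}) for the tuple $(u_{\ql_1},1,\dots,1)$.

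\emph{Valuations.} (\textbf{H2}) requires $\nu_{\ql_j}(u_{\ql_j})=\frac{e_\pl}{r}\nu_\pl(x)=0$ for every $j$. For $j\ge2$ we have $u_{\ql_j}=1$, so this is trivial. For $j=1$, use that $\nu_\pl\circ N_{L_{\ql_1}/K_\pl}=f_{\ql_1}\,\nu_{\ql_1}$ on $L_{\ql_1}$, where $f_{\ql_1}$ is the residue degree. This gives $f_{\ql_1}\nu_{\ql_1}(u_{\ql_1})=\nu_\pl(x)=0$, hence $\nu_{\ql_1}(u_{\ql_1})=0$.

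\emph{Norm product.} We have $\prod_j N_{L_{\ql_j}/K_\pl}(u_{\ql_j})=x\cdot 1\cdots1=x$.

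I do not expect a real obstacle. The only points needing care are the surjectivity of the finite-field norm, which supplies $\xi$, and the valuation–norm relation in the unramified complete extension, both of which are standard.
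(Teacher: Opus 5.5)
Your proposal is correct and takes the same route as the paper, whose entire proof is the remark that the statement follows directly from \Cref{lem:constructionu}. You also supply the details the paper leaves implicit: the existence of a residue $\xi$ with the required norm, via surjectivity of the finite-field norm, and the valuation condition $\nu_{\ql_1}(u_{\ql_1})=0$, via $\nu_\pl\circ N_{L_{\ql_1}/K_\pl}=f_{\ql_1}\,\nu_{\ql_1}$.
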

\begin{proof}
This follows directly from \Cref{lem:constructionu}.
\end{proof}

\begin{remark}
From these lemmas and the way the codes are constructed, we see that it is enough to compute for each evaluation place $\pl$ and only one place $\ql$ of $L$ lying above $\pl$, a value $\xi_\ql$ from the finite field $\kappa_\ql$ satisfying $N_{\kappa_\ql/\kappa_\pl}(\xi_\ql)=x(\pl)$. Hence the lemmas simplify the code construction in two ways: in the first place one only needs to work with finite fields, not completions of the function fields and in the second place for each evaluation place $\pl$, only one value $\xi_\ql$ needs to be computed for some place $\ql$ of $L$ lying above $\pl$. For any other place $\ql$ of $L$ lying above $\pl$, one can simply choose $u_\ql=1$.  
\end{remark}

\section{Explicit sum-rank-metric AG codes}\label{sec:explicitconstructions}

\subsection{Constant field extension}

Let $q$ be a prime power, $K=\F_q(t)$ and $L=\F_{q^r}(t)$. We have $g_K=0$. The field $K$ has $q+1$ rational places, which are given by the places $\pl_{\alpha}=t-\alpha$, $\alpha \in \F_q$ and $\pl_{\infty}$. All these places are inert, thus in particular $e_{\pl_{\alpha}}=1$ for all $\alpha \in \F_q$ and $e_{\pl_{\infty}}=1$.  

\begin{construction}[AG linearized Reed--Solomon codes] \label{constr:1}
In this construction, we retrieve the linearized Reed--Solomon codes introduced in \cite{MP2018}. Note that this was already done in \cite{BC2024}, but here we provide all the details.

We fix the following:
\begin{itemize}
    \item[(i)] $x=t\in K$. Note that $\nu_{\pl_{0}}(x)=1$, $\nu_{\pl_{\infty}}(x)=-1$ and  $\nu_{\pl_i}(x)=0$ otherwise;
    \item[(ii)] the divisor $E=\frac{m}{r}\ql_{\infty}$. Note that for the place $\pl_{\infty}$ below $\ql_{\infty}$ we have $\rho_{\pl_{\infty}}=\frac{-1}{r}$ and so $b_{\pl_{\infty}}=r$;
    \item[(iii)] the $q-1$ distinct rational places $\pl_{\alpha_1},\dots,\pl_{\alpha_{q-1}} \in \mathbb{P}_K$, $\alpha_i \in \F_q^*$ for $i \in \{1,\dots,q-1\}$, which do not belong to $\pi(\supp(E))$.
\end{itemize}

For all $i \in \{1,\dots,q-1\}$, $\nu_{\pl_i}(x)$ is divisible by $r$, and so, by \Cref{lem:H2}, (\textbf{H2}) holds. Since $\pl_{\infty}$ is inert and $\nu_{\pl_{\infty}}(x)=-1$ is coprime to $r$, (\textbf{H1}) holds by \Cref{lem:H11}.

Now, we need to compute the $u_\pl$ from condition (\textbf{H2}) for all the places $\pl_{\alpha_i}$ indicated above.

Therefore, choose a rational place $\pl_\alpha$ of $K$ corresponding to $t-\alpha$, with $\alpha\in\F_q^*$. In the constant-field extension $L=\F_{q^r}(t)$, this place has a unique place $\ql_\alpha$ above it, corresponding again to the polynomial $t-\alpha$ but now viewed in $\F_{q^r}[t]$. We indicate with $x(\pl_\alpha)$ the residue of $x$ at the place $\pl_\alpha$. Since here $x=t$, this residue is $t\bmod(t-\alpha)=\alpha$. We therefore choose
$$
\gamma\in\F_{q^r}^*\qquad\text{with}\qquad N_{\F_{q^r}/\F_q}(\gamma)=\alpha.
$$
Hence, for the condition in (\textbf{H2}), using \Cref{lem:constructionu}, we can take the (length one) vector
$$
u_{\pl_\alpha}=(u_{\ql_\alpha}), \text{where }  u_{\ql_\alpha}\bmod \ql_\alpha=\gamma.
$$
 We now look at the Riemann--Roch space associated of~$D_{L,x}$ associated with $E=\frac{m}{r}\ql_\infty$:
\begin{align*}
    \Lambda_{L,x}(E)=\bigoplus_{i=0}^{r-1}L(E_i)T^i
\end{align*}
where 
\begin{align*}
    E_i:=\sum_{\ql \in \mathbb{P}_L}\lfloor n_{\ql}+i\rho_{\pi(\ql)} \rfloor \ql
\end{align*}
for $0 \le i < r$. We have $n_{\ql} = \frac{m}{r} \ne 0$ only for $\ql=\ql_{\infty}$. Moreover, $\rho_{\pi(\ql)} = 0$ unless $\pi(\ql)=\pl_{\infty}$ or $\pi(\ql)=\pl_{0}$. If $\pi(\ql)=\pl_{\infty}$ then $\rho_{\pi(\ql)}=\frac{-1}{r}$, and if $\pi(\ql)=\pl_{0}$, then $\rho_{\pi(\ql)} = \frac{1}{r}$. From this we obtain $$E_i=\left\lfloor \frac{m}{r}-\frac{i}{r}\right\rfloor \ql_{\infty}+\left\lfloor \frac{i}{r}\right\rfloor \ql_{0} = \left\lfloor \frac{m-i}{r}\right\rfloor \ql_{\infty}$$ for all $0 \le i < r$. We therefore have
\begin{align*}
    L(E_i)= \langle 1,t,\dots,t^{\lfloor (m-i)/r \rfloor}\rangle_{\F_{q^r}}
\end{align*}
for all $0 \le i < r$, and so
\begin{align*}
    \Lambda_{L,x}(E) &=
    \sum_{i=0}^{r-1}\langle 1,t,\ldots,t^{\lfloor (m-i)/r\rfloor}\rangle_{\F_{q^r}}T^i\\
    &=\langle 1,t,\dots,t^{\lfloor m/r \rfloor}\rangle_{\F_{q^r}} +\langle 1,t,\dots,t^{\lfloor (m-1)/r \rfloor}\rangle_{\F_{q^r}}T + \dots + \langle 1,t,\dots,t^{\lfloor (m-r+1)/r \rfloor}\rangle_{\F_{q^r}}T^{r-1}.
\end{align*}

Let $\Phi$ be the automorphism of $L$ determined by
$$
\Phi(\beta)=\beta^q\quad(\beta\in \F_{q^r}),\qquad \Phi(t)=t.
$$
An element $\beta\in\F_{q^r}$ is acted on by
$$
t\cdot\beta=\alpha\beta,
\qquad
T\cdot\beta=\gamma\,\Phi(\beta)=\gamma\beta^q.
$$
For $f \in \Lambda_{L,x}(E)$ we have $f=\sum_{r=1}^{r-1}f_i(t)T^i$ where $f_i(t) \in \langle 1,t,\dots, t^{\lfloor (m-i)/r\rfloor}\rangle_{\F_{q^r}}$. The remaining factor comes from iterating the action of $T$: since $T$ acts as $\beta\mapsto\gamma\beta^q$, one has
$$
T^i\cdot\beta=\gamma_i\beta^{q^i},\qquad
\gamma_i:=\gamma\,\Phi(\gamma)\cdots \Phi^{i-1}(\gamma),
$$
with $\gamma_0=1$. Hence the term $f_i(t)T^i$ acts after reduction as $\beta\mapsto f_i(\alpha)\gamma_i\beta^{q^i}$. For such an $f$, the corresponding rank-metric codeword is the $\F_q$-linear map $\overline\varepsilon_{\pl_\alpha}(f)$ described as
\begin{align*}
    \beta \longmapsto \sum_{i=0}^{r-1} f_i(\alpha)\gamma_i \beta^{q^i}, \qquad \gamma_i := \gamma \Phi(\gamma) \cdots \Phi^{i-1}(\gamma),
\end{align*}
where $\gamma_0=1$. Doing this for all the $q-1$ rational places fixed in the beginning explicitly gives the codeword in $\cC(x; E; \pl_{\alpha_1}, \ldots, \pl_{\alpha_{q-1}})$ associated with $f$. Note that this construction yields linearized Reed--Solomon codes. Furthermore, for $s=1$ and choosing $\alpha=\gamma=1$, this is precisely a Gabidulin code.

\smallskip
Finally, we turn to the parameters of the $\F_q$-linear code $\cC=\cC(x; E; \pl_{\alpha_1}, \ldots, \pl_{\alpha_{q-1}})$. 
We immediately get that the length of $\cC$ is $(q-1)r^2$. For the dimension, we can easily see from the shape of the Riemann--Roch space that we have $\dim_{\F_{q^r}}(\Lambda_{L,x}(E)) = \sum_{i=0}^{r-1} \left\lfloor \frac{m-i}{r}\right\rfloor$. We already know from~\cite{BC2024} that $\cC$ is MSRD, but for completeness, we compute the Singleton defect here, showing that indeed it is equal to $0$. Corollary \ref{cor:singdefisotrivial} implies that the Singleton defect is
\begin{align} \label{eq:singdef}
    r (g_K-1)+1
  + \frac {r} 2 \sum_{\pl \in \PP_K} \frac{b_\pl-1}{b_\pl} \deg_K(\pl).
\end{align}
Using that $b_{\pl}=r$ if $\pl=\pl_{0}$ or $\pl=\pl_{\infty}$ and $b_{\pl}=0$ otherwise,  \cref{eq:singdef} becomes
\begin{align*}
    -r+1+\frac{r}{2}\left(\frac{r-1}{r}
    +\frac{r-1}{r}\right)=0,
\end{align*}
showing that the Singleton defect is 0, and thus that $\cC$ is MSRD.

\end{construction}

One can increase the length by considering a different function $x$ in order to be able to use $\pl_0$ as an evaluation place. However, the resulting code is no longer MSRD, as shown in the following construction. 
\begin{construction} \label{constr:2}
 We let $\delta > 1$ be an integer coprime to $r$. We fix the following:
\begin{itemize}
    \item[(i)] $x={p(t)}\in K$ where $p$ is an $\Fq$-irreducible polynomials of degree $\delta$. Note that $\nu_{\pl_{\infty}}(x)=-\delta$ and  $\nu_{\pl_i}(x)=0$ for $i \in \{1,\dots,q\}$;
    \item[(ii)] the divisor $E=\frac{m}{r}\ql_{\infty}$ (note that for the place $\pl_{\infty}$ below $\ql_{\infty}$ we have $\rho_{\pl_{\infty}}=\frac{-\delta}{r}$ and so $b_{\pl_{\infty}}=r$);
    \item[(iii)] the $q$ distinct rational places $\pl_{\alpha_1},\dots,\pl_{\alpha_q} \in \mathbb{P}_K$, $\alpha_i \in \F_q$ for $i \in \{1,\dots,q\}$ (which do not belong to $\pi(\supp(E))$).
\end{itemize}
For all $i \in \{1,\dots,q\}$, $\nu_{\pl_i}(x)$ is divisible by $r$, and so, by \Cref{lem:H2}, (\textbf{H2}) holds. Since $\pl_{\infty}$ is inert and $\nu_{\pl_{\infty}}(x)=-\delta$ is coprime to $r$, (\textbf{H1}) holds by \Cref{lem:H11}.

We will denote the $\F_q$-linear code $\cC(x; E; \pl_{\alpha_1}, \ldots, \pl_{\alpha_q})$ by $\cC$. As in \Cref{constr:1}, we look at the Riemann--Roch space $\Lambda_{L,x}(E)$. We have $n_{\ql} = \frac{m}{r} \ne 0$ only for $\ql=\ql_{\infty}$. Moreover, $\rho_{\pi(\ql)} = 0$ unless $\pi(\ql)=\pl_{\infty}$ or $\pi(\ql)=\pl_{p}$ (the place corresponding to the polynomial $p(t)$). If $\pi(\ql)=\pl_{\infty}$ then $\rho_{\pi(\ql)}=\frac{-\delta}{r}$, and if $\pi(\ql)=\pl_{p}$, then $\rho_{\pi(\ql)} = \frac{1}{r}$. From this we obtain $$E_i=\left\lfloor \frac{m}{r}-\frac{i\delta}{r}\right\rfloor \ql_{\infty}+\left\lfloor \frac{i}{r}\right\rfloor \ql_{p} = \left\lfloor \frac{m-i}{r}\right\rfloor \ql_{\infty}$$ for all $0 \le i < r$. We therefore have
\begin{align*}
    L(E_i)= \langle 1,t,\dots,t^{\lfloor (m-i)/r \rfloor}\rangle_{\F_{q^r}}
\end{align*}
for all $0 \le i < r$, and so
\begin{align*}
    \Lambda_{L,x}(E) = \langle 1,t,\dots,t^{\lfloor m/r \rfloor}\rangle_{\F_{q^r}} +\langle 1,\dots,t^{\lfloor (m-1)/r \rfloor}\rangle_{\F_{q^r}}T + \dots + \langle 1,\dots,t^{\lfloor (m-r+1)/r \rfloor}\rangle_{\F_{q^r}}T^{r-1}.
\end{align*}
This gives that $\dim_{\F_{q^r}}(\Lambda_{L,x}(E)) = \sum_{i=0}^{r-1} \left\lfloor \frac{m-i}{r}\right\rfloor$. We will also compute the Singleton defect here to see how \emph{far} $\cC$ is from being an MSRD code.
We have that $b_{\pl}=r$ if $\pl=\pl_{p}$ or $\pl=\pl_{\infty}$ and $b_{\pl}=0$ otherwise. Therefore the Singleton defect of \cref{eq:singdef} becomes
\begin{align*}
    -r+1+\frac{r}{2}\left(\frac{r-1}{r}
    +\frac{r-1}{r}\delta\right)=(r-1)\left(\frac{\delta-1}{2} \right).
\end{align*}
\end{construction}

\subsection{Kummer extension}
In this subsection we will construct another family of MSRD codes. This time we do not use constant field extensions.
Let $K=\F_q(t)$ and $L=\F_{q}(y)$ with $y^r=t$ where $r | (q-1)$. We have that the genus is $g_L=0$. The Galois group of $L/K$ is generated by the automorphism $\Phi$ of $L$ defined by $\Phi(y)=\xi\cdot y$, where $\xi$ is a primitive $r$-th root of unity.

\begin{construction}\label{constr:Kummer}
We let $\zeta \in \F_q$ be a primitive element of $\F_q$. We fix the following:
\begin{itemize}
    \item[(i)] $x=\zeta\in K$;
    \item[(ii)] the divisor $E=m\ql_{\infty}$ (note that for the place $\pl_{\infty}$ below $\ql_{\infty}$ we have $\rho_{\pl_{\infty}}=\frac{0}{r}=0$ and so $b_{\pl_{\infty}}=1$);
    \item[(iii)] the $q-1$ distinct rational places $\pl_{\alpha_1},\dots,\pl_{\alpha_{q-1}} \in \mathbb{P}_K$, $\alpha_i \in \F_q^*$ for $i \in \{1,\dots,q-1\}$ (which do not belong to $\pi(\supp(E))$).
\end{itemize}

Note that for all $i \in \{1,\dots,q-1\}$, $\pl_{\alpha_i}$ is unramified and $\nu_{\pl_i}(x)=0$. In particular, $\nu_{\pl_i}(x)$ is divisible by $r$, and so, by \Cref{lem:H2}, (\textbf{H2}) holds. To check (\textbf{H1}), we need a bit more work than in the previous constructions. We will need the following lemma.

\begin{lemma} \label{lem:norm}
We have $N_{L/K}(L^*) \cap \F_q = \{\zeta^{ir} : 1 \le i \le q-1\}$.
\end{lemma}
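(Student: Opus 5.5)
We prove the two inclusions separately. For "$\supseteq$": for $c=\zeta^{ir}\in\F_q^*$ we have $N_{L/K}(\zeta^i)=(\zeta^i)^r=\zeta^{ir}$. Indeed, $\zeta^i\in\F_q\subseteq K$ is fixed by every element of $\mathrm{Gal}(L/K)$, and $[L:K]=r$. Hence $\{\zeta^{ir}\}\subseteq N_{L/K}(L^*)\cap\F_q$.

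For "$\subseteq$", we work with a totally ramified place, which forces every norm to be an $r$-th power at the level of residues. Let $f\in L^*=\F_q(y)^*$ with $N_{L/K}(f)=c\in\F_q^*$. Write
$$f=\lambda y^{k}\,u(y),$$
where $\lambda\in\F_q^*$, $k\in\Z$, and $u$ is a quotient of polynomials in $\F_q[y]$ with nonzero constant terms normalized so that $u(0)=1$.

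The norm is the product of the $r$ conjugates under $\Phi\colon y\mapsto \xi y$. This gives:
\begin{itemize}
\item $N(\lambda)=\lambda^r$;
\item $N(y^k)=\prod_{j}(\xi^j y)^k=(\xi^{r(r-1)/2})^k t^k=(\pm1)^k t^k$, since $\prod_j \xi^j=\xi^{r(r-1)/2}$, which equals $-1$ if $r$ is even and $1$ if $r$ is odd (that is, $(-1)^{r-1}$);
\item $N(u)=\prod_j u(\xi^j y)$, a rational function in $t$ whose value at $t=0$ is $\prod_j u(0)=1$.
\end{itemize}
Since $N(f)=c$ is constant, comparing valuations at the place $t=0$ forces $k=0$; in fact the totally ramified place $\ql_0$ above $\pl_0$ gives $\nu_{\pl_0}(N f)=\nu_{\ql_0}(f)=k$. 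Then $N(u)$ is a constant rational function equal to $1$, because its value at $0$ is $1$. Therefore $c=\lambda^r$.

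Finally, $\lambda=\zeta^i$ for some $i$, so $c=\zeta^{ir}$, which finishes "$\subseteq$". The set $\{\zeta^{ir}:1\le i\le q-1\}$ is exactly the subgroup of $r$-th powers in $\F_q^*$, which has index $r$ because $r\mid q-1$.

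The main obstacle is making the evaluation at $y=0$ rigorous. Here is a cleaner formulation. Pass to residues at the totally ramified place $\ql_0$. For $f$ with $\nu_{\ql_0}(f)=0$, the residue of $N(f)$ at $\pl_0$ equals $f(\ql_0)^r$, because each conjugate $\Phi^j(f)$ has the same residue at $\ql_0$ (which is fixed by $\Phi$, with trivial residue field extension). Since $N(f)=c$ is a nonzero constant, $\nu_{\ql_0}(f)=\nu_{\pl_0}(c)=0$. Hence $c=f(\ql_0)^r$ with $f(\ql_0)\in\F_q^*$, so $c$ is an $r$-th power in $\F_q^*$. This avoids the normalization bookkeeping entirely, and it is the version we would write.
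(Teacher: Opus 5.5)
Your proof is correct, and the residue formulation in your last paragraph is the cleanest version. It takes a different route from the paper.

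The paper argues globally in $\F_q[y]$. It writes $f=p(y)/\tilde p(y)$ in lowest terms, with leading coefficients $c,\tilde c$, and uses that a constant norm forces each monic irreducible factor of $p$ to agree, up to a scalar, with a $\Phi^\ell$-conjugate of some irreducible factor of $\tilde p$. Such a pair has equal norms because $(\xi^{\ell\deg q_1})^r=1$. Cancelling these pairs inductively gives $N_{L/K}(f)=(c/\tilde c)^r$.

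You instead work locally at the totally ramified rational place $\ql_0$ above the zero $\pl_0$ of $t$. The identity $\nu_{\pl_0}(N_{L/K}(f))=\nu_{\ql_0}(f)$ forces $f$ to be a unit at $\ql_0$. Since $\Phi$ fixes $\ql_0$ and acts trivially on its residue field $\F_q$, the residue of the norm is $f(\ql_0)^r$. Note that $c/\tilde c$ is exactly the residue of $f$ at the other totally ramified place $\ql_\infty$, so the paper's identity $N_{L/K}(f)=f(\ql_\infty)^r$ is a global incarnation of the same phenomenon.

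What each approach buys:
\begin{itemize}
\item Yours skips the factor matching and the induction, and never uses that $L$ is a rational function field. The same argument shows that norms lying in $\F_q^*$ are $r$-th powers for any Galois extension of degree $r$ in which some rational place of $K$ is totally ramified.
\item The paper's stays within elementary polynomial arithmetic. It also shows why a function with constant norm looks as it does: every zero of $f$ is a $\Phi$-conjugate of a pole of $f$.
\end{itemize}
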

\begin{proof}
Let $f \in L^*$ and suppose that $N_{L/K}(f) \in \F_q$. Since $L$ is a rational function field, we can write $f=p(y)/\tilde{p}(y)$, for certain polynomials $p(y),\tilde{p}(y) \in \mathbb{F}_q[y]$ such that $\gcd(p(y),\tilde{p}(y))=1$. Since $N_{L/K}(f)$ is a constant, we can conclude that $p(y)$ and $\tilde{p}(y)$ have the same degree. In particular, the place $Q_\infty$ of $L$ is neither a zero nor a pole of $f$.

Now let us write
$$p(y)=c\prod_{i}q_i(y)^{n_i} \quad \text{and} \quad \tilde{p}(y)=\tilde{c}\prod_{j}\tilde{q}_j(y)^{m_j},$$
with $c,\tilde{c} \in \mathbb{F}_q$ nonzero constants, $q_i(y),\tilde{q}_j(y)$ certain monic, irreducible polynomials and $n_i,m_j$ positive integers. The zeroes (resp. poles) of $f$ are precisely the places of $L$ corresponding to the irreducible polynomials $q_i(y)$ (resp.~$\tilde{q}_j(y)$). 

We claim that $N_{L/K}(f)=N_{L/K}(c/\tilde{c})$ and with prove this claim using induction on $d:=\deg( p(y))$. The claim is trivial if $d=0$ is zero, so let us assume that $d>0$. Since $N_{L/K}(f)$ is a constant, there exist $j$ and $\ell$ such that $\tilde{q}_j(y)$ and $ \Phi^\ell(q_1(y))$ are the same irreducible polynomial apart from multiplication by a scalar. More precisely, using that both $q_1(y)$ and $\tilde{q}_j(y)$ are assumed to be monic, we see that $\xi^{\ell \, \deg(q_1(t))}\cdot\tilde{q}_j(y)=q_1(\xi^\ell y)$
Moreover, using that $\gcd(p(y),\tilde{p}(y))=1$, we know that $\tilde{q}_j(y) \neq q_1(y)$. Hence we may assume that $1 \le \ell \le r-1$. 

Now note that
\begin{align*}
N_{L/K}(q_1(y))& =N_{L/K}(q_1(\xi^\ell y))=N_{L/K}(\xi^{\ell \, \deg(q_1(t))}\cdot\tilde{q}_j(y))\\ &= (\xi^{\ell \,  \deg(q_1(t))})^r\cdot N_{L/K}(\tilde{q}_j(y))=N_{L/K}(\tilde{q}_j(y)).
\end{align*}
Hence
$$N_{L/K}(f)=N_{L/K}\left(f \cdot \frac{\tilde{q}_j(y)}{q_1(y)}\right) \quad \text{and inductively} \quad N_{L/K}(f)=N_{L/K}\left(\frac{c}{\tilde{c}}\right),$$
thus proving the claim.
Since $N_{L/K}(\mathbb{F}_q^*)=\{a^r : a \in \mathbb{F}_q^*\}=\{\zeta^{ir} : 1 \le i \le q-1\}$, the lemma follows.
\end{proof}

Combining \Cref{lem:H12} with \Cref{lem:norm}, we see that (\textbf{H1}) holds.

We will denote the $\F_q$-linear code $\cC(x; E; \pl_{\alpha_1}, \ldots, \pl_{\alpha_{q-1}})$ by $\cC$. As in \Cref{constr:1}, we look at the Riemann--Roch space $\Lambda_{L,x}(E)$. We have $n_{\ql} = m \ne 0$ only for $\ql=\ql_{\infty}$. Moreover, $\rho_{\pi(\ql)} = 0$ for all places $\ql \in \mathbb{P}_L$. We have 
$$E_i=m\ql_{\infty}$$
for all $0 \le i < r$.
Therefore
\begin{align*}
    L(E_i)=\langle 1,y,\dots,y^m\rangle_{\F_{q}}.
\end{align*}
Thus
\begin{align*}
    \Lambda_{L,x}(E) = \langle 1,y,\dots,y^m\rangle_{\F_{q}}+\langle 1,\dots,y^m\rangle_{\F_{q}}T +\dots + \langle 1,\dots,y^m\rangle_{\F_{q}}T^{r-1}
\end{align*} 
and $\dim_{\F_{q}}(\Lambda_{L,x}(E)) = r(m+1)$. We again compute the Singleton defect from \cref{eq:singdef} for $\cC$. We have that $b_{\pl}=1$ for all places $\pl$ in $\mathbb{P}_K$. Therefore \cref{eq:singdef} becomes $0$. This shows that this code is MSRD.

\begin{remark}
The code above is a priori linear only over $\F_q$. However, the codes are modules over $\Lambda_{L,x}(0)$ as well, since $\Lambda_{L,x}(E)$ is a module over $\Lambda_{L,x}(0)$. Similarly as above, one can see that in the case we are considering here, one has
$$\Lambda_{L,x}(0)= \oplus_{i=0}^{r-1} \F_q T^i \cong \F_q[T]/\langle T^r-\zeta\rangle \cong \F_{q^r}.$$
Hence these codes are $\F_{q^r}$-linear as well.
\end{remark}
\end{construction}

We end this subsection by looking at an example of the codes coming from the Kummer extension. We only treat the case $s=1$, so the case where our resulting code is a rank-metric code achieving the Singleton bound, \ie an MRD code.

\begin{example}\label{ex:MRD_Kummer}
We fix $s=1$ and consider as only evaluation place $\pl$, the place corresponding to the polynomial $t-1$ in $K$. Since $y^r=t$, reducing modulo $t_\pl:=t-1$ gives $y^r=1$. As $r\mid(q-1)$, all $r$-th roots of unity lie in $\F_q$, so $\pl$ splits completely in $L/K$. The $r$ places above it are all rational, namely the places $\ql_j$  corresponding to $y-\xi^j$, with $0 \le j \le r-1$. Since $L_{\ql_j}=K_\pl$ for $0 \le j \le r-1$, we may choose $u_\pl=(u_{\ql_0},\dots,u_{\ql_{r-1}})=(1,\dots,1,\zeta)$.

Since all places $\ql_j$ are rational, the corresponding residue fields $\mathcal{O}_{L_{\ql_j}}/t_\pl \mathcal{O}_{L_{\ql_j}}$ are just $\F_q$. Thus we have
$$
\mathcal{O}_{L_{\pl}}/t_\pl\mathcal{O}_{L_{\pl}}=\prod_{j=0}^{r-1}\mathcal{O}_{L_{\ql_j}}/t_\pl \mathcal{O}_{L_{\ql_j}}\cong \F_q^r.
$$
We write an element of this product as
$$
\alpha=(\alpha_0,\alpha_1,\ldots,\alpha_{r-1})\in\F_q^r,
$$
where the $j$-th component corresponds to the place $\ql_j$. 
Since $\Phi(y-\xi^j)=\xi (y-\xi^{j-1})$ for all $0 \le j \le r-1$, we see that $\Phi(\ql_j)=\ql_{j-1}$ for $1 \le j \le r-1$ and $\Phi(\ql_0)=\ql_{r-1}$. In particular, using that $\Phi$ acts trivially on $\fq$, this implies that the induced action of $\Phi$ on $\mathcal{O}_{L_{\pl}}/t_\pl\mathcal{O}_{L_{\pl}}$ is 
$$
\Phi(\alpha_0,\alpha_1,\ldots,\alpha_{r-1})
=(\alpha_1,\alpha_2,\ldots,\alpha_{r-1},\alpha_0).
$$
By the choice of $u_\pl$, we see that the corresponding action of $T$, given by that of $u_\pi \Phi$, becomes 
$$
T(\alpha_0,\alpha_1,\ldots,\alpha_{r-1})
=(\alpha_1,\alpha_2,\ldots,\alpha_{r-1},\zeta\alpha_0).
$$
Choosing the standard basis of $\fq^r$, we obtain that $\alpha(T) \in \End_{\fq}(\fq^r)$ is represented by the matrix
$$A:=\begin{pmatrix}
0 & 1 & & \\
 \vdots & & \ddots & \\
0 & &  & 1\\
\zeta & 0 & \cdots & 0
\end{pmatrix}.$$
Note that this gives $T^r(\alpha_0,\ldots,\alpha_{r-1})=\zeta(\alpha_0,\ldots,\alpha_{r-1})$, as required by the relation $T^r=x$ and the equality $x=\zeta$.

Finally, $y$ acts by left multiplication on 
$\mathcal{O}_{L_{\pl}}/t_\pl\mathcal{O}_{L_{\pl}}$ and again using the standard basis of $\mathcal{O}_{L_{\pl}}/t_\pl\mathcal{O}_{L_{\pl}}=\fq^r$, we find that $\alpha(y)$ is represented by the matrix 
$$D:=\begin{pmatrix}
y(\ql_0) & & & \\
& y(\ql_1) & & \\
& & \ddots & \\
& & & y(\ql_{r-1})
\end{pmatrix}=\begin{pmatrix}
1 & & & \\
& \xi & & \\
& & \ddots & \\
& & & \xi^{r-1}
\end{pmatrix}.
$$
The MRD code $\cC(x; E; \pl)$ is therefore explicitly given as 
$$
\cC(x; E; \pl)=\left\langle D^aA^b : 0\le a\le m,\ 0\le b\le r-1\right\rangle_{\F_q}.
$$
\end{example}
\begin{remark}
The idea to use Kummer extensions to construct MRD codes was also used in~\cite{augot2014generalization}. However, there MRD codes were constructed over the rational function field $\fq(y)$ itself, not over $\fq$. These codes were then used to construct certain convolutional codes.
\end{remark}

\section{Explicit constructions using maximal function fields}\label{sec:maxfunctfields}

So far we have been constructing sum-rank-metric codes using rational function fields. In \cite[Rmk.~5]{BC2024} $\Fqr$-sum-rank metric codes were constructed from any given function field $K$ over $\fq$ by choosing $L=\Fqr K$. We start by giving a slight variation of the construction from \cite[Rmk.~5]{BC2024}, giving codes with a slightly smaller Singleton defect. In what follows, we denote by $N_1(K)$ the number of $\Fq$-rational places of $K$.

\begin{theorem}\label{thm:variationRemark4}
Let $K$ be a function field over $\fq$ and assume that $N_1(K)>2g_K$. Then there exists an $\Fqr$-linear sum-rank metric code with parameters $(n_r,k_r,d)$, where $n_r=r(N_1(K)-2g_K-2)$, $1 \le k_r \le n_r$ and 
\begin{equation*}
d + k_r \geq n_r + 1 -(2r-1)g_K.
\end{equation*}
\end{theorem}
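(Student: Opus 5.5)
We use the constant field extension $L=\Fqr K$, so that \Cref{cor:singdefisotrivial} applies. In this setting every rational place of $K$ is inert in $L$ and $e_\pl=1$ for all $\pl$. For the $\Fqr$-length $n_r=rs$ we need $s=N_1(K)-2g_K-2$ evaluation places. So we reserve a set $S$ of $2g_K+2$ rational places of $K$, take the remaining rational places as $\pl_1,\dots,\pl_s$, and require the function $x\in K^\times$ to satisfy:
\begin{itemize}
\item[(a)] $r\mid\nu_\pl(x)$ for every place $\pl\notin S$;
\item[(b)] $\nu_{\pl_0}(x)$ is coprime to $r$ for at least one $\pl_0\in S$.
\end{itemize}
Condition (a) gives $b_\pl=1$ outside $S$. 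At the evaluation places it also gives $r\mid\nu_{\pl_i}(x)$ with $\pl_i$ unramified, so (\textbf{H2}) holds by \Cref{lem:H2}. Condition (b), together with $\pl_0$ being inert, gives (\textbf{H1}) by \Cref{lem:H11}. We take $E=\frac{m}{r}\ql$ (or $E$ supported on places above $S$) with $n_\ql\in\frac1{b_\pl}\Z$, varying $m$.

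\textbf{Singleton defect.} Since $b_\pl\le r$ and $\deg\pl=1$ for $\pl\in S$, while $b_\pl=1$ outside $S$, the defect in \eqref{eq:boundisotrivialcase} is at most
\[
r(g_K-1)+1+\frac r2(2g_K+2)\frac{r-1}{r}=r(g_K-1)+1+(r-1)(g_K+1)=(2r-1)g_K .
\]
This is exactly the claimed bound $d+k_r\ge n_r+1-(2r-1)g_K$.

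\textbf{Dimension range.} \Cref{cor:singdefisotrivial} needs $\deg_L(E)<sr$. Increasing $m$ one step at a time changes $\dim_{\Fqr}\Lambda_{L,x}(E)$ by at most one, by \Cref{rem:any_dimension}. Combining this with \Cref{thm:exact_dimension} (the evaluation map is injective while $\deg_L(E)<rs$), every $k_r$ in the admissible range is attained. Near the top of the range, \Cref{thm:exact_dimension} handles the non-injective case, and there the bound is trivial, since $d\ge1$ and $k_r\le n_r$.

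\textbf{Main obstacle: constructing $x$.} We need $x$ whose divisor is $\equiv$ a combination of places of $S$ modulo $r\,\Div(K)$, with a coefficient coprime to $r$ somewhere. I would try Riemann--Roch. Fix $\pl_0\in S$. By Riemann--Roch, every divisor of degree $g_K$ is linearly equivalent to an effective one. The plan is to choose a divisor $G$ and an integer $c$ coprime to $r$, and to find $x\in L(rG - c\pl_0 + \sum_{\pl\in S\setminus\{\pl_0\}}a_\pl\pl)$, so that the residual poles and zeros lie in $S$ or appear with multiplicity divisible by $r$. The count $2g_K+2$ is exactly what should let the "error" divisor be absorbed:
\begin{itemize}
\item $g_K+1$ places of $S$ serve to force the existence of $x$, since the Riemann--Roch space has dimension $\ge1$;
\item the other $g_K+1$ places absorb the degree mismatch.
\end{itemize}
This balancing is where I expect the real work. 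If the direct argument fails, I would fall back on choosing $x$ with all zeros and poles in $S$ (an $S$-unit). By Dirichlet's $S$-unit theorem such an $x$ exists and is nonconstant as soon as $|S|\ge2$. Nonconstancy alone does not give (b), but a suitable power or product of $S$-units, possibly after enlarging the valuation vector, can be chosen with some valuation coprime to $r$.
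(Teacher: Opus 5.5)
Your reduction is sound. Given an $x$ with your properties (a) and (b), the verification of (\textbf{H2}) via \Cref{lem:H2}, of (\textbf{H1}) via \Cref{lem:H11}, the defect bound $(2r-1)g_K$ and the dimension argument are all correct.

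\textbf{The gap.} You never produce such an $x$, and that is the whole content of the theorem.
\begin{enumerate}
\item Your Riemann--Roch sketch cannot work as stated. Knowing $x\in\mathcal{L}(D)$ only bounds the poles of $x$. The effective divisor $(x)+D$ of remaining zeros is uncontrolled, so nothing forces zeros outside $S$ to have multiplicity divisible by $r$.
\item The $S$-unit fallback is unjustified. Powers only multiply valuations, so they never create a valuation coprime to $r$. Whether products help depends on the class group.
\item With a rational set $S$ fixed in advance, (a) and (b) can be impossible. Let $K$ be the function field of an elliptic curve over $\F_{31}$ with group of rational points $\cong(\Z/6\Z)^2\cong(\Z/2\Z)^2\times(\Z/3\Z)^2$; such a curve exists among the sextic twists $y^2=x^3+b$. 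Take $r=6$ and $S=\{O,P_1,P_2,P_3\}$, with $P_1,P_2,P_3$ corresponding to $(1,0;0,1)$, $(0,1;0,1)$, $(0,0;1,1)$. Suppose $(x)=v_0O+v_1P_1+v_2P_2+v_3P_3+6D$. Summing points, and using that the rational points are $6$-torsion, gives $v_1P_1+v_2P_2+v_3P_3=O$. This forces $v_1,v_2$ to be even and $v_3$, and hence $v_0$, to be divisible by $3$. So (b) fails for every admissible $x$, and you give no recipe for choosing $S$.
\end{enumerate}

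\textbf{The missing idea.} The bad places need neither be fixed in advance nor be rational. The defect sum is weighted by $\deg\pl$, so it suffices that the support of $(x)$ has total degree at most $2g_K+2$. Fix distinct rational places $\pl_1,\pl_2$. Since $\ell((2g_K+1)\pl_1-j\pl_2)=g_K+2-j$ for $j=0,1,2$, there is $x$ in the Riemann--Roch space $\mathcal{L}((2g_K+1)\pl_1)$ of $K$ with $\nu_{\pl_2}(x)=1$. Its only pole is $\pl_1$, so $\deg(x)_0\le 2g_K+1$. Then:
\begin{enumerate}
\item (\textbf{H1}) holds by \Cref{lem:H11} at the inert place $\pl_2$.
\item Take as evaluation places the rational places other than $\pl_1$ and the rational zeros of $x$. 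There are at least $N_1(K)-2g_K-2$ of them, and $x$ is a unit at each, so (\textbf{H2}) holds.
\item Every place outside the support of $(x)$ has $b_\pl=1$. Hence $\sum_\pl\frac{b_\pl-1}{b_\pl}\deg\pl\le\frac{r-1}{r}(2g_K+2)$, whether or not the zeros of $x$ are rational.
\end{enumerate}
Plugging this into \Cref{cor:singdefisotrivial} reproduces your defect computation verbatim. This is the paper's proof.
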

\begin{proof}
Let $\pl_1$ and $\pl_2$ be two distinct rational places of $K$. The theorem of Riemann--Roch guarantees the existence of a function $x \in L((2g_K+1)\pl_1)$ such that $\nu_{\pl_2}(x)=1$. Considering $L=\Fqr K$ and the place $\pl_2$, we see that \Cref{lem:H11} applies. As evaluation places, we choose all $N_1(K)-2g_K-2$ rational places of $K$ distinct from $\pl_1$ and $\pl_2$ and any rational zeroes that $x$ may have. Further, let $\ql_1$ be the place of $L$ lying above $\pl_1$ and choose $E=m\ql_1$ such that the resulting sum-rank metric code has dimension $k_r$. As observed in \Cref{rem:any_dimension}, this is always possible. Then using \Cref{cor:singdefisotrivial} and the fact that $x$ has at most $2g_K+1$ many zeroes (including $\pl_2$), we obtain that
\begin{align*}
d + k_r & \geq n_r + 1 - \left(r (g_K-1)+1
  + \frac {r} 2 \sum_{\pl \in \mathbb{P}_K} \frac{b_\pl-1}{b_\pl} \deg_K(\pl)\right)\\
  & \geq n_r + 1 -(2r-1)g_K.
\end{align*}
\end{proof}

\begin{remark}
The improvement in \Cref{thm:variationRemark4} compared to Remark 4 from \cite{BC2024} stems from the fact that the number of zeroes of the function $x$ we use is at most $2g_K+1$, while in \cite{BC2024} a different $x$ is used with up to $2g_K+r-1$ zeroes. 
\end{remark}

For specific function fields, one can sometimes find a function $x \in L(h\pl_1)$ satisfying $\nu_{\pl_2}(x)=1$ with $h$ strictly smaller than $2g_K+1$ as in the proof of \Cref{thm:variationRemark4}. This results both in longer codes and in a smaller Singleton defect. This observation is particularly useful for maximal function fields, as we show now.

\begin{proposition}\label{prop:maxfunctfields}
Let $K$ be a maximal function field over $\fqq$. Then there exist $\F_{q^{2r}}$-linear sum-rank metric codes with $\F_{q^{2r}}$-length $r(q^2-q+2qg_K)$, $\F_{q^{2r}}$-dimension $k_r$ between $1$ and $n_r$ and minimum distance $d$ satisfying
$$d+k_r \ge n_r+1-rg_K-(q-1)\frac{r-1}{2}.$$
\end{proposition}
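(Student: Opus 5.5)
The plan is to repeat the proof of \Cref{thm:variationRemark4}, with $L=\F_{q^{2r}}K$, the code alphabet $\fqq$ playing the role of $\fq$, and one change: instead of a function with poles bounded by $(2g_K+1)\pl_1$, I would use the geometry of maximal function fields to get $x$ with pole divisor bounded by $q\pl_1$, a simple zero at a second rational place $\pl_2$, and at most $q$ zeros counted with degree.

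\textbf{Constructing $x$.} Recall that $N_1(K)=q^2+1+2qg_K$. Since $K$ is maximal over $\fqq$, it carries the Frobenius linear series $|(q+1)\pl_0|$, where $\pl_0$ is any rational place; all $(q+1)\pl$ with $\pl$ rational are linearly equivalent. In the associated embedding, the osculating hyperplane at a rational place $\pl_1$ meets the curve exactly in $(q+1)\pl_1$; this is the Fuhrmann--Garcia--Torres property $q\pl+\Phi(\pl)\sim(q+1)\pl_0$, specialized to rational $\pl$.
\begin{itemize}
\item Let $\ell_0$ be the form cutting out this osculating hyperplane at $\pl_1$.
\item Let $\ell_1$ be a hyperplane through $\pl_1$ and through a second rational place $\pl_2$. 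Choose it generically so that it is not tangent at $\pl_2$, which makes its intersection multiplicity at $\pl_2$ equal to $1$.
\item Set $x=\ell_1/\ell_0$.
\end{itemize}
Then $\nu_{\pl_2}(x)=1$ and $(x)_\infty\le q\pl_1$, since $\ell_1$ vanishes at $\pl_1$. Hence $\deg(x)_0\le q$. The one point that genuinely needs justification, and which I expect to be the main obstacle, is the existence of such an $\ell_1$ with a \emph{simple} intersection at $\pl_2$. I would settle it by a dimension count on the hyperplanes through $\pl_1$ and $\pl_2$. Failing that, I would use $(q+1)$-fold linear equivalences to show directly that $q\pl_1-\pl_2$ is linearly equivalent to an effective divisor not containing $\pl_2$.

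\textbf{Hypotheses and the code.} By \Cref{lem:H11} applied at $\pl_2$, which is inert in the constant field extension and satisfies $\nu_{\pl_2}(x)=1$, hypothesis (\textbf{H1}) holds.
\begin{itemize}
\item As evaluation places I take all rational places except $\pl_1$ and the rational zeros of $x$. Every evaluation place is unramified with $\nu(x)=0$, so (\textbf{H2}) follows from \Cref{lem:H2}.
\item There are at least $q^2+1+2qg_K-1-q$ such places. If there are more, I discard some to get exactly $s=q^2-q+2qg_K$, hence $n_r=rs=r(q^2-q+2qg_K)$.
\item Choose $E=m\ql_1$ with $\ql_1\mid\pl_1$ so that the $\F_{q^{2r}}$-dimension is the desired $k_r$, which is possible by \Cref{rem:any_dimension}.
\end{itemize}

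\textbf{Bounding the defect.} Apply \Cref{cor:singdefisotrivial}. Every place has $b_\pl\le r$, so $\frac{b_\pl-1}{b_\pl}\le\frac{r-1}{r}$, and $b_\pl>1$ only at $\pl_1$ and at the zeros of $x$. The zeros have total degree at most $q$, so
\[
\sum_\pl\frac{b_\pl-1}{b_\pl}\deg_K(\pl)\le\frac{r-1}{r}(q+1).
\]
The defect is therefore at most
\[
r(g_K-1)+1+\frac{(r-1)(q+1)}{2}=rg_K+(q-1)\frac{r-1}{2},
\]
which is exactly the claimed bound.
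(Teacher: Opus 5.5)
Your overall route is the paper's: realize $K$ via $|(q+1)\pl_1|$, produce $x\in L(q\pl_1)$ with a simple zero at a second rational place $\pl_2$, then invoke \Cref{lem:H11} at $\pl_2$, \Cref{lem:H2} at the remaining rational places, and \Cref{cor:singdefisotrivial}. Your hypothesis checks, the count of at least $q^2-q+2qg_K$ evaluation places and the defect computation are all correct. You also rightly use only $(x)_\infty\le q\pl_1$. This makes the paper's insistence on pole order exactly $q$ (its Case~1) unnecessary: if $\nu_{\pl_1}(w)>-q$ one can simply take $x=w$.

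\textbf{The gap.} It is the step you flag yourself, and it is the whole content of the paper's proof. A dimension count on hyperplanes through $\pl_1$ and $\pl_2$ does not settle it by itself; it needs two inputs.
\begin{enumerate}
\item Some hyperplane must meet the curve with multiplicity exactly $1$ at $\pl_2$. This is the nonsingularity of the natural embedding \cite[Thm.~10.31]{HKT2008}, which your phrase ``not tangent'' silently presupposes. Equivalently, since $(q+1)\pl_1\sim(q+1)\pl_2$, it amounts to $q\in H(\pl_2)$.
\item The tangent line at $\pl_2$ must not pass through $\pl_1$. Otherwise every hyperplane through $\pl_1$ and $\pl_2$ contains it and meets the curve at least doubly at $\pl_2$.
\end{enumerate}
The second point follows from the linear equivalence you already quote, applied at $\pl_2$ rather than $\pl_1$. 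There is $y$ with $(y)=(q+1)(\pl_2-\pl_1)$, i.e.\ a hyperplane $\ell_y$ cutting out exactly $(q+1)\pl_2$. It vanishes to order $q+1\ge 2$ at $\pl_2$ but does not pass through $\pl_1$.

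\textbf{The fix.} Let $\ell_w$ be a hyperplane meeting the curve simply at $\pl_2$. Then there is a unique $c\in\fqq$ for which $\ell_1=\ell_w+c\,\ell_y$ vanishes at $\pl_1$. This $\ell_1$ is defined over $\fqq$ and still meets the curve simply at $\pl_2$. In function language this is exactly the paper's Case~2, $x=w+cy$.

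Your fallback via $(q+1)$-fold equivalences points the right way, but such equivalences alone cannot produce the simple zero at $\pl_2$. You still need the first input: nonsingularity at $\pl_2$, or $q\in H(\pl_2)$ from \cite[Prop.~10.9]{HKT2008}.
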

\begin{proof}
Let $\pl_1$ and $\pl_2$ be rational places of $K$. We claim that there exists $x \in K$ such that $(x)=-q\pl_1+\pl_2+D$, where $D$ is an effective divisor not containing $\pl_1$ nor $\pl_2$.

By the natural embedding theorem \cite[Thm.~10.22]{HKT2008}, $K$ can be realized as the function field of a curve embedded in a non-degenerate Hermitian variety using the linear system corresponding to the Riemann--Roch space $L((q+1)\pl_1)$. Moreover, this curve is non-singular \cite[Theorem 10.31]{HKT2008}. Now let $\pl_2$ be a second rational place of $K$. Since the curve embedded in the Hermitian variety is nonsingular, $L((q+1)\pl_1)$ contains a function $w$ such that $\nu_{\pl_2}(w)=1$. If $\nu_{\pl_1}(w)=-q$, we can choose $x=w$. Otherwise we distinguish two cases:

Case 1: $\nu_{\pl_1}(w)>-q$. Since $K$ is a maximal function field over $\fqq$, by \cite[Prop.~10.9]{HKT2008} there exists a function $z \in L((q+1)\pl_1)$ such that $\nu_{\pl_1}(z)=-q$. Adding a suitable constant to $z$ if necessary, we may also assume that $\nu_{\pl_1}(z)>0$. If $\nu_{\pl_1}(z)=1$, we can choose $x=z$, otherwise, we choose $x=w+z$.

Case 2: $\nu_{\pl_1}(w)=-(q+1)$. We claim that we can choose $x$ such that $\nu_{\pl_1}(x)=-q$. By the so-called Fundamental Equation \cite[Page xvii (ii)]{HKT2008}, there exists a function $y$ such that $(y)_K=(q+1)(\pl_2-\pl_1)$. Therefore we can choose a constant $c \in \fqq$ such that $\nu_{\pl_1}(w+cy) \ge -(q+1)$. If equality holds, we can choose $x=w+cy$, otherwise, we are back in Case 1. 

Note that $\pl_2$ is a inert place with $\nu_{\pl_2}(x)=1$, hence \textbf{(H1)} holds by \Cref{lem:H11}. Note also that \textbf{(H2)} is satisfied for all rational places different from $\pl_1$ at which $x$ does not vanish, and the latter are at most $q$. Further, $N_1(K)=q^2+1+2qg_K$, since $K$ is a maximal function field. The rest of the proof is now very similar to that of \Cref{thm:variationRemark4}.
\end{proof}

 \begin{remark}
In the proof of \Cref{prop:maxfunctfields} we showed that for any maximal function field $K$ and any two of its rational places $\pl_1$ and $\pl_2$, there exists a function $x \in K$ with pole order at $\pl_1$ equal to $q$ having a simple zero at $\pl_2$. 
Hence the pair $(q,-1)$ is in the two-point Weierstrass semigroup $H(\pl_1,\pl_2)$ defined in \cite{BT2006}.   
 \end{remark}
 
Now we consider other constructions that require a condition on $r$. For any function field $K$ with full constant field $\fq$, the group of divisor classes of degree zero of $K$ is finite. Hence, given two distinct rational places $\pl_1$ and $\pl_2$ of $K$, there exists an integer $m$ such that $m(\pl_1-\pl_2)$ is a principal divisor. The smallest positive such $m$ is just the order of the divisor of $\pl_1-\pl_2$ in the group of divisor classes of degree zero. We will denote this order by $\ord(\pl_1-\pl_2)$. It is also known as the period of the two-point Weierstrass semigroup $H(\pl_1,\pl_2)$, see \cite{BT2006}. The first construction of sum-rank metric codes uses this period.

\begin{theorem}\label{thm:construction_generalfields}
Let $K$ be a function field over $\fq$ and let $\pl_1$ and $\pl_2$ be two distinct rational places of $K$. Further let $r$ be a positive integer such that $\gcd(r,\ord(\pl_1-\pl_2))=1$. Then, there exist $\Fqr$-linear sum-rank metric codes with $\Fqr$-length $r(N_1(K)-2)$, $\Fqr$-dimension $k_r$ between $1$ and $n_r$ and minimum distance $d$ satisfying
$$d+k_r \ge n_r+1-rg_K.$$
\end{theorem}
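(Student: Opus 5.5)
Set $h:=\ord(\pl_1-\pl_2)$. By definition of $h$ there is a function $x\in K^\times$ whose divisor is
\[
(x)=h(\pl_2-\pl_1).
\]
We take $L=\Fqr K$, the constant field extension of degree $r$, with $\Phi$ the Frobenius on constants. All places of $K$ are inert in $L$ when $\gcd(\deg \pl, r)=1$, and in particular every rational place is inert.

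We first check the two hypotheses.
\begin{itemize}
\item For (\textbf{H1}) we use \Cref{lem:H11} at $\pl_2$. This place is rational, hence inert, and $\nu_{\pl_2}(x)=h$ is coprime to $r$ by assumption.
\item For (\textbf{H2}), every rational place $\pl\neq \pl_1,\pl_2$ is unramified and satisfies $\nu_\pl(x)=0$, which is divisible by $r$. So \Cref{lem:H2} applies to all $N_1(K)-2$ such places.
\end{itemize}
We take these $s=N_1(K)-2$ places as evaluation places. The $\Fqr$-length is then $n_r=rs=r(N_1(K)-2)$.

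Next we choose the divisor. Let $\ql_1$ be the unique place of $L$ above $\pl_1$, and take $E=\frac{m}{b_{\pl_1}}\ql_1$, where $b_{\pl_1}$ is the denominator of $\rho_{\pl_1}=-h/r$; since $\gcd(h,r)=1$, this denominator is $r$. By \Cref{rem:any_dimension}, adjusting $m$ changes the $\Fqr$-dimension in steps of at most one. Together with \Cref{thm:exact_dimension}, this gives every $k_r$ between $1$ and $n_r$. We need a small check that the remark applies when $E$ is increased by $\frac1r\ql_1$ instead of by a whole place. I would handle this by increasing $m$ by $r$ at a time and invoking the remark directly, or by noting that each step of $\frac1r$ raises exactly one $E_i$ by one, so the dimension grows by at most one.

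The main computation is the Singleton defect from \Cref{cor:singdefisotrivial}. Since $x$ has support only at $\pl_1$ and $\pl_2$, we have $b_\pl=1$ elsewhere and $b_{\pl_1}=b_{\pl_2}=r$, both places having degree one. The defect is therefore
\[
r(g_K-1)+1+\frac r2\cdot 2\cdot\frac{r-1}{r}=rg_K-r+1+r-1=rg_K,
\]
which gives $d+k_r\ge n_r+1-rg_K$. The main obstacle is that \Cref{cor:singdefisotrivial} requires $\deg_L(E)<sr$, and this caps the achievable dimensions. For larger $k_r$ I would fall back on \Cref{thm:exact_dimension} directly: once $\deg_L(E)\ge sr$ we have $d\ge 1$ trivially, and $k_r\le n_r$, so $d+k_r\ge n_r+1-rg_K$ reduces to checking the range where $\deg_L(E)$ is near $sr$. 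In that range, Riemann's inequality \eqref{ineq:dim2} combined with $d\ge sr-\deg_L(E)$ still gives the bound, because both estimates are linear in $\deg_L(E)$ and the dimension is capped at $n_r$.
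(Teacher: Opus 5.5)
Your argument is the paper's: the same function $x$ (up to replacing it by $1/x$), (\textbf{H1}) from \Cref{lem:H11} at the rational zero of $x$, (\textbf{H2}) from \Cref{lem:H2} at the other $N_1(K)-2$ rational places, $E$ supported at the place above the pole, and the same defect computation $r(g_K-1)+1+(r-1)=rg_K$ in \Cref{cor:singdefisotrivial}. The paper writes $E=m\ql_2$; your fractional choice $E=\frac{m}{r}\ql_1$, together with the \emph{second} of your two step-size checks, is the right way to reach every $k_r$. Since $\gcd(h,r)=1$, a step of $\frac1r\ql_1$ raises exactly one $E_i$. So $\dim_{\Fqr}\Lambda_{L,x}(E)$ grows by at most one, and hence so does the code dimension of \Cref{thm:exact_dimension}. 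Drop the first option: adding a whole $\ql_1$ raises all $r$ of the $E_i$ and can raise $k_r$ by $r$.

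Your last paragraph does not work as written. You are right that \Cref{cor:singdefisotrivial} requires $\deg_L(E)<sr$, and that large $k_r$ may need $\deg_L(E)\ge sr$ when $g_K>0$; the paper's proof invokes the corollary without addressing this. But your patch fails. First, $k_r\le n_r$ is the wrong direction. Second, \eqref{ineq:dim2} is Riemann's inequality for $\dim\Lambda_{L,x}(E)$, which equals the code dimension only when $\alpha$ is injective, that is, precisely outside this regime.

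The missing ingredient is that the codes are nested. Let $E_0$ be the largest divisor of your chain with $\deg_L(E_0)<sr$. In the normalization of \Cref{thm:exact_dimension}, each step of $\frac1r\ql_1$ raises $\deg_L$ by one, so $\deg_L(E_0)=sr-1$. There $\alpha$ is injective, and \eqref{ineq:dim2} divided by $r$ (with your value $r-1$ for the $b_\pl$-term) gives $k_r(E_0)\ge \deg_L(E_0)+1-rg_K=n_r-rg_K$. For $E\ge E_0$ one has $\Lambda_{L,x}(E_0)\subseteq\Lambda_{L,x}(E)$, hence $\cC(x;E_0;\dots)\subseteq\cC(x;E;\dots)$ and $k_r(E)\ge n_r-rg_K$. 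Combined with $d\ge 1$, this yields $d+k_r\ge n_r+1-rg_K$ for every dimension not already reached below degree $sr$.
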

\begin{proof}
Let $L=\Fqr K$. To construct the code, we choose $x\in K$ such that $(x)_K=\ord(\pl_1-\pl_2)(\pl_1-\pl_2)$, and all $N_1(K)-2$ rational places of $K$ distinct from $\pl_1$ and $\pl_2$ as evaluation places. Note that all the rational places are unramified and $\nu_{\pl_i}(x)=0$ for all $i\neq 1,2$, hence \textbf{(H2)} is satisfied by \Cref{lem:H2}. Note further that $\pl_1$ (equivalently, $\pl_2$) is inert and $\nu_{\pl_1}(x)=\ord(\pl_1-\pl_2)$, which is coprime with $r$ by hypothesis; hence \textbf{(H1)} holds by \Cref{lem:H11}. Denote by $\ql_2$ the place of $L$ lying above $\pl_2$. We set $E=m\ql_2$, with $m$ chosen such that the dimension of the resulting code is $k_r$. Then, according to \Cref{cor:singdefisotrivial}, the code has minimum distance satisfying
$$d+k_r \ge n_r-r(g_K-1)-2\frac{r-1}{2}=n_r+1-rg_K.$$
\end{proof}

\begin{corollary}
Let $K$ be a maximal function field over $\fqq$. Further, let $r$ be a positive integer such that $\gcd(r,q+1)=1$. Then there exist $\F_{q^{2r}}$-linear sum-rank metric codes with $\F_{q^{2r}}$-length $r(q^2-1+2qg_K)$, $\F_{q^{2r}}$-dimension $k_r$ between $1$ and $n_r$ and minimum distance $d$ satisfying
$$d+k_r \ge n_r+1-rg_K.$$
\end{corollary}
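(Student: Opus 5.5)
The corollary should follow by applying \Cref{thm:construction_generalfields} with the base field $\fqq$ in place of $\fq$. The two inputs to supply are the number of rational places of $K$ and a pair of rational places $\pl_1,\pl_2$ with $\ord(\pl_1-\pl_2)$ dividing $q+1$, so that $\gcd(r,q+1)=1$ forces $\gcd(r,\ord(\pl_1-\pl_2))=1$.

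First, since $K$ is maximal over $\fqq$, we have $N_1(K)=q^2+1+2qg_K$, as already used in the proof of \Cref{prop:maxfunctfields}. Hence the $\F_{q^{2r}}$-length in \Cref{thm:construction_generalfields} is
$$r(N_1(K)-2)=r(q^2-1+2qg_K),$$
which matches the statement. The maximality also gives $N_1(K)\ge 2$, so two distinct rational places $\pl_1,\pl_2$ exist.

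Second, I would invoke the Fundamental Equation for maximal curves (\cite[Page xvii (ii)]{HKT2008}), as in Case 2 of the proof of \Cref{prop:maxfunctfields}. It says that for any two rational places $\pl_1,\pl_2$ the divisor $(q+1)\pl_2-(q+1)\pl_1$ is principal. Equivalently, $(q+1)\pl_1$ is linearly equivalent to $(q+1)\pl_2$, since $(q+1)P$ is linearly equivalent to $qQ+\mathrm{Fr}(Q)$ and $\mathrm{Fr}(Q)=Q$ for rational $Q$. Therefore the class of $\pl_1-\pl_2$ is killed by $q+1$, so $\ord(\pl_1-\pl_2)$ divides $q+1$. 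With $\gcd(r,q+1)=1$ this gives $\gcd(r,\ord(\pl_1-\pl_2))=1$.

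Applying \Cref{thm:construction_generalfields} over $\fqq$ then yields $\F_{q^{2r}}$-linear codes of the stated length, with every dimension $1\le k_r\le n_r$ attainable and $d+k_r\ge n_r+1-rg_K$. The only step needing care is the second one, namely justifying $\ord(\pl_1-\pl_2)\mid(q+1)$ from the Fundamental Equation. It is a direct citation, so I do not expect a real obstacle.
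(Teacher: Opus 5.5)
Your proposal is correct and is essentially the paper's proof. The paper likewise uses the fact that $\ord(\pl_1-\pl_2)$ divides $q+1$ for any two distinct rational places of a maximal function field over $\Fqq$. It cites Eq.~10.8 of \cite{HKT2008} for this, which is the same linear equivalence you derive from the Fundamental Equation, and then applies \Cref{thm:construction_generalfields} with $N_1(K)=q^2+1+2qg_K$; you only make the length computation and that deduction explicit.
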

\begin{proof}
For a maximal function field $K$ over $\Fqq$ it is known, see \cite[Eq.~10.8]{HKT2008}, that the order of $\pl_1-\pl_2$ divides $q+1$ for any two distinct rational places $\pl_1$ and $\pl_2$ of $K$. The corollary now follows directly from  \Cref{thm:construction_generalfields}.
\end{proof}

\section{Asymptotic results}\label{sec:asymptotic}

In \cite[Thm.~4]{BC2024}, it was shown that in case $q>9$ is a square, it is possible to construct families of $\Fqr$-linearized AG codes with length tending to infinity, whose rate $R$ and relative distance $\delta$ respect
\begin{equation}\label{eq:BCbound}
R+\delta > 1 - \frac{2}{\sqrt{q}-3}+\frac{1}{r(\sqrt{q}-3)}.\end{equation}
Note that the result stated in \cite{BC2024} is slightly different, because of a small calculation error there. 
Using the aforementioned bound, it was shown in \cite{BC2024} that there exist families of linearized AG codes which, in a certain range, beat the sum-rank version of the Gilbert--Varshamov bound (see \Cref{thm:srkGV} below), over finite fields $\Fq$ with $q$ an even power of a prime, $q\geq 11^2$, and for any $r$.

In this section we refine this asymptotic result in various ways. In the first place, we revisit the proof in \cite[Thm.~4]{BC2024} and improve the bound in equation \eqref{eq:BCbound}. Afterwards, we show that by choosing certain explicit towers of function fields for the construction of the family of linearized AG codes, one can improve the bound of equation \eqref{eq:BCbound} further in case $q$ is not a prime. This, in turn, enables us to enlarge the range of $R$, $\delta$ and $q$ over which we improve on the sum-rank version of the Gilbert--Varshamov bound. 

Before starting, we recall here the sum-rank version of the Gilbert--Varshamov bound \cite[Thm.~7]{OPB21}.
\begin{theorem}[Asymptotic Gilbert--Varshamov bound]\label{thm:srkGV}
Consider the finite field extension $\Fqr/\Fq$. 
For any positive integer $s$ and any real numbers $R, \delta \in (0,1)$ with $\delta > \frac{2}{rs}$ and
\begin{align*}
R \leq \; \delta^2-\delta\left(2+\frac{2}{sr}\right)+1+\frac{2}{sr}+\frac{1}{s^2r^2}- \frac{\sum^{\delta sr-1}_{i=1} \log_q(1+\frac{s-1}{i})+\log_q(\delta sr-1)}{r^2s}-\frac{\log_q(\gamma_q)}{r^2},\end{align*}
where $\gamma_q=\prod_{i=1}^\infty (1-q^{-i})^{-1}$, there exists a sum-rank metric codes in $\End_{\Fq}(\Fqr)^s$ of rate at least $R$ and relative minimum distance at least $\delta$.
\end{theorem}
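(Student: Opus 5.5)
This is a Gilbert--Varshamov type existence result, cited from \cite[Thm.~7]{OPB21}. I would prove it by the standard greedy, or equivalently random linear code, counting argument in the sum-rank ambient space $\End_{\Fq}(\Fqr)^s\cong (\F_q^{r\times r})^s$.

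\textbf{Step 1 (reduction to a ball count).} Let $N=r^2s$ be the $\Fq$-dimension of the ambient space and $V(t)$ the number of elements of sum-rank weight at most $t$. A greedy construction works as follows: add vectors one at a time, always choosing a vector outside the union of translates of the current code by the ball of radius $d-1$. This produces an $\Fq$-linear code of dimension $k$ and minimum distance at least $d$ as long as $q^{k-1}V(d-1)<q^N$. So it suffices to show
\[
\log_q V(\delta sr-1)\le r^2s\,(1-R)-1
\]
whenever $R$ satisfies the displayed inequality, with $d=\delta sr$ and $k=\lceil Rr^2s\rceil$.

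\textbf{Step 2 (bounding the ball).} Write $V(t)=\sum_{w\le t}\sum_{t_1+\dots+t_s=w}\prod_i N_r(t_i)$, where $N_r(t)$ counts $r\times r$ matrices of rank $t$. I would use the classical bound
\[
N_r(t)\le \gamma_q\, q^{2rt-t^2},
\]
since $N_r(t)=\prod_{j<t}(q^r-q^j)^2/(q^t-q^j)$ and the denominator is at least $q^{t^2}\gamma_q^{-1}$. Summing over compositions, the exponent $\sum_i(2rt_i-t_i^2)$ is maximized for fixed $w$ by balancing the $t_i$. This gives roughly $2rw-w^2/s$, so the ball contributes $q^{r^2s(2\delta-\delta^2)}$-type terms, matching the $\delta^2-2\delta+1$ part of the bound. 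The number of compositions of $w$ into $s$ parts with $w\le t$ is $\binom{t+s-1}{s-1}$, and its logarithm is exactly $\sum_{i=1}^{t}\log_q(1+\tfrac{s-1}{i})$. With $t=\delta sr-1$ this reproduces the sum term. The factor $\log_q(\delta sr-1)$ comes from the outer sum over $w$, and the $\gamma_q$ factor, which must appear only once in the end, should be handled by a product-over-blocks estimate that is sharper than applying the bound in each block.

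\textbf{Step 3 (bookkeeping).} The terms $\frac{2}{sr}$ and $\frac{1}{s^2r^2}$ arise from substituting $t=\delta sr-1$ instead of $\delta sr$ in $2rt-t^2/s$. Expanding gives $(1-\delta+\tfrac1{sr})^2$ after normalizing by $r^2s$. The hypothesis $\delta>\tfrac{2}{rs}$ ensures $t\ge 1$ and that the summation range is nonempty. I expect the main obstacle to be Step 2: controlling the number of matrices of each rank uniformly across all $s$ blocks while getting only a single $\gamma_q$ factor. One can bound $\prod_i\gamma_q$-type factors by noting that blocks of rank $0$ contribute exactly $1$. Alternatively, since this is the statement of \cite[Thm.~7]{OPB21}, one may simply invoke it.
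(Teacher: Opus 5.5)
The paper gives no proof of this statement; it is quoted from \cite[Thm.~7]{OPB21}. Your skeleton is the standard argument behind it: the greedy linear GV argument, the bound $N_r(t)\le\gamma_q q^{2rt-t^2}$, Cauchy--Schwarz balancing $\sum_i(2rt_i-t_i^2)\le 2rw-w^2/s$, the count $\binom{w+s-1}{s-1}=\prod_{i=1}^{w}(1+\frac{s-1}{i})$, and the substitution $t=\delta sr-1$ yielding $(1-\delta+\frac{1}{sr})^2$.

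However, the step you flag as the main obstacle rests on a misreading of the normalization. Everything is divided by $r^2s$, and $\frac{\log_q(\gamma_q)}{r^2}=\frac{\log_q(\gamma_q^{\,s})}{r^2s}$. So the statement allows exactly one factor $\gamma_q$ \emph{per block}. That is precisely what the blockwise bound gives: $|S_w|\le\binom{w+s-1}{s-1}\gamma_q^{\,s}q^{2rw-w^2/s}$. No sharper estimate is needed. The one you sketch would not produce a single factor anyway: once $w\ge s$, the dominant balanced compositions have every block of positive rank. Discarding rank-$0$ blocks therefore only turns $\gamma_q^{\,s}$ into $\gamma_q^{\min(w,s)}$. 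Drop that part of the plan rather than trying to prove it.

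There are also a few smaller bookkeeping slips.

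In Step~1, the extra $-1$ demands more than the hypothesis gives. With $k=\lceil Rr^2s\rceil$ you have $k-1<Rr^2s$. So $\log_q V(d-1)\le r^2s(1-R)$ already implies $q^{k-1}V(d-1)<q^{r^2s}$.

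The outer sum over $w=0,\dots,t$ has $t+1=\delta sr$ terms, which naively gives $\log_q(\delta sr)$. To get $\log_q(\delta sr-1)$, bound each term with $1\le w\le t$ by the $w=t$ term. This uses that both $\binom{w+s-1}{s-1}$ and $2rw-w^2/s$ increase for $w\le rs$, which holds since $t<rs$. Then absorb $|S_0|=1$ into the rest. This is possible once $t\ge 2$, i.e.\ exactly under the hypothesis $\delta>\frac{2}{rs}$. So that hypothesis does more than make the summation range nonempty.

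Finally, $\binom{t+s-1}{s-1}$ counts compositions of exactly $t$, not of all $w\le t$. What you actually use is $\binom{w+s-1}{s-1}\le\binom{t+s-1}{s-1}$ for each $w$.
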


\subsection{First improvements}\label{subsec:firstimprovements}

In this section, we show how to improve the bound given in equation \eqref{eq:BCbound} from \cite{BC2024} in general as well as extend it to nonprime, nonsquare finite fields. In other words, we assume that the cardinality of the finite field $\Fq$ is a proper prime power, say $q=p^m$ with $m>1$. As usual, we denote Ihara's constant by $A(q)$. The Drinfeld--Vladut bound states that $A(q) \le \sqrt{q}-1$. In \cite{BBGS15}, it was shown that
\begin{equation}\label{eq:non-square bound}
A(p^m)\ge 2\left(\frac{1}{p^{\lfloor m/2 \rfloor}-1}+\frac{1}{p^{\lceil m/2 \rceil}-1}\right)^{-1}.
\end{equation}
For convenience, we denote the right-hand side of this inequality by $H(p^m)$. Note that if $m$ is even, this bound simplifies to $A(q) \ge \sqrt{q}-1$. Note also that this bound together with the Drinfeld--Vladut bound implies the well-known result that $A(q)=\sqrt{q}-1$ in case $q$ is a square.

For the Hamming metric, \cref{eq:non-square bound} has been used to construct codes better than the Gilbert--Varshamov bound whenever $q$ is not a prime, $q \ge 49$ and $q \neq 125$ \cite{BBGS14}. In this section we will investigate what can be done in the sum-rank metric case. 

First of all, following the same approach as in the proof in \cite[Thm.~4]{BC2024}, one immediately obtains the following result. For the convenience of the reader and because we will revisit the proof several times later on, we give the main ingredients of the proof. For full details, see \cite{BC2024}.

\begin{theorem}\label{thm:asymptotic2}
Let $q$ be a prime power and suppose that $A(q)>2$. For all $R$ and $\delta$ in the interval $(0,1)$ such that 
\[R+\delta < 1- \frac{2r-1}{r}\frac{1}{A(q)-2},\]
there exists an asymptotic family of $\mathbb{F}_{q^r}$-linearized AG codes with rate at least $R$ and relative minimum distance at least $\delta$.
\end{theorem}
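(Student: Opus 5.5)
We take a sequence of function fields $K_j$ over $\Fq$ with $g_j:=g_{K_j}\to\infty$ and $N_1(K_j)/g_j\to A(q)$; such a sequence exists by the definition of Ihara's constant. On each $K_j$ we build a code with $L_j=\Fqr K_j$ and apply \Cref{thm:variationRemark4}. For each $j$ this gives, for every $k_r$ between $1$ and $n_r$, an $\Fqr$-linear sum-rank metric code. Its parameters are
\[
n_r=r(N_1(K_j)-2g_j-2),\qquad d+k_r\ge n_r+1-(2r-1)g_j .
\]
Since $A(q)>2$, we have $N_1(K_j)>2g_j$ for $j$ large, so the hypothesis of \Cref{thm:variationRemark4} holds. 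These codes are linearized AG codes $\cC(x;E;\pl_1,\dots,\pl_s)$ with $x$ and $E$ as in that proof, and (\textbf{H1}) and (\textbf{H2}) are guaranteed there.

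Next we normalise. The rate and relative distance are $R_j=k_r/n_r$ and $\delta_j=d/n_r$; both are invariant under passing from $\Fq$- to $\Fqr$-parameters, since the $\Fq$-length is $r n_r$ and the $\Fq$-dimension is $r k_r$. Dividing the inequality above by $n_r$ gives
\[
R_j+\delta_j\ \ge\ 1+\frac1{n_r}-\frac{(2r-1)g_j}{r(N_1(K_j)-2g_j-2)}.
\]
As $j\to\infty$ the last term tends to
\[
\frac{2r-1}{r}\cdot\frac{1}{A(q)-2}.
\]
Hence for any fixed $R,\delta\in(0,1)$ with $R+\delta<1-\frac{2r-1}{r}\frac1{A(q)-2}$ there is a margin $\epsilon>0$, and for $j$ large we have $R_j+\delta_j\ge R+\delta+\epsilon$ whatever $k_r$ we pick.

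Finally we choose $k_r$ so that both $R_j\ge R$ and $\delta_j\ge\delta$ hold. We take $k_r=\lceil Rn_r\rceil$, which lies between $1$ and $n_r$ for $j$ large. Then $R_j\ge R$ directly. Moreover $R_j\le R+1/n_r$, so
\[
\delta_j\ \ge\ R+\delta+\epsilon-R-1/n_r\ \ge\ \delta
\]
once $n_r>1/\epsilon$. The lengths $n_r$ tend to infinity, so this produces the required asymptotic family.

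The main obstacle is the first step: we need a sequence realising $A(q)$ in the limit, rather than only $N_1(K_j)/g_j\ge c$ for some $c<A(q)$. If only a liminf statement is available, we pass to a subsequence with ratio tending to $A(q)$. The required bound is a strict inequality, so any sequence with ratio at least $A(q)-\eta$, for $\eta$ small depending on the gap, also suffices. The rest is routine limit bookkeeping.
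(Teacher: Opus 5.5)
Your proof is correct and follows the same route as the paper. You take function fields $K_j$ with $N_1(K_j)/g_{K_j}\to A(q)$, pass to the constant field extension $\Fqr K_j$, bound the Singleton defect via \Cref{cor:singdefisotrivial}, and divide by $n_r$. The paper instead uses the function from \cite[Rmk.~5]{BC2024}, with pole order $h\in[2g,2g+r-1]$ and $\gcd(h,r)=1$, giving length $r(N_1-2g-r)$. You use the simple-zero choice of $x$ from \Cref{thm:variationRemark4}, with length $r(N_1-2g-2)$ and defect $(2r-1)g$; this changes only lower-order terms and gives the same limit. Your explicit choice $k_r=\lceil Rn_r\rceil$ also spells out bookkeeping that the paper leaves implicit.
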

\begin{proof}
As in the proof of Theorem 4 from \cite{BC2024}, the main ingredient is to use a good family of function fields and then to consider the isotrivial case of the construction from \cite{BC2024} for each function field in the tower. More precisely, let $\mathcal K=(K_1,K_2,\dots)$ be a family of function fields over $\fq$ with limit $A(q)$, meaning that $\lim_{i \to \infty} N_1(K_i)/g_{K_i} = A(q)$. Here we denoted by $g_{K_i}$ the genus of $K_i$ and by $N_1(K_i)$ the number of rational places of $K_i$. Each function field in this tower has full constant field $\Fq$. Let $L_i=\Fqr K_i$, \ie the function field obtained from $K_i$ by extending the constant field to $\Fqr$. Since $\Fq$ is the full constant field of $K_i$, we see that $L_i/K_i$ is an unramified, cyclic Galois extension of degree $r$. 

Now let $r \ge 1$ be an integer and let $\pl$ be a rational place of $K_i$. As explained in \cite{BC2024}, there exists a function ${x} \in K_i$ with pole divisor $({x})_{\infty}=h\pl$, where $2g_{K_i} \le h \le 2g_{K_i}+r-1$ and $\gcd(r,h)=1$. The function ${x}$ has $h$ zeroes counted with multiplicity, since its pole divisor has degree $h$. These zeroes do not need to be rational, but might be, reducing the potential length of the code, since zeroes (and poles) of $x$ cannot be used as evaluation places. Using the construction from \cite[Rmk.~5]{BC2024} gives for any given $i \ge 1$, an $\Fqr$-linear sum-rank metric codes with parameters $(n_r,k_r,d)$ with $\Fqr$-length $n_r=r(N_1(K_i)-2g_{K_i}-r)$, $\Fqr$-dimension $0 \le k_r \le n_r$ and minimum distance $d$. By \cref{eq:boundisotrivialcase}, $d$ satisfies: 
\begin{align*}
d + k_r 
  &\ge n_r  + 1 - \left(r (g_{K_i}-1)+1
  + \frac {r-1} 2 (2g_{K_i}+r)\right)\\
&= n_r -(2r-1)g_{K_i}-\frac{(r-3)r}{2}.
\end{align*}
Here we used the fact that $b_\pl \le r$ and that $b_\pl=1$ whenever $\nu_\pl(x)=0$. Now dividing by $n_r$ and letting $i$ tend to infinity, the result follows provided $n_r=r(N_1(K_i)-2g_{K_i}-r)$ also tends to infinity. However, this is automatic if $A(q) >2$.
\end{proof}

\begin{remark}
If $q$ is a square, then $A(q)=\sqrt{q}-1$ and  \Cref{thm:asymptotic2} simplifies to \cref{eq:BCbound}. If $q$ is not a square, the precise value of $A(q)$ is not known, but if $q$ is not a square and not a prime number, the bound $A(q) \ge H(q)$ from \cref{eq:non-square bound} can be used to construct asymptotically good families of codes whenever 
\[R+\delta < 1- \frac{2r-1}{r}\frac{1}{H(q)-2}.\]
Note that $H(q)>2$ for $q>9$. 
\end{remark}

Next we consider an improvement of \Cref{thm:asymptotic2}.
\begin{theorem}\label{thm:asymptotic3}
Let $q$ be a prime power. For all $R$ and $\delta$ in the interval $(0,1)$ such that 
\[R+\delta < 1- \frac{2r-1}{r} \cdot \frac{q}{(q-1)A(q)},\]
there exists an asymptotic family of $\mathbb{F}_{q^r}$-linearized AG codes with rate at least $R$ and relative minimum distance at least $\delta$.
\end{theorem}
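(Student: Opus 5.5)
The plan is to rerun the proof of \Cref{thm:asymptotic2}, but to choose the function $x$ more carefully so that it kills far fewer rational places. In \Cref{thm:asymptotic2} the loss of roughly $2g_{K_i}$ evaluation places, namely the potential rational zeroes of $x$, produces the term $A(q)-2$ in the denominator. If we can find $x$ with pole divisor $h\pl$ (where $2g_{K_i}\le h\le 2g_{K_i}+r-1$ and $\gcd(h,r)=1$) having at most about $N_1(K_i)/q$ rational zeroes, then the $\Fqr$-length becomes roughly $rN_1(K_i)\frac{q-1}{q}$. The Singleton defect from \cref{eq:boundisotrivialcase} should stay essentially $(2r-1)g_{K_i}$, and dividing gives exactly $\frac{2r-1}{r}\cdot\frac{q}{(q-1)A(q)}$.

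Concretely, fix a good family $(K_i)$ with $N_1(K_i)/g_{K_i}\to A(q)$ and set $L_i=\Fqr K_i$. We may assume $A(q)>0$, since otherwise the statement is vacuous. Fix a rational place $\pl$ of $K_i$ and choose $h$ with $2g_{K_i}\le h\le 2g_{K_i}+r-1$ and $\gcd(h,r)=1$. Then $\ell:=\dim L(h\pl)=h+1-g_{K_i}$ and $\dim L((h-1)\pl)=\ell-1$.

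The key step is an averaging argument over the set $S=L(h\pl)\setminus L((h-1)\pl)$, which has $q^\ell-q^{\ell-1}$ elements. For each rational place $\ql\neq\pl$, since $h-1\ge 2g_{K_i}-1$, Riemann--Roch gives $\dim L(h\pl-\ql)=\ell-1$ and $\dim L((h-1)\pl-\ql)=\ell-2$. Hence exactly $q^{\ell-1}-q^{\ell-2}$ elements of $S$ vanish at $\ql$. Summing over the $N_1(K_i)-1$ such places, the average number of rational zeroes of an element of $S$ is $(N_1(K_i)-1)/q$. So some $x\in S$ has pole divisor exactly $h\pl$ and at most $(N_1(K_i)-1)/q$ rational zeroes.

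For this $x$ the hypotheses are checked as before:
\begin{itemize}
\item $\pl$ is rational, hence inert in the constant field extension $L_i/K_i$, and $\nu_\pl(x)=-h$ is coprime to $r$, so \textbf{(H1)} holds by \Cref{lem:H11}.
\item Every rational place other than $\pl$ and the rational zeroes of $x$ satisfies \textbf{(H2)} by \Cref{lem:H2}.
\end{itemize}
We take all these places as evaluation places, so $n_r\ge r\bigl(N_1(K_i)-1-(N_1(K_i)-1)/q\bigr)$. We take $E=m\ql_\pl$ with $m$ chosen to give any dimension $k_r$, which is possible by \Cref{rem:any_dimension}.

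In \cref{eq:boundisotrivialcase} only $\pl$ and the zeroes of $x$ contribute to the sum, and each contributes at most $\frac{r-1}{r}\deg_K$. The total degree of the zeroes is $h$. The defect is therefore at most $r(g_{K_i}-1)+1+\frac{r-1}{2}(h+1)\le(2r-1)g_{K_i}+O(r^2)$. Dividing by $n_r$ and letting $i\to\infty$ (note $n_r\to\infty$ since $A(q)>0$) gives the claimed bound on $R+\delta$.

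The only delicate point is the counting step. We must make sure the dimension drops are exact, which is where $h-1\ge 2g_{K_i}-1$ is needed. We also must require the exact pole order $h$ (not merely $x\in L(h\pl)$), since that is what secures \textbf{(H1)}. Restricting the average to $S$ handles both requirements without changing the ratio $1/q$.
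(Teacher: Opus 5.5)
Your proof is correct and follows the paper's strategy. You rerun \Cref{thm:asymptotic2}, but choose $x$ with pole divisor exactly $h\pl$ by an averaging argument, so that $x$ has at most about $N_1(K_i)/q$ rational zeroes. The checks of (H1), (H2), the length and the defect bound $(2r-1)g_{K_i}+O(r^2)$ are then the same as in the paper.

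The only real difference is the averaging set. The paper takes the function $\tilde x$ from \Cref{thm:asymptotic2} and looks only at its $q$ translates $\tilde x-\alpha$, $\alpha\in\Fq$. All of them have pole divisor $h\pl$. Each rational place $\ql\neq\pl$ is a zero of exactly one of them, namely the one with $\alpha=\tilde x(\ql)\in\Fq$. Pigeonhole then finishes the argument with no dimension counting at all. Your set $S$ is a disjoint union of cosets $\tilde x+\Fq$, because adding constants preserves $S$. So your average is just the paper's argument summed over cosets: it gives the same bound, but it costs a Riemann--Roch computation.

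In that computation one justification is off. The divisor $(h-1)\pl-\ql$ has degree $h-2$, which equals $2g_{K_i}-2$ when $h=2g_{K_i}$. So ``$h-1\ge 2g_{K_i}-1$'' does not yield $\dim L((h-1)\pl-\ql)=\ell-2$ via Riemann--Roch. The equality is nevertheless true: $L((h-1)\pl-\ql)$ has codimension at most one in $L((h-1)\pl)$ and does not contain the constant $1$. In any case you only need the inequality $\dim L((h-1)\pl-\ql)\ge\ell-2$, which bounds the number of elements of $S$ vanishing at $\ql$ from above by $q^{\ell-1}-q^{\ell-2}$.
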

\begin{proof}
To prove this theorem, we modify the proof of \Cref{thm:asymptotic2} by choosing the function $x$ differently. Denote by $\tilde{x}$ the function denoted by $x$ in the proof. For any $\alpha \in \Fq$, the function $\tilde{x}-\alpha$ has the same pole divisor $(\tilde{x}-\alpha)_{\infty}=h\pl$. On the other hand, by the pigeonhole principle, there exists a choice of $\alpha$ for which at most $\lfloor N_1(K_i)/q \rfloor$ many zeroes of $\tilde{x}-\alpha$ are rational places. Now let $x:=\tilde{x}-\alpha$ for such a value of $\alpha$. Now we can proceed with the proof exactly as in the proof of \Cref{thm:asymptotic2}, except that this time the $\Fqr$-length of the code satisfies 
$$n_r \ge r\left(N_1(K_i)-\frac{N_1(K_i)}{q}\right)=rN_1(K_i)\frac{q-1}{q}.$$ 
In particular, the length of the codes will tend to infinity as $i$ tends to infinity, without any restriction on $q$.
\end{proof}

\begin{remark}
Note that \Cref{thm:asymptotic3} indeed gives a better result than \Cref{thm:asymptotic2}. This amounts to showing that $A(q) < 2q$, which follows directly from the Drinfeld--Vladut bound.
\end{remark}

\begin{remark}
Using \Cref{thm:asymptotic3} and the bound $A(q) \ge H(q)$ for non-prime $q$, one can conclude that one can construct families of sum-rank metric codes with rate tending to $R$ and relative minimum distance tending to $\delta$ whenever
$$R+\delta < 1- \frac{2r-1}{r} \cdot \frac{q}{(q-1)H(q)}.$$

In general, it is known that $A(q)>0$. If a tower or family of function fields over $\fq$ has limit $\lambda>0$, one can similarly as above construct families of sum-rank metric codes whenever
$$R+\delta < 1- \frac{2r-1}{r} \cdot \frac{q}{(q-1)\lambda}.$$
Therefore, families of asymptotically good sum-rank metric codes can be constructed for prime values of $q$ too. However, even using the best known lower bounds for $A(q)$ in case $q$ is a prime, these families of codes are not better than the Gilbert--Varshamov bound.
\end{remark}

\subsection{Further improvements}
The results in the previous subsection were very general and only used families of good function fields. In fact, only the limit of such families was used. Using the explicit description of some optimal and good towers of function fields, further improvements are possible in case $q$ is not a prime. In this subsection, we consider two such improvements: one where $q$ is a square, and one where $q$ is not a square.

\subsubsection{The case $q$ is a square} 
Let us first assume that the cardinality of the finite field is a square, say $q=\ell^2$, with $\ell$ a power of some prime. It is well known that there exist towers of function fields attaining the Drinfeld--Vladut bound. We will use the recursively defined tower $\mathcal F=(F_1,F_2,\dots)$ from \cite{GS96}, where
\[
F_i=\begin{cases}
\mathbb{F}_{q}(x_1) & \text{if $i=1$}\\
F_{i-1}(x_i) & \text{if $i>1$}
\end{cases},
\]
where $$x_i^\ell+x_i=\frac{x_{i-1}^\ell}{x_{i-1}^{\ell-1}+1} \quad \text{for $i>1$}.$$
Denoting by $g_{F_i}$ (resp. $N_1(F_i)$) the genus (resp. number of rational places) of $F_i$, it is known, \cite[Rmk.~3.8]{GS96} and \cite[Lem.~3.9]{GS96}. that 
\begin{equation}\label{eq:genusformula}
g_{F_i}=(\ell^{\lceil i/2 \rceil}-1)(\ell^{\lfloor i/2 \rfloor}-1) \quad \text{and} \quad N_1(F_i) \ge (\ell-1)\ell^i.
\end{equation}
Therefore
$$\lim_{i \to \infty} \frac{N_1(F_i)}{g_{F_i}} \ge \ell-1,$$
thus providing an example of an asymptotically optimal tower (\ie a tower of function fields attaining the Drinfeld--Vladut bound). In fact using the Drinfeld--Vladut bound, we see that $\lim_{i \to \infty} N_1(F_i)/g_{F_i} = \ell-1$. In particular, this implies that $\lim_{i \to \infty} (\ell-1)\ell^{i}/N_1(F_i)=1.$

More precisely, it is shown in \cite[Lem.~3.9]{GS96} that if $\alpha^\ell+\alpha \neq 0$ and $\alpha \in \fq$, then the zero of $x_1-\alpha$ splits completely in the extension $F_i/F_1$ for any $i \ge 1$. This explains why $N_1(F_i) \ge (\ell-1)\ell^i$. If a place of $F_1$ ramifies in the extension $F_i/F_1$, it is either the pole of $x_1$ or a zero of $x_1-\alpha$ for some $\alpha \in \fq$ satisfying $\alpha^\ell+\alpha=0$ \cite[Lem.~3.3]{GS96}. In fact the pole of $x_1$ 
is totally ramified in the extension $F_i/F_1$ for any $i \ge 1$ \cite[Lem.~3.3]{GS96}. This identifies a unique rational place of $F_i$, which we denote by $\pl_\infty^{(i)}$. Similarly, the zero of $x_i$ is totally ramified in the extension $F_i/\fqq(x_i)$, giving rise to a unique rational place of $F_i$, which we will denote by $\pl_0^{(i)}$. It is also a zero of $x_1$ and it holds that $\nu_{\pl_0^{(i)}}(x_1)=1$, see the discussion after Lemma 3.3 in \cite{GS96}.

\begin{theorem}\label{thm:asymptotic}
Let $q$ be a square and $r \ge 1$ an integer. For all $R$ and $\delta$ in the interval $(0,1)$ such that 
\[R+\delta < 1-\frac{1}{\sqrt{q}-1}
,\]
there exists an asymptotic family of $\mathbb{F}_{q^r}$-linearized AG codes with rate at least $R$ and relative minimum distance at least $\delta$.
\end{theorem}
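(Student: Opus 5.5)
The plan is to run the isotrivial construction of \Cref{thm:construction_generalfields} along the Garcia--Stichtenoth tower $\mathcal F=(F_1,F_2,\dots)$. The one-point choice of $x$ and the pigeonhole trick of \Cref{thm:asymptotic2} and \Cref{thm:asymptotic3} are replaced by a function $x$ whose divisor is supported only on the two totally ramified rational places $\pl_\infty^{(i)}$ and $\pl_0^{(i)}$. For each $i$, set $K=F_i$ and $L=\Fqr F_i$. The cleanest candidate for $x$ is a function whose divisor is a multiple of $\pl_0^{(i)}-\pl_\infty^{(i)}$.

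The first choice to try is $x=x_1$. This works only if $x_1$ has no zeros other than $\pl_0^{(i)}$. In general the zero of $x_1$ in $F_1$ is not totally ramified in the whole tower: $\alpha=0$ satisfies $\alpha^\ell+\alpha=0$, so it is among the ramified places, but it may split partially. So instead I would use the fact that $\pl_0^{(i)}$ is the unique zero of $x_i$ and is totally ramified over $\Fq(x_i)$. I would also try to show that the pole of $x_i$ is $\pl_\infty^{(i)}$ only. Then $(x_i)=[F_i:\Fq(x_i)](\pl_0^{(i)}-\pl_\infty^{(i)})$, where $[F_i:\Fq(x_i)]=\ell^{i-1}$ is a power of $p$.

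If this order is coprime to $r$, \Cref{lem:H11} gives (\textbf{H1}) at the inert place $\pl_0^{(i)}$. If $p\mid r$, I would correct for this by finding a function with divisor $m(\pl_0^{(i)}-\pl_\infty^{(i)})$ where $\gcd(m,r)=1$, or by using $\ord(\pl_0^{(i)}-\pl_\infty^{(i)})$ directly. In all cases I expect \Cref{thm:construction_generalfields} to apply with $\pl_1=\pl_0^{(i)}$ and $\pl_2=\pl_\infty^{(i)}$.

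Next, the evaluation places are the $(\ell-1)\ell^i$ rational places lying over the zeros of $x_1-\alpha$ with $\alpha^\ell+\alpha\ne 0$. These are unramified and are neither zeros nor poles of $x$, so (\textbf{H2}) holds by \Cref{lem:H2}. This gives $n_r\ge r(\ell-1)\ell^i$. Taking $E=m\ql_\infty$ and applying \Cref{rem:any_dimension}, every $k_r$ is attainable, and \Cref{cor:singdefisotrivial} yields
$$d+k_r\ge n_r+1-\Big(r(g_{F_i}-1)+1+2\cdot\frac r2\cdot\frac{r-1}{r}\Big)=n_r+1-rg_{F_i}.$$
Dividing by $n_r$ and using \eqref{eq:genusformula}, we get
$$\frac{rg_{F_i}}{n_r}\le\frac{(\ell^{\lceil i/2\rceil}-1)(\ell^{\lfloor i/2\rfloor}-1)}{(\ell-1)\ell^i}\to\frac1{\ell-1}=\frac1{\sqrt q-1}.$$
Hence $R+\delta$ can be any value below $1-1/(\sqrt q-1)$, and the codes are asymptotically good since $n_r\to\infty$.

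The main obstacle is the first step: verifying that the divisor of the chosen function is supported only on $\{\pl_0^{(i)},\pl_\infty^{(i)}\}$, and securing coprimality with $r$. If $x_i$ fails to do this, the fallback is to take any $x$ with divisor $\ord(\pl_0^{(i)}-\pl_\infty^{(i)})(\pl_0^{(i)}-\pl_\infty^{(i)})$, as in \Cref{thm:construction_generalfields}. That in turn requires controlling $\gcd(r,\ord)$, for instance by showing this order is a power of $p$ and handling $p\mid r$ separately. Even a few extra zeros at non-rational places would only contribute $O(r)$ to the defect, which is asymptotically negligible.
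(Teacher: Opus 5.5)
\textbf{There is a genuine gap: the function $x$ your argument depends on is never secured.} You need $x\in F_i$ whose divisor is supported on $\{\pl_0^{(i)},\pl_\infty^{(i)}\}$, with $\nu_{\pl_0^{(i)}}(x)$ coprime to $r$.

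\textbf{Your candidate $x_i$ fails already for $i=2$.} Write the defining equation as $(x_2^\ell+x_2)(x_1^{\ell-1}+1)=x_1^\ell$. The right-hand side $x_1^\ell/(x_1^{\ell-1}+1)$ has a simple pole at each of the $\ell-1$ places $x_1=\beta$ with $\beta^{\ell-1}=-1$. So $x_2$ has a simple pole at a place $P_\beta$ of $F_2$ over each of them, and $(x_2)=\ell\pl_0^{(2)}-\pl_\infty^{(2)}-\sum_\beta P_\beta$. In general the pole of $x_i$ is not totally ramified in $F_i/\Fq(x_i)$. In any case $\nu_{\pl_0^{(i)}}(x_i)=\ell^{i-1}$ is of no use when $p\mid r$.

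\textbf{The fallback is an assumption, not a step.} A function with divisor a multiple of $\pl_0^{(i)}-\pl_\infty^{(i)}$, as in \Cref{thm:construction_generalfields}, requires $\gcd(r,\ord(\pl_0^{(i)}-\pl_\infty^{(i)}))=1$ for infinitely many $i$ and an arbitrary fixed $r$. Nothing in your proposal controls the order of this class in the degree-zero class group of $F_i$, and you give no reason it should be a power of $p$. The theorem is claimed for every $r\ge 1$.

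\textbf{The closing $O(r)$ remark is also unfounded.} The total degree of the zeros and poles of a natural candidate such as $x_1$ grows with $i$.

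\textbf{The missing idea:} you do not need to eliminate the extra zeros and poles. You only need to bound their total degree by $o(g_{F_i})$. Your premise that $x_1$ ``works only if it has no zeros other than $\pl_0^{(i)}$'' is false, and $x=x_1$ is exactly the paper's choice.

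\begin{itemize}
\item Since $\nu_{\pl_0^{(i)}}(x_1)=1$ and $\pl_0^{(i)}$ is rational, hence inert in $\Fqr F_i$, \Cref{lem:H11} gives (\textbf{H1}) for every $r$.
\item The evaluation places over $x_1=\alpha$ with $\alpha^\ell+\alpha\neq 0$ avoid all zeros and poles of $x_1$, so no length is lost: $n_r=r(\ell-1)\ell^i$.
\item One has $b_\pl=1$ away from the zeros and poles of $x_1$, and $b_\pl\le r$ everywhere. All those zeros and poles lie over the ramification locus $V$, the zeros and pole of $x_1^\ell+x_1$ in $F_1$. So the extra term in \eqref{eq:boundisotrivialcase} is at most $\frac{r-1}{2}\sum_{\rl\in V}\sum_{\pl\mid\rl}\deg_{F_i}(\pl)$.
\item Riemann--Hurwitz for $F_i/F_1$, with weak ramification $d_{\pl|\rl}=2(e_{\pl|\rl}-1)$ and \eqref{eq:genusformula}, shows this double sum equals $\ell^{\lfloor i/2\rfloor}+\ell^{\lceil i/2\rceil}=o(g_{F_i})$.
\end{itemize}

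This term therefore disappears after dividing by $n_r$, and you obtain the bound $R+\delta<1-1/(\sqrt q-1)$.
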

\begin{proof}
Let $\mathcal F=(F_1,F_2,\dots)$ be the optimal tower from \cite{GS96} some of whose properties were paraphrased in the beginning of this subsection. To construct a family of codes, we choose for each $i$, the function $x:=x_1 \in F_i$. Note that $\nu_{\pl_0^{(i)}}(x)=1$ so that \Cref{lem:H11} applies. Avoiding the pole and zeroes of $x$ as evaluation places, we see that we can create $\Fqr$-linear sum-rank metric codes $(n_r,k_r,d)$ of $\Fqr$-length $n_r=r(\ell-1)\ell^{i-1}$, $\Fqr$-dimension
$0 \le k_r \le n_r$ and where $d$ satisfies the inequality from \Cref{cor:singdefisotrivial}. 
To estimate $d$ better, we study the summation on the right-hand side of this inequality in more detail. Since $b_\pl=1$ whenever $\nu_\pl(x)=0$, we will estimate how many zeroes and poles $x$ has in $F_i$. First of all, $x$ has only one pole, namely $\pl_\infty^{(i)}$. Further, any ramified place of $F_i$ in the extension $F_i/F_1$ is either $\pl_\infty^{(i)}$ or a zero of the function $x_1^\ell+x_1$. For this reason, the set $V$ consisting of the zeroes and poles of $x_1^\ell+x_1$ is called the ramification locus of the tower $\mathcal F$. 

The number of zeroes and poles of $x$ in $F_i$ will therefore be bounded by the number of places of $F_i$ lying above a place in $V$. Moreover, the tower $\mathcal F$ is weakly ramified, meaning that for any place $\pl$ of $F_i$ lying above some place $\rl$ of $F_1$ it holds that $d_{\pl|\rl}=2(e_{\pl|\rl}-1)$, where $d_{\pl|\rl}$ denotes the different exponent of $\pl$ and $e(\pl|\rl)$ its ramification index \cite[Lem.~7.4.6.]{Stich2}. Finally, note that for any place $\pl$ of $F_i$ lying above a place $\rl$ in $V$, one has $f_{\pl|\rl}=\deg_{F_i}(\pl)$, since $\deg_{F_1}(\rl)=1$ for any $\rl \in V$. Here $f_{\pl|\rl}$ denotes the relative degree of $\pl$ in the extension $F_i/F_1$. Using the Riemann--Hurwitz formula, one then obtains that:
\begin{align*}
2g_{F_i}-2 & = -2\ell^{i-1}+\sum_{\rl \in V} \sum_{\substack{\pl|\rl \\ \pl \in \mathbb{P}_{F_i}}} 2(e_{\pl|\rl}-1)\deg_{F_i}(\pl)\\
 &= -2\ell^{i-1}+2\sum_{\rl \in V} \sum_{\substack{\pl|\rl \\ \pl \in \mathbb{P}_{F_i}}} e_{\pl|\rl}f_{\pl|\rl}-2\sum_{\rl \in V} \sum_{\substack{\pl|\rl \\ \pl \in \mathbb{P}_{F_i}}} \deg_{F_i}(\pl)\\
 &=-2\ell^{i-1}+2\#V \cdot \ell^{i-1}-2\sum_{\rl \in V} \sum_{\substack{\pl|\rl \\ \pl \in \mathbb{P}_{F_i}}} \deg_{F_i}(\pl)\\
 &= 2\ell^i-2\sum_{\rl \in V} \sum_{\substack{\pl|\rl \\ \pl \in \mathbb{P}_{F_i}}} \deg_{F_i}(\pl).
\end{align*}
Hence using equation \eqref{eq:genusformula}, we find that
\begin{align*}
\sum_{\rl \in V} \sum_{\substack{\pl|\rl \\ \pl \in \mathbb{P}_{F_i}}} \deg_{F_i}(\pl) = {\ell^{\lfloor \frac{i}{2} \rfloor}+\ell^{\lceil \frac{i}{2} \rceil}}.
\end{align*}
Using this and the aforementioned fact that $b_\pl=1$ whenever $\nu_\pl(x)=0$, we find that
\begin{align*}
\sum_{\pl \in \mathbb{P}_{F_i}} \frac{b_\pl-1}{b_\pl} \deg_{F_i}(\pl) &\le \frac{r-1}{r}\sum_{\rl \in V} \sum_{\substack{\pl|\rl \\ \pl \in \mathbb{P}_{F_i}}}\deg_{F_i}(\pl)
 = \frac{r-1}{r} \left( {\ell^{\lfloor \frac{i}{2} \rfloor}+\ell^{\lceil \frac{i}{2} \rceil}} \right).
\end{align*}
But then the minimum distance of the sum-rank metric codes we constructed satisfies:
\begin{align*}
d+k_r &\ge n_r+1 -\left(r(g_{F_i}-1)+1+\frac{r-1}2 \left({\ell^{\lfloor \frac{i}{2} \rfloor}+\ell^{\lceil \frac{i}{2} \rceil}}\right) \right).
\end{align*}
Since $n_r=r(\ell-1)\ell^{i-1}$ and $\lim_{i \to \infty} (\ell-1)\ell^{i}/g_{F_i}=\sqrt{q}-1$, letting $i$ tend to infinity yields that we can construct asymptotic families of codes whose rate $R$ and relative minimum distance $\delta$ satisfy
\[R+\delta \ge 1-\frac{1}{\sqrt{q}-1}
\]
The theorem now follows.
\end{proof}

In \Cref{fig:GVevencase}, we picture the comparison between the GV bound, the bound from  Theorem 4 of \cite{BC2024} and our bounds from \Cref{thm:asymptotic}. We observe that, for $q=7^2$, there is a range where our bound improves on the Gilbert--Varshamov bound, for any $r$. This matches the situation in the Hamming case. 

\begin{figure}
\begin{tikzpicture}[scale=3]
\begin{scope}[]
\draw[-latex] (-0.1,0)--(1.15,0);
\draw[-latex] (0,-0.1)--(0,1.15);
\draw (-0.03,1)--(0.03,1);
\draw (1,-0.03)--(1,0.03);
\node[below,scale=0.8] at (1.15,0) { $\delta$ };
\node[left,scale=0.8] at (0,1.15) { $R$ };
\node[below, scale=0.7] at (1,-0.02) { $1$ };
\node[left, scale=0.7] at (-0.02,1) { $1$ };
\node[below left,scale=0.7] at (0,0) { $0$ };
\draw[BC, thick] (0,0.625)--(0.625,0);
\draw[GV, thick] ( 0.669999999999999999999999999994 , 0.00494990003131706750206798437259 )--
( 0.654999999999999999999999999993 , 0.0161585613259531093399348457035 )--
( 0.639999999999999999999999999993 , 0.0278363307498878260145734002830 )--
( 0.624999999999999999999999999993 , 0.0399839135985505715802625651430 )--
( 0.609999999999999999999999999993 , 0.0526020560606280740209821840899 )--
( 0.594999999999999999999999999992 , 0.0656915485280263749755646987213 )--
( 0.579999999999999999999999999992 , 0.0792532292561917879810544667450 )--
( 0.564999999999999999999999999992 , 0.0932879884212941383758600581126 )--
( 0.549999999999999999999999999991 , 0.107796772628290321216217315986 )--
( 0.534999999999999999999999999991 , 0.122780589932847362619302384522 )--
( 0.519999999999999999999999999991 , 0.138240515450837705566054629537 )--
( 0.504999999999999999999999999991 , 0.154177697642036285566680778240 )--
( 0.489999999999999999999999999991 , 0.170593365370270231478236943529 )--
( 0.474999999999999999999999999991 , 0.187488835861262848568627772338 )--
( 0.459999999999999999999999999991 , 0.204865523702628637660612420885 )--
( 0.444999999999999999999999999991 , 0.222724951059021770833101384469 )--
( 0.429999999999999999999999999991 , 0.241068759310759555399198721030 )--
( 0.414999999999999999999999999991 , 0.259898722368232132774104041317 )--
( 0.399999999999999999999999999991 , 0.279216761969586039610573371191 )--
( 0.384999999999999999999999999992 , 0.299024965338899970880459210346 )--
( 0.369999999999999999999999999992 , 0.319325605670908099116003080111 )--
( 0.354999999999999999999999999992 , 0.340121166022487394668616490202 )--
( 0.339999999999999999999999999992 , 0.361414367339200118995901040286 )--
( 0.324999999999999999999999999992 , 0.383208201539182098290394800790 )--
( 0.309999999999999999999999999992 , 0.405505970833608668723931579535 )--
( 0.294999999999999999999999999992 , 0.428311334807307725824394636604 )--
( 0.279999999999999999999999999992 , 0.451628367250511509276576595769 )--
( 0.264999999999999999999999999992 , 0.475461625376212349897177649718 )--
( 0.249999999999999999999999999992 , 0.499816234957184506207735573771 )--
( 0.234999999999999999999999999992 , 0.524697996195980364621827653670 )--
( 0.219999999999999999999999999992 , 0.550113516994984211695354285736 )--
( 0.204999999999999999999999999992 , 0.576070383037114803624399667441 )--
( 0.189999999999999999999999999992 , 0.602577378245238556073106034561 )--
( 0.174999999999999999999999999992 , 0.629644775659473295159776278193 )--
( 0.159999999999999999999999999992 , 0.657284729156467253504680054893 )--
( 0.144999999999999999999999999992 , 0.685511813698294262332515569614 )--
( 0.129999999999999999999999999992 , 0.714343791701591970972807128933 )--
( 0.114999999999999999999999999992 , 0.743802737496352370731245349096 )--
( 0.0999999999999999999999999999916 , 0.773916756697852161971976444050 )--
( 0.0849999999999999999999999999916 , 0.804722754607166117040775398233 )--
( 0.0699999999999999999999999999917 , 0.836271201065466289119765236838 )--
( 0.0549999999999999999999999999917 , 0.868635101921625677720723340723 )--
( 0.0399999999999999999999999999917 , 0.901929207369837427482053419359 )--
( 0.0249999999999999999999999999917 , 0.936360413429565650336482590887 )--
( 0.00999999999999999999999999999170 , 0.972424691598109002129327511475 );
\draw[green!30!black, thick] (0,0.83)--(0.83,0);
\node[scale=0.8] at (0.6, -0.25) { $q = 7^2,\, r = 2$ };
\end{scope}

\begin{scope}[xshift=1.5cm]
\draw[-latex] (-0.1,0)--(1.15,0);
\draw[-latex] (0,-0.1)--(0,1.15);
\draw (-0.03,1)--(0.03,1);
\draw (1,-0.03)--(1,0.03);
\node[below,scale=0.8] at (1.15,0) { $\delta$ };
\node[left,scale=0.8] at (0,1.15) { $R$ };
\node[below, scale=0.7] at (1,-0.02) { $1$ };
\node[left, scale=0.7] at (-0.02,1) { $1$ };
\node[below left,scale=0.7] at (0,0) { $0$ };
\draw[BC, thick] (0, 0.50125)--(0.50125,0);
\draw[GV, thick] ( 0.985000000000000000000000000000 , 0.000184486939418963535243818272466 )--
( 0.969999999999999999999999999999 , 0.000859585263938833472024854610702 )--
( 0.954999999999999999999999999999 , 0.00198468511688698195012898763920 )--
( 0.939999999999999999999999999999 , 0.00355978654653285651683466506090 )--
( 0.924999999999999999999999999999 , 0.00558488960346924954068150547464 )--
( 0.909999999999999999999999999998 , 0.00805999434076383876964083633842 )--
( 0.894999999999999999999999999998 , 0.0109851008141232883410564924527 )--
( 0.879999999999999999999999999998 , 0.0143602090820711806242422313159 )--
( 0.864999999999999999999999999997 , 0.0181853192061412017536806842019 )--
( 0.849999999999999999999999999997 , 0.0224604312510871774687471157961 )--
( 0.834999999999999999999999999997 , 0.0271855452851117543156103542034 )--
( 0.819999999999999999999999999997 , 0.0323606613801157484040700417646 )--
( 0.804999999999999999999999999996 , 0.0379857796119704445015074165422 )--
( 0.789999999999999999999999999996 , 0.0440609000608154279346325065338 )--
( 0.774999999999999999999999999996 , 0.0505860228113848772834494722451 )--
( 0.759999999999999999999999999995 , 0.0575611479533656452277304753398 )--
( 0.744999999999999999999999999995 , 0.0649862755817909177785037334092 )--
( 0.729999999999999999999999999995 , 0.0728614057974737800834665418701 )--
( 0.714999999999999999999999999995 , 0.0811865387074856440221075896810 )--
( 0.699999999999999999999999999994 , 0.0899616744256852258460103965292 )--
( 0.684999999999999999999999999994 , 0.0991868130733046217728497985913 )--
( 0.669999999999999999999999999994 , 0.108861954779600040855797589481 )--
( 0.654999999999999999999999999993 , 0.118987099682575948445098575862 )--
( 0.639999999999999999999999999993 , 0.129562247929792788073309599659 )--
( 0.624999999999999999999999999993 , 0.140587399679270131534227854820 )--
( 0.609999999999999999999999999993 , 0.152062555100499114545820639739 )--
( 0.594999999999999999999999999992 , 0.163987714375580421399660820010 )--
( 0.579999999999999999999999999992 , 0.176362877700506977610109498048 )--
( 0.564999999999999999999999999992 , 0.189188045286614009800111065700 )--
( 0.549999999999999999999999999991 , 0.202463217362223382738694766112 )--
( 0.534999999999999999999999999991 , 0.216188394174514310587135158749 )--
( 0.519999999999999999999999999991 , 0.230363575991658901707154879578 )--
( 0.504999999999999999999999999991 , 0.244988763105268842100907783311 )--
( 0.489999999999999999999999999991 , 0.260063955833209252507934100303 )--
( 0.474999999999999999999999999991 , 0.275589154522847893268050971932 )--
( 0.459999999999999999999999999991 , 0.291564359554823131829299740066 )--
( 0.444999999999999999999999999991 , 0.307989571347433351329250549958 )--
( 0.429999999999999999999999999991 , 0.324864790361775000427934891626 )--
( 0.414999999999999999999999999991 , 0.342190017107787936448923751499 )--
( 0.399999999999999999999999999991 , 0.359965252151407380211625213288 )--
( 0.384999999999999999999999999992 , 0.378190496123074834577926986174 )--
( 0.369999999999999999999999999992 , 0.396865749727930118774859358762 )--
( 0.354999999999999999999999999992 , 0.415991013758099446530351037248 )--
( 0.339999999999999999999999999992 , 0.435566289107619102892494951216 )--
( 0.324999999999999999999999999992 , 0.455591576790703607781477251122 )--
( 0.309999999999999999999999999992 , 0.476066877964300186431565408532 )--
( 0.294999999999999999999999999992 , 0.496992193956196100603589625314 )--
( 0.279999999999999999999999999992 , 0.518367526300404752200591646713 )--
( 0.264999999999999999999999999992 , 0.540192876782216689468722375732 )--
( 0.249999999999999999999999999992 , 0.562468247496267278243160736252 )--
( 0.234999999999999999999999999992 , 0.585193640922412603051477576694 )--
( 0.219999999999999999999999999992 , 0.608369060026398318854286213587 )--
( 0.204999999999999999999999999992 , 0.631994508395727382779130829189 )--
( 0.189999999999999999999999999992 , 0.656069990426614110301666533338 )--
( 0.174999999999999999999999999992 , 0.680595511586964206995644218428 )--
( 0.159999999999999999999999999992 , 0.705571078795794036251739596677 )--
( 0.144999999999999999999999999992 , 0.730996700987025429347224285768 )--
( 0.129999999999999999999999999992 , 0.756872389976871328919055320487 )--
( 0.114999999999999999999999999992 , 0.783198161854953909160967957228 )--
( 0.0999999999999999999999999999916 , 0.809974039331497048268382513667 )--
( 0.0849999999999999999999999999916 , 0.837200055956853655839383142497 )--
( 0.0699999999999999999999999999917 , 0.864876264352776683716292856104 )--
( 0.0549999999999999999999999999917 , 0.893002754135111688963404100097 )--
( 0.0399999999999999999999999999917 , 0.921579697589015724031021411013 )--
( 0.0249999999999999999999999999917 , 0.950607499116167434916293261601 )--
( 0.00999999999999999999999999999170 , 0.980087598431446064110068365472 );
\draw[green!30!black, thick] (0,0.83)--(0.83,0);
\node[scale=0.8] at (0.6, -0.25) { $q = 7^2,\, r \to \infty$ };
\end{scope}
\begin{scope}[xshift=3cm]
\draw[-latex] (-0.1,0)--(1.15,0);
\draw[-latex] (0,-0.1)--(0,1.15);
\draw (-0.03,1)--(0.03,1);
\draw (1,-0.03)--(1,0.03);
\node[below,scale=0.8] at (1.15,0) { $\delta$ };
\node[left,scale=0.8] at (0,1.15) { $R$ };
\node[below, scale=0.7] at (1,-0.02) { $1$ };
\node[left, scale=0.7] at (-0.02,1) { $1$ };
\node[below left,scale=0.7] at (0,0) { $0$ };
\draw[BC, thick] (0, 0.812500000000000000000000000000)--(0.812500000000000000000000000000,0);
\draw[GV, thick] ( 0.699999999999999999999999999994 , 0.00459012343432324350487926144310 )--
( 0.684999999999999999999999999994 , 0.0146651062529533290446099751197 )--
( 0.669999999999999999999999999994 , 0.0252045407995243310217629966239 )--
( 0.654999999999999999999999999993 , 0.0362089401358055226361898821879 )--
( 0.639999999999999999999999999993 , 0.0476788458635265538737463070787 )--
( 0.624999999999999999999999999993 , 0.0596148303352624647295568727452 )--
( 0.609999999999999999999999999993 , 0.0720174990887759280999873998748 )--
( 0.594999999999999999999999999992 , 0.0848874935330763969702512681646 )--
( 0.579999999999999999999999999992 , 0.0982254939187982448798200955416 )--
( 0.564999999999999999999999999992 , 0.112032222630635649407226328550 )--
( 0.549999999999999999999999999991 , 0.126308447845670263711237041390 )--
( 0.534999999999999999999999999991 , 0.141054987608699761400210560149 )--
( 0.519999999999999999999999999991 , 0.156272714384385685334121918279 )--
( 0.504999999999999999999999999991 , 0.171962560156521147859468011402 )--
( 0.489999999999999999999999999991 , 0.188125522157395708633093807594 )--
( 0.474999999999999999999999999991 , 0.204762669325645951712465301052 )--
( 0.459999999999999999999999999991 , 0.221875149609819510168600817080 )--
( 0.444999999999999999999999999991 , 0.239464198258045310848886689590 )--
( 0.429999999999999999999999999991 , 0.257531147262864537806953097561 )--
( 0.414999999999999999999999999991 , 0.276077436165974799245323935689 )--
( 0.399999999999999999999999999991 , 0.295104624472416033103623978762 )--
( 0.384999999999999999999999999992 , 0.314614405980313686827639087609 )--
( 0.369999999999999999999999999992 , 0.334608625404386616020344891022 )--
( 0.354999999999999999999999999992 , 0.355089297764069735129124241593 )--
( 0.339999999999999999999999999992 , 0.376058631127265214290896348830 )--
( 0.324999999999999999999999999992 , 0.397519053458168025414486617792 )--
( 0.309999999999999999999999999992 , 0.419473244526121435749376465141 )--
( 0.294999999999999999999999999992 , 0.441924174111890566118685266200 )--
( 0.279999999999999999999999999992 , 0.464875148127058780394198972655 )--
( 0.264999999999999999999999999992 , 0.488329864784435362068980176441 )--
( 0.249999999999999999999999999992 , 0.512292483687392622292752859918 )--
( 0.234999999999999999999999999992 , 0.536767711744168506579024181174 )--
( 0.219999999999999999999999999992 , 0.561760911317523146705433035167 )--
( 0.204999999999999999999999999992 , 0.587278238246515308943666739591 )--
( 0.189999999999999999999999999992 , 0.613326820750983522721861972480 )--
( 0.174999999999999999999999999992 , 0.639914995480677967162782591936 )--
( 0.159999999999999999999999999992 , 0.667052625398417932537246874474 )--
( 0.144999999999999999999999999992 , 0.694751538196156147953400010116 )--
( 0.129999999999999999999999999992 , 0.723026148209330463510547533695 )--
( 0.114999999999999999999999999992 , 0.751894368923676594181674348047 )--
( 0.0999999999999999999999999999916 , 0.781379008258553149105115596240 )--
( 0.0849999999999999999999999999916 , 0.811510015144895520327218109202 )--
( 0.0699999999999999999999999999917 , 0.842328346240516080150288042876 )--
( 0.0549999999999999999999999999917 , 0.873893246343915381627178287318 )--
( 0.0399999999999999999999999999917 , 0.906297836105531974622231287497 )--
( 0.0249999999999999999999999999917 , 0.939710013209906736151723075885 )--
( 0.00999999999999999999999999999170 , 0.974532262278108756196458285374 );
\draw[green!30!black, thick] (0,0.9)--(0.9,0);
\node[scale=0.8] at (0.6, -0.25) { $q = 11^2,\, r = 2$ };
\end{scope}
\begin{scope}[xshift=4.5cm]
\draw[-latex] (-0.1,0)--(1.15,0);
\draw[-latex] (0,-0.1)--(0,1.15);
\draw (-0.03,1)--(0.03,1);
\draw (1,-0.03)--(1,0.03);
\node[below,scale=0.8] at (1.15,0) { $\delta$ };
\node[left,scale=0.8] at (0,1.15) { $R$ };
\node[below, scale=0.7] at (1,-0.02) { $1$ };
\node[left, scale=0.7] at (-0.02,1) { $1$ };
\node[below left,scale=0.7] at (0,0) { $0$ };
\draw[BC, thick] (0, 0.750625)--(0.750625,0);
\draw[GV, thick] ( 0.985000000000000000000000000000 , 0.000192189428249370972458468397410 )--
( 0.969999999999999999999999999999 , 0.000867269219340869566455130181413 )--
( 0.954999999999999999999999999999 , 0.00199235025076347563446416340702 )--
( 0.939999999999999999999999999999 , 0.00356743256168821092087025838777 )--
( 0.924999999999999999999999999999 , 0.00559251619317150906386910336285 )--
( 0.909999999999999999999999999998 , 0.00806760118827819190451753897848 )--
( 0.894999999999999999999999999998 , 0.0109926875922146386973894389472 )--
( 0.879999999999999999999999999998 , 0.0143677754524731791465344313983 )--
( 0.864999999999999999999999999997 , 0.0181928648189888649267227763742 )--
( 0.849999999999999999999999999997 , 0.0224679557443099153558732733513 )--
( 0.834999999999999999999999999997 , 0.0271930482837832939199020195544 )--
( 0.819999999999999999999999999997 , 0.0323681424957570566747098270302 )--
( 0.804999999999999999999999999996 , 0.0379932384418013250203096882981 )--
( 0.789999999999999999999999999996 , 0.0440683361869499785415845647391 )--
( 0.774999999999999999999999999996 , 0.0505934357999654439971779571939 )--
( 0.759999999999999999999999999995 , 0.0575685373536292806345575839200 )--
( 0.744999999999999999999999999995 , 0.0649936409250616376335958754304 )--
( 0.729999999999999999999999999995 , 0.0728687465960730960367022927048 )--
( 0.714999999999999999999999999995 , 0.0811938544535529163605925350169 )--
( 0.699999999999999999999999999994 , 0.0899689645898983079520300584716 )--
( 0.684999999999999999999999999994 , 0.0991940771034900337648188862411 )--
( 0.669999999999999999999999999994 , 0.108869192099220485004791157990 )--
( 0.654999999999999999999999999993 , 0.118994309689081329025542141002 )--
( 0.639999999999999999999999999993 , 0.129569429992818981747067783805 )--
( 0.624999999999999999999999999993 , 0.140594553138667520775300481525 )--
( 0.609999999999999999999999999993 , 0.152069679264170284599584206029 )--
( 0.594999999999999999999999999992 , 0.163994808517103355742220637571 )--
( 0.579999999999999999999999999992 , 0.176369941056516475544230570490 )--
( 0.564999999999999999999999999992 , 0.189195077053909778729417523076 )--
( 0.549999999999999999999999999991 , 0.202470216694568185346350972492 )--
( 0.534999999999999999999999999991 , 0.216195360179079497092734751967 )--
( 0.519999999999999999999999999991 , 0.230370507725067408078297678314 )--
( 0.504999999999999999999999999991 , 0.244995659569177006940502130137 )--
( 0.489999999999999999999999999991 , 0.260070815969358243159804086559 )--
( 0.474999999999999999999999999991 , 0.275595977207502681379486267766 )--
( 0.459999999999999999999999999991 , 0.291571143592501235535334538408 )--
( 0.444999999999999999999999999991 , 0.307996315463806207116276041387 )--
( 0.429999999999999999999999999991 , 0.324871493195600851462655184576 )--
( 0.414999999999999999999999999991 , 0.342196677201705219445890292022 )--
( 0.399999999999999999999999999991 , 0.359971867941380022904811750071 )--
( 0.384999999999999999999999999992 , 0.378197065926233309413975536456 )--
( 0.369999999999999999999999999992 , 0.396872271728491375218424069456 )--
( 0.354999999999999999999999999992 , 0.415997485990970633575061682919 )--
( 0.339999999999999999999999999992 , 0.435572709439188291358618608768 )--
( 0.324999999999999999999999999992 , 0.455597942896187101945381318336 )--
( 0.309999999999999999999999999992 , 0.476073187300838488555361590937 )--
( 0.294999999999999999999999999992 , 0.496998443730651856105769519209 )--
( 0.279999999999999999999999999992 , 0.518373713430490681918701372629 )--
( 0.264999999999999999999999999992 , 0.540198997849131747972001930685 )--
( 0.249999999999999999999999999992 , 0.562474298686387495278642818665 )--
( 0.234999999999999999999999999992 , 0.585199617954679891478298584998 )--
( 0.219999999999999999999999999992 , 0.608374958060733965379106631739 )--
( 0.204999999999999999999999999992 , 0.632000321915835498821564263047 )--
( 0.189999999999999999999999999992 , 0.656075713087545657195090719502 )--
( 0.174999999999999999999999999992 , 0.680601136013111274109877337850 )--
( 0.159999999999999999999999999992 , 0.705576596307366456682418496248 )--
( 0.144999999999999999999999999992 , 0.731002101220256337863095431237 )--
( 0.129999999999999999999999999992 , 0.756877660340727086115128307447 )--
( 0.114999999999999999999999999992 , 0.783203286725628780921388304374 )--
( 0.0999999999999999999999999999916 , 0.809978998804479855889763650602 )--
( 0.0849999999999999999999999999916 , 0.837204823803645437043841764954 )--
( 0.0699999999999999999999999999917 , 0.864880804426076020871606820247 )--
( 0.0549999999999999999999999999917 , 0.893007013395708037739507793219 )--
( 0.0399999999999999999999999999917 , 0.921583590523249420988381353458 )--
( 0.0249999999999999999999999999917 , 0.950610863983720864451912600853 )--
( 0.00999999999999999999999999999170 , 0.980090002116635375493798561348 );
\draw[green!30!black, thick] (0,0.9)--(0.9,0);
\node[scale=0.8] at (0.6, -0.25) { $q = 11^2,\, r \to\infty$ };
\end{scope}
\begin{scope}[xshift=5cm, yshift=1cm, yscale=0.15, xscale=0.15]
\draw[BC, thick] (0,0)--(1,0);
\node[right,scale=0.8] at (1,0) { \cite[Thm.~4]{BC2024} };
\draw[GV, thick] (0,1)--(1,1);
\node[right,scale=0.8] at (1,1) { GV bound };
\draw[green!30!black, thick] (0,2)--(1,2);
\node[right,scale=0.8] at (1,2) { \Cref{thm:asymptotic}};
\end{scope}
\end{tikzpicture}
\caption{Comparison between GV bound, Theorem 4 of \cite{BC2024} (in the corrected form of \cref{eq:BCbound}), and \Cref{thm:asymptotic}}
\label{fig:GVevencase}
\end{figure}  

\subsubsection{The case $q$ is not a square}

We now assume that $q=p^m$, where $m>$ is odd, and $p$ is a prime number. Further, we choose $j$ and $k$ positive integers such that $i+j=m$ and $\gcd(i,j)=1$. Define for any positive integer $a$ the polynomial $\mathrm{Tr}_j(T)=T+T^p+\cdots+T^{p^{a-1}}$. Now, let $\mathcal F=(F_1,F_2,\dots)$ be the tower over $\Fq$ studied in \cite{BBGS15}. More precisely, one has $F_1=\fq(x_1)$ and $F_i=F_{i-1}(x_i)$, where $x_i$ satisfies
$$\mathrm{Tr}_j\left(\frac{x_i}{x_{i-1}^{p^k}}\right)+\mathrm{Tr}_k\left(\frac{x_i^{q^j}}{x_{i-1}}\right)=1.$$
We cite the following facts from the literature:
\begin{enumerate}
\item From \cite[Cor.~3.2]{BBGS15}: for all $i$ and any nonzero $\alpha \in \fq$, the zero of $x_1-\alpha$ splits completely in the extension $F_i/F_1$. In particular, $N_1(F_i) \ge (q-1)[F_i:F_1].$ 
\item From See \cite[Prop.~2.6]{BBGS15} and the discussion in \cite{BBGS15} directly after Corollary 3.2: for all $i$ only the zero and pole of $x_1$, denoted by $\rl_0$ and $\rl_\infty$ are ramified in $F_i/F_1$. 
\item From \cite[Eq.~25]{BBGS15}: for any place $\pl$ of $F_i$ lying above $\rl_0$ it holds that 
$$d_{\pl|\rl_0} \le b_0(e_{\pl|\rl_0}-1), \quad \text{where} \quad b_0=\frac{p^m-1}{p^k-1}+1.$$
\item From \cite[Eq.~24]{BBGS15}: for any place $\pl$ of $F_i$ lying above $\rl_\infty$ it holds that 
$$d_{\pl|\rl_\infty} \le b_\infty(e_{\pl|\rl_\infty}-1), \quad \text{where} \quad b_\infty=\frac{p^m-1}{p^j-1}+1.$$
\item From the proof of the Main Claim on page 14 in \cite{BBGS15}: any place $\pl$ of $F_i$ that is a zero of $x_i$ is also a zero of $x_1$. Moreover: $\nu_\pl(x_1)=1$ for any such place $\pl$.
\item From \cite[Lem.~2.4]{CCH2021}: for any $i \ge 2$, the extension degree of $F_{i}/F_{i-1}$ is $p^{m-1}.$ Note that in \cite{CCH2021} the language of curves and maps between curves is used rather than function fields and extensions of function fields.
\end{enumerate}

As shown in \cite{BBGS15}, the first four of these facts are enough to deduce that the limit $\lambda$ of the tower $\mathcal F$ described above satisfies the bound
$$\lambda \ge 2\left( \frac{1}{p^j-1}+\frac{1}{p^k-1} \right)^{-1}.$$
To maximize this bound for a given $q=p^m$ with $m$ odd, one can choose $j=\lfloor m/2 \rfloor$ and $k=\lceil m/2 \rceil.$ In this case we obtain that $\lambda \ge H(q)$, just as in equation \eqref{eq:non-square bound}.
With these facts at our disposal, we prove the following result.

\begin{theorem}\label{thm:asymptotic4}
Let $q=p^m$, where $m>1$ is odd, and $p$ is a prime number. For all $R$ and $\delta$ in the interval $(0,1)$ such that 
\[R+\delta < 1-\frac{1}{H(q)},\]
there exists an asymptotic family of $\mathbb{F}_{q^r}$-linearized AG codes with rate at least $R$ and relative minimum distance at least $\delta$.
\end{theorem}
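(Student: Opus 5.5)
We follow the proof of \Cref{thm:asymptotic} with the tower $\mathcal F$ of \cite{BBGS15}, taking $j=\lfloor m/2\rfloor$ and $k=\lceil m/2\rceil$. For each $i$ we set $L_i=\Fqr F_i$ and choose $x:=x_1\in F_i$.

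The first step is to check the hypotheses of the construction. By fact 5, a place $\pl$ of $F_i$ that is a zero of $x_i$ satisfies $\nu_\pl(x_1)=1$. Since $L_i/F_i$ is a constant field extension of degree $r$, any rational place is inert, so \Cref{lem:H11} gives (\textbf{H1}). As evaluation places we take the places of $F_i$ above the zeroes of $x_1-\alpha$ with $\alpha\in\Fq^*$. By fact 1 these are $(q-1)[F_i:F_1]$ rational places, all unramified, at which $\nu(x)=0$, so \Cref{lem:H2} gives (\textbf{H2}). Hence $n_r=r(q-1)[F_i:F_1]$. We then pick $E=m'\ql$ supported at a place above $\rl_\infty$, with $m'$ chosen as in \Cref{rem:any_dimension} to realise any $k_r$, and apply \Cref{cor:singdefisotrivial}.

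The second step, which is the heart of the proof, is to bound the defect term. Since $b_\pl=1$ whenever $\nu_\pl(x)=0$, and $b_\pl\le r$ otherwise, we have
$$\sum_\pl \tfrac{b_\pl-1}{b_\pl}\deg\pl\le \tfrac{r-1}{r}(S_0+S_\infty),$$
where $S_0$ and $S_\infty$ are the total degrees of the places of $F_i$ above $\rl_0$ and $\rl_\infty$. To bound $S_0+S_\infty$ we use Riemann--Hurwitz together with facts 2--4. Writing $n=[F_i:F_1]$, we get
$$2g_{F_i}-2\le -2n+b_0(n-S_0)+b_\infty(n-S_\infty),$$
using $\sum_{\pl|\rl}e_{\pl|\rl}\deg\pl=n$. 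This is an upper bound on $g$, but we actually need an upper bound on $S_0+S_\infty$. There are two possible routes.

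In the first route, we could try to show that $S_0$ and $S_\infty$ are asymptotically negligible, \ie $o(n)$. Fact 6 gives $n=p^{(m-1)(i-1)}$, so it would suffice to show that the places above $\rl_0$ and $\rl_\infty$ are highly ramified, with $S_0,S_\infty=O(n/p^{ci})$. This is the main obstacle, and I expect it must be extracted from the explicit ramification structure in \cite{BBGS15}, paralleling the $\ell^{\lfloor i/2\rfloor}+\ell^{\lceil i/2\rceil}$ count in the square case.

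In the second, more robust route, we keep $S_0+S_\infty$ and combine it with the genus bound. The total defect is $r(g-1)+1+\tfrac{r-1}{2}(S_0+S_\infty)$. From Riemann--Hurwitz,
$$g\le 1+\tfrac{(b_0+b_\infty-2)n-b_0S_0-b_\infty S_\infty}{2},$$
and since $b_0,b_\infty\ge 2$, adding $\tfrac{r-1}{2}(S_0+S_\infty)\le \tfrac r2(S_0+S_\infty)$ still leaves the total bounded by $r\tfrac{b_0+b_\infty-2}{2}n+O(r)$, because the negative $S$-terms from the genus bound absorb the positive ones.

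The final step is the computation. Dividing the defect by $n_r=r(q-1)n$, the normalised defect tends to at most
$$\frac{b_0+b_\infty-2}{2(q-1)}=\frac12\left(\frac1{p^k-1}+\frac1{p^j-1}\right)=\frac1{H(q)}.$$
Therefore $R+\delta\ge 1-1/H(q)$ asymptotically, and the lengths tend to infinity since $n\to\infty$. I would verify this arithmetic carefully, especially the identity $b-1=(q-1)/(p^k-1)$. Since this computation already yields the claimed constant without using fact 5 beyond (\textbf{H1}), the second route is the one I would pursue first.
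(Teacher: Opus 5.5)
Your second route is the paper's argument. It uses the same choice $x=x_1$ and the same $(q-1)[F_i:F_1]$ evaluation places above the zeroes of $x_1^{q-1}-1$. It uses the same Riemann--Hurwitz estimate, in which the terms $-\frac r2(b_0S_0+b_\infty S_\infty)$ coming from the genus absorb $\frac{r-1}{2}(S_0+S_\infty)$. This leaves the normalised defect $\frac{b_0+b_\infty-2}{2(q-1)}=\frac{1}{H(q)}$. That part, including the identity $b_0-1=\frac{q-1}{p^k-1}$, is correct.

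The gap is in your verification of (\textbf{H1}). \Cref{lem:H11} needs a place $\pl$ of $F_i$ that is \emph{inert} in $L_i=\Fqr F_i$ and has $\nu_\pl(x_1)$ coprime to $r$. Fact 5 gives $\nu_\pl(x_1)=1$ for every zero $\pl$ of $x_i$, but it says nothing about $\deg\pl$. In a constant field extension of degree $r$, a place is inert exactly when $\gcd(\deg\pl,r)=1$. Since $r$ is arbitrary, you need a zero of $x_i$ that is \emph{rational}, and you never show that one exists. Your sentence ``any rational place is inert'' is applied to a place not known to be rational.

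The paper spends most of its proof on exactly this point. By fact 6, $F_{i-1}$ and $\fq(x_{i-1},x_i)$ are linearly disjoint over $\fq(x_{i-1})$, with compositum $F_i$. By \cite[Prop.~2.6]{BBGS15}, $\fq(x_{i-1},x_i)$ has a rational common zero $\pl_0^{(i-1,i)}$ of $x_{i-1}$ and $x_i$. One then takes a place $\pl^{(i)}$ of $F_i$ lying above both $\pl^{(i-1)}$ and $\pl_0^{(i-1,i)}$. Comparing the local extension degree with the ramification index obtained from Abhyankar's lemma forces $f_{\pl^{(i)}|\pl_0^{(i-1,i)}}=1$, so $\pl^{(i)}$ is rational by induction on $i$. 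Some argument of this kind is needed to complete your proof.

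Having this rational place also gives you a natural support for $E$, so that \Cref{rem:any_dimension} applies. A place above $\rl_\infty$, which is where you propose to put $E$, is not known to be rational.
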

\begin{proof}
We take the tower $\mathcal F=(F_1,F_2,\dots)$ defined in the beginning of this subsection with $j=\lfloor m/2 \rfloor$ and $k=\lceil m/2 \rceil$. To construct the codes, we choose for the function $x$, simply $x:=x_1$ for all $i$. First of all, any zero $\pl$ of $x_i$ in $F_i$ satisfies that $\nu_\pl(x_1)=1$. We claim that there exists a choice of $\pl$ such that $\pl$ is a rational place. We will denote this place by $\pl^{(i)}$. Once we have proved this claim, \Cref{lem:H11} applies, and we can construct sum-rank metric codes using the indicated choice of $x$. To prove the claim, first note that using \cite[Lem.~2.4]{CCH2021}, which we already cited when listing some facts on the tower $\mathcal F$, we know that $[F_i:F_{i-1}]=p^{m-1}$. Considering the pyramid of function fields the tower $\mathcal F$ gives rise to, see \cite[Fig.~5]{BBGS15} for an illustration, we deduce that for any $i \ge 2$ one has $[F_{i-1}:\fq(x_{i-1})]=[F_i:\fq(x_{i-1},x_i)]=(p^{m-1})^{i-1}$ and $[F_i:F_{i-1}]=[\fq(x_{i-1},x_i):\fq(x_{i-1})]=p^{m-1}$. In particular, the fields $F_i$ and $\fq(x_{i-1},x_i)$ are linearly disjoint over $\fq(x_{i-1})$. From \cite[Prop.~2.6 and Figure 2]{BBGS15}, we see that there exists exactly one $\fq$-rational place of $F_2$ that is a common zero of $x_1$ and $x_2$. In the notation of \cite{BBGS15}, it is the place $P_\gamma$ where $\gamma=0$, but we will denote it by $\pl_0^{(1,2)}$. Similarly, there exists a rational place $\pl_0^{(i-1,i)}$ of $\fq(x_{i-1},x_i)$ that is a common zero of $x_{i-1}$ and $x_i$. We claim that there is a place $\pl^{(i)}$ of $F_i$ lying above all the places $\pl_0^{(1,2)}, \dots, \pl_0^{(i-1,i)}$, see \Cref{fig:constructionP} for an illustration, where the notation $[x_j=0]$ is used to denote the zero of $x_j$ in $\fq(x_j)$. The ramification indices indicated in this figure in the bottom layer of the pyramid are from \cite[Prop.~2.8.]{BBGS15}. The ones in the higher layers of the pyramid follow immediately using Abhyankar's lemma \cite[Thm.~3.9.1]{Stich2}.
\begin{figure}[h]
\begin{center}
\scalebox{1.2}{
\makebox[\width][c]{
\def\objectstyle{\scriptstyle}
\xymatrix@!=2.3pc@dr{
&&&&&\\
&\pl^{(i)}\ar@{-}|{e=q^k} [r] & \ar@{.}[r] &\pl_0^{(i-1,i)}\ar@{-}|{e=q^k} [r]& [x_{i}=0]\\
&\pl^{(i-1)}\ar@{-}|{e=q^k}[r] \ar@{-}|{e=1}[u]&\ar@{-}|{e=1}[u] \ar@{.}[r]& [x_{i-1}=0] \ar@{-}|{e=1}[u]\\
&\pl^{(2)}=\pl_0^{(1,2)}\ar@{-}|{e=q^k}[r] \ar@{.}[u]& [x_2=0] \ar@{.}[u]\\
&[x_{1}=0] \ar@{-}|{e=1}[u] \\
}
}
}
\end{center}
\caption{Construction of the place $\pl^{(i)}$.\label{fig:constructionP}}
\end{figure}
First of all, we can choose $\pl^{(2)}:=\pl_0^{(1,2)}$. Since $F_2$ and $\fq(x_2,x_3)$ are linearly disjoint over $\fq(x_2)$ and have compositum $F_3$, there exists a place $\pl$ of $F_3$ lying above both $\pl^{(2)}$ and $\pl_0^{(2,3)}$. This is a well-known fact proven in the language of function fields in \cite[Lem.~2.1.3]{Wulftange}. Since after completion at $\pl$, the extension degree of $F_1/\fq(x_1)$ becomes $q^k$ and that of $\fq(x_2,x_3)/\fq(x_2)$ becomes $1$, we see that after completion at $\pl$ the extension degree of $F_3/\fq(x_2)$ becomes $q^k$. Since we already know that $e_{\pl|\pl_0^{(2,3)}}=q^k$ using Abhyankar's lemma, we conclude that $f_{\pl|\pl_0^{(2,3)}}=1$. In particular, $\pl$ is $\fq$-rational and we can set $\pl^{(3)}:=\pl$. The place $\pl^{(i)}$ can be constructed similarly using induction on $i$ as a place lying above both $\pl^{(i-1)}$ and $\pl_0^{(i-1,i)}$. Showing that $\pl^{(i)}$ is in fact $\fq$-rational is then done in a very similar way as what we just did for $\pl^{(3)}$.

Now that we have chosen $x$ and shown that \Cref{lem:H11} applies, we return to the construction of codes. We choose as evaluation places all $(q-1)[F_i:F_1]$ rational places of $F_i$ lying above the $q-1$ zeroes of $x_1^{q-1}-1$ in $F_1$. In this way, we can create $\Fqr$-linear sum-rank metric codes $(n_r,k_r,d)$ of $\Fqr$-length $n_r=r(q-1)[F_i:F_1]$, $\Fqr$-dimension
$0 \le k_r \le n_r$ and where $d$ as before satisfies the inequality from \Cref{cor:singdefisotrivial}. 
Now define
$$A_0:=\sum_{\substack{\pl \in \mathbb{P}_{F_i} \\ x_1(\pl)=0}}\deg_{F_i}(\pl) \quad \text{and} \quad A_\infty:=\sum_{\substack{\pl \in \mathbb{P}_{F_i} \\ x_1(\pl)=\infty}}\deg_{F_i}(\pl).$$
Then 
$$\sum_{\pl \in \mathbb{P}_{F_i}\}} \frac{b_\pl-1}{b_\pl} \deg_{F_i}(\pl) \le \frac{r-1}{r}(A_0+A_\infty).$$
Note that using the Riemann--Hurwitz formula, we have
\begin{align*}
2g_{F_i}-2 &= -2[F_i:F_1]+\sum_{\substack{\pl \in \mathbb{P}_{F_i} \\ x_1(\pl)=0}} d_{\pl|\rl_0}\deg_{F_i}(\pl) + \sum_{\substack{\pl \in \mathbb{P}_{F_i} \\ x_1(\pl)=\infty}} d_{\pl|\rl_\infty}\deg_{F_i}(\pl)\\
 & \le -2[F_i:F_1]+\sum_{\substack{\pl \in \mathbb{P}_{F_i} \\ x_1(\pl)=0}} b_0(e_{\pl|\rl_0}-1)\deg_{F_i}(\pl) + \sum_{\substack{\pl \in \mathbb{P}_{F_i} \\ x_1(\pl)=\infty}} b_\infty (e_{\pl|\rl_\infty}-1)\deg_{F_i}(\pl)\\
 & = -2[F_i:F_1]+ \left(b_0[F_i:F_1]-b_0A_0\right)+\left(b_\infty[F_i:F_1]-b_\infty A_\infty\right).
\end{align*}
Then equation \eqref{eq:boundisotrivialcase} gives
\begin{align*}
d+k_r &\ge n_r -\left(r[F_i:F_1]\left(-1+\frac{b_0}{2}+\frac{b_\infty}{2}\right)- \frac{r}2\left(b_0A_0+b_\infty A_\infty\right)+\frac{r-1}2\left(A_0+A_\infty\right)\right)\\
&\ge n_r -\left(r[F_i:F_1]\left(-1+\frac{b_0}{2}+\frac{b_\infty}{2}\right)- \frac{r-1}2\left((b_0-1)A_0+(b_\infty-1) A_\infty\right)\right)\\
&\ge n_r -r[F_i:F_1]\left(-1+\frac{b_0}{2}+\frac{b_\infty}{2}\right).
\end{align*}
In the last inequality, we used that $b_0 \ge 1$ and $b_\infty \ge 1$. Dividing by $n_r=r(q-1)[F_i:F_1]$ and letting $i$ tend to infinity, the theorem follows.
\end{proof}

\begin{figure}
\centering
\begin{tikzpicture}[scale=3]
\begin{scope}[]
\draw[-latex] (-0.1,0)--(1.15,0);
\draw[-latex] (0,-0.1)--(0,1.15);
\draw (-0.03,1)--(0.03,1);
\draw (1,-0.03)--(1,0.03);
\node[below,scale=0.8] at (1.15,0) { $\delta$ };
\node[left,scale=0.8] at (0,1.15) { $R$ };
\node[below, scale=0.7] at (1,-0.02) { $1$ };
\node[left, scale=0.7] at (-0.02,1) { $1$ };
\node[below left,scale=0.7] at (0,0) { $0$ };
\draw[GV, thick] ( 0.699999999999999999999999999994 , 0.00517946961492099911359318423657 )--
( 0.684999999999999999999999999994 , 0.0152487270073131069998415855306 )--
( 0.669999999999999999999999999994 , 0.0257823387817933405656588895419 )--
( 0.654999999999999999999999999993 , 0.0367808145441819219061107281489 )--
( 0.639999999999999999999999999993 , 0.0482446922480162252269140507387 )--
( 0.624999999999999999999999999993 , 0.0601745403905434195557175533520 )--
( 0.609999999999999999999999999993 , 0.0725709604306651484784629830675 )--
( 0.594999999999999999999999999992 , 0.0854345894569035662887544747942 )--
( 0.579999999999999999999999999992 , 0.0987661031377712140867664523453 )--
( 0.564999999999999999999999999992 , 0.112566218992028286663373954830 )--
( 0.549999999999999999999999999991 , 0.126835700022368058599801747173 )--
( 0.534999999999999999999999999991 , 0.141575358763294294281361142659 )--
( 0.519999999999999999999999999991 , 0.156786061802606139565012108752 )--
( 0.504999999999999999999999999991 , 0.172468734846317502912417454031 )--
( 0.489999999999999999999999999991 , 0.188624368409429322540404201536 )--
( 0.474999999999999999999999999991 , 0.205254024230280502901320960045 )--
( 0.459999999999999999999999999991 , 0.222358842524915630640385719645 )--
( 0.444999999999999999999999999991 , 0.239940050220916566525445633264 )--
( 0.429999999999999999999999999991 , 0.257998970338613672305661017066 )--
( 0.414999999999999999999999999991 , 0.276537032723049958585355863534 )--
( 0.399999999999999999999999999991 , 0.295555786374545881668473144677 )--
( 0.384999999999999999999999999992 , 0.315056913681918351560068037148 )--
( 0.369999999999999999999999999992 , 0.335042246934013816298664511216 )--
( 0.354999999999999999999999999992 , 0.355513787577233842095293634058 )--
( 0.339999999999999999999999999992 , 0.376473728806085954832235650459 )--
( 0.324999999999999999999999999992 , 0.397924482230164066167636544140 )--
( 0.309999999999999999999999999992 , 0.419868709568068094916676853075 )--
( 0.294999999999999999999999999992 , 0.442309360596322674938033566294 )--
( 0.279999999999999999999999999992 , 0.465249718958116439099922278763 )--
( 0.264999999999999999999999999992 , 0.488693457955349664496225126964 )--
( 0.249999999999999999999999999992 , 0.512644709172590322450183595779 )--
( 0.234999999999999999999999999992 , 0.537108147812663808230924184569 )--
( 0.219999999999999999999999999992 , 0.562089100117820789795678072496 )--
( 0.204999999999999999999999999992 , 0.587593680461808392609081585431 )--
( 0.189999999999999999999999999992 , 0.613628969049262957229218604983 )--
( 0.174999999999999999999999999992 , 0.640203246374831088528860518579 )--
( 0.159999999999999999999999999992 , 0.667326308965087870367056206809 )--
( 0.144999999999999999999999999992 , 0.695009904841460514625392425626 )--
( 0.129999999999999999999999999992 , 0.723268351245407428514537129044 )--
( 0.114999999999999999999999999992 , 0.752119440998647364640732722319 )--
( 0.0999999999999999999999999999916 , 0.781585828387951228872361550492 )--
( 0.0849999999999999999999999999916 , 0.811697260610300245532107985700 )--
( 0.0699999999999999999999999999917 , 0.842494418442255873212590600114 )--
( 0.0549999999999999999999999999917 , 0.874036147613400606768882628326 )--
( 0.0399999999999999999999999999917 , 0.906414941519828987437629970086 )--
( 0.0249999999999999999999999999917 , 0.939797566898741197356781868087 )--
( 0.00999999999999999999999999999170 , 0.974583797262276424313844958618 );
\draw[green!30!black, thick] (0,0.854166666667)--(0.854166666667,0);
\node[scale=0.8] at (0.6, -0.25) { $q = 5^3,\, r = 2$ };
\end{scope}
\begin{scope}[xshift=1.5cm]
\draw[-latex] (-0.1,0)--(1.15,0);
\draw[-latex] (0,-0.1)--(0,1.15);
\draw (-0.03,1)--(0.03,1);
\draw (1,-0.03)--(1,0.03);
\node[below,scale=0.8] at (1.15,0) { $\delta$ };
\node[left,scale=0.8] at (0,1.15) { $R$ };
\node[below, scale=0.7] at (1,-0.02) { $1$ };
\node[left, scale=0.7] at (-0.02,1) { $1$ };
\node[below left,scale=0.7] at (0,0) { $0$ };
\draw[GV, thick] ( 0.969999999999999999999999999999 , 0.000867491094753157790077857511659 )--
( 0.954999999999999999999999999999 , 0.00199257158035363319764956996947 )--
( 0.939999999999999999999999999999 , 0.00356765333683759883279487938100 )--
( 0.924999999999999999999999999999 , 0.00559273640498493499964471214064 )--
( 0.909999999999999999999999999998 , 0.00806782082757038183936930757764 )--
( 0.894999999999999999999999999998 , 0.0109929066494958115221225806681 )--
( 0.879999999999999999999999999998 , 0.0143679939179336486613683358924 )--
( 0.864999999999999999999999999997 , 0.0181930826824825858328608287005 )--
( 0.849999999999999999999999999997 , 0.0224681729953368811366595222132 )--
( 0.834999999999999999999999999997 , 0.0271932649114706846911757709384 )--
( 0.819999999999999999999999999997 , 0.0323683584888390240301387730497 )--
( 0.804999999999999999999999999996 , 0.0379934537885972884192023214872 )--
( 0.789999999999999999999999999996 , 0.0440685508753412936702276145901 )--
( 0.774999999999999999999999999996 , 0.0505936498173702875296218002121 )--
( 0.759999999999999999999999999995 , 0.0575687506869755776305576399539 )--
( 0.744999999999999999999999999995 , 0.0649938535607578370930234007940 )--
( 0.729999999999999999999999999995 , 0.0728689585199765764707340454305 )--
( 0.714999999999999999999999999995 , 0.0811940656509357761534932215041 )--
( 0.699999999999999999999999999994 , 0.0899691750454102641938607179078 )--
( 0.684999999999999999999999999994 , 0.0991942868011181174428369826270 )--
( 0.669999999999999999999999999994 , 0.108869401022245179120093684325 )--
( 0.654999999999999999999999999993 , 0.118994517820028748353601765685 )--
( 0.639999999999999999999999999993 , 0.129569637313408637380806146914 )--
( 0.624999999999999999999999999993 , 0.140594759629755147815424266142 )--
( 0.609999999999999999999999999993 , 0.152069884905685135608817582278 )--
( 0.594999999999999999999999999992 , 0.163995013287979273680088545047 )--
( 0.579999999999999999999999999992 , 0.176370144934615955170908349057 )--
( 0.564999999999999999999999999992 , 0.189195280015940101605892662683 )--
( 0.549999999999999999999999999991 , 0.202470418715988566461562582284 )--
( 0.534999999999999999999999999991 , 0.216195561233998005697512497752 )--
( 0.519999999999999999999999999991 , 0.230370707786126215077114508208 )--
( 0.504999999999999999999999999991 , 0.244995858607424258076525787646 )--
( 0.489999999999999999999999999991 , 0.260071013954104550926741243995 )--
( 0.474999999999999999999999999991 , 0.275596174106159855936339442508 )--
( 0.459999999999999999999999999991 , 0.291571339370400418932812973237 )--
( 0.444999999999999999999999999991 , 0.307996510083992013876663226339 )--
( 0.429999999999999999999999999991 , 0.324871686618597423245747421980 )--
( 0.414999999999999999999999999991 , 0.342196869385249234300302834092 )--
( 0.399999999999999999999999999991 , 0.359972058840114610077935571241 )--
( 0.384999999999999999999999999992 , 0.378197255491355441272213331253 )--
( 0.369999999999999999999999999992 , 0.396872459907343546862552806446 )--
( 0.354999999999999999999999999992 , 0.415997672726565372631226959348 )--
( 0.339999999999999999999999999992 , 0.435572894669651091078506946670 )--
( 0.324999999999999999999999999992 , 0.455598126554099495783053315724 )--
( 0.309999999999999999999999999992 , 0.476073369312457836154806185161 )--
( 0.294999999999999999999999999992 , 0.496998624014977487321004642956 )--
( 0.279999999999999999999999999992 , 0.518373891898136611207314717312 )--
( 0.264999999999999999999999999992 , 0.540199174400953128296148152499 )--
( 0.249999999999999999999999999992 , 0.562474473211789659043983696806 )--
( 0.234999999999999999999999999992 , 0.585199790329512634104493363810 )--
( 0.219999999999999999999999999992 , 0.608375128144635546798845476930 )--
( 0.204999999999999999999999999992 , 0.632000489548833956694497519168 )--
( 0.189999999999999999999999999992 , 0.656075878085638180728154213150 )--
( 0.174999999999999999999999999992 , 0.680601298162406060769585113076 )--
( 0.159999999999999999999999999992 , 0.705576755356150564714390305418 )--
( 0.144999999999999999999999999992 , 0.731002256867981690051045504564 )--
( 0.129999999999999999999999999992 , 0.756877812222255118489217372598 )--
( 0.114999999999999999999999999992 , 0.783203434387870878767344528263 )--
( 0.0999999999999999999999999999916 , 0.809979141670207426507486522282 )--
( 0.0849999999999999999999999999916 , 0.837204961112234950987885043604 )--
( 0.0699999999999999999999999999917 , 0.864880935129259789481987361444 )--
( 0.0549999999999999999999999999917 , 0.893007135955354901423371065114 )--
( 0.0399999999999999999999999999917 , 0.921583702459472836388875785928 )--
( 0.0249999999999999999999999999917 , 0.950610960606069753322388382588 )--
( 0.00999999999999999999999999999170 , 0.980090070864803462491478087185 );
\draw[green!30!black, thick] (0,0.854166666667)--(0.854166666667,0);
\node[scale=0.8] at (0.6, -0.25) { $q = 5^3,\, r \to \infty$ };
\end{scope}
\begin{scope}[xshift=3cm, yshift=1cm, yscale=0.15, xscale=0.15]
\draw[GV, thick] (0,1)--(1,1);
\node[right,scale=0.8] at (1,1) { GV bound };
\draw[green!30!black, thick] (0,2)--(1,2);
\node[right,scale=0.8] at (1,2) { \Cref{thm:asymptotic4}};
\end{scope}
\end{tikzpicture}
\caption{Comparison between the GV bound and \Cref{thm:asymptotic4}}
\label{fig:GVoddcase}
\end{figure}

\section{Conclusion and further directions}\label{sec:conclusion}
In this paper, we further investigate AG codes in the sum-rank metric using the construction in \cite{BC2024}. Our first result is to provide a more general formula for the dimension of these codes without assuming injectivity of the corresponding evaluation map (Theorem \ref{thm:exact_dimension}) and a simplified construction under some extra assumptions (\Cref{sec:technicallemmas}). This simplification allows us to provide explicit constructions: in \Cref{sec:explicitconstructions} using rational function fields (obtaining, among other things, explicit MSRD codes from Kummer extensions) and in \Cref{sec:maxfunctfields} using maximal function fields and their very structured Weierstrass semigroups. Lastly, \Cref{sec:asymptotic}, is devoted to asymptotic aspects. A first improvement on the known Gilbert--Varshamov bound for sum-rank metric codes is provided in \Cref{thm:asymptotic3}. This result is followed by further improvements obtained using explicit optimal and good towers, both if $q$ is a square (\Cref{thm:asymptotic}) and if $q$ is not a square and not a prime (\Cref{thm:asymptotic4}). 
\smallskip

We conclude the paper by offering some other possible directions for future research.

The construction of MSRD codes using Kummer extensions (\Cref{constr:Kummer}) is theoretically quite different from the one of linearized Reed--Solomon codes. To emphasize this fact, we provide an explicit description of this code in \Cref{ex:MRD_Kummer} in the case $s=1$, that is, when it is an MRD code. It would be interesting to understand if our MSRD codes from Kummer extensions are equivalent to linearized Reed--Solomon ($s\geq 1$) codes or generalized Gabidulin codes ($s=1$).

With our choice of extension of function fields (in the rational function field case) and our choice of $x$, the obtained codes in Section \ref{sec:explicitconstructions} have length at most $qr^2$. However, a rational function field has $q+1$ rational places, so one could aim to obtain even longer codes. Constructions of longer MSRD codes are known, see for example \cite{MP2023, NSZ2023}, but there the length is increased by adding coordinates in the Hamming metric. 
It would be interesting to understand if considering different extensions of function fields, or different choices of $x$, one could obtain MSRD codes of length $(q+1)r^2$.

Lastly, the construction of sum-rank metric codes from maximal function fields (\Cref{sec:maxfunctfields}) uses for the first time the tool of Weierstrass semigroups, which for the latter class of function fields is very structured. There is, however, a large variety of algebraic function fields for which Weierstrass semigroups are fully determined, see for example \cite{BM2018, BMZ2021, BLM2021, BMB2026}. We believe it could be interesting to see if considering these explicit function fields could lead to more refined constructions.

\section*{Acknowledgments}
This work was supported by a research grant (VIL“52303”) from Villum Fonden. EB is also supported by the grant ANR-22-CPJ2-0047-01.

\providecommand{\bysame}{\leavevmode\hbox to3em{\hrulefill}\thinspace}
\providecommand{\MR}{\relax\ifhmode\unskip\space\fi MR }
\providecommand{\MRhref}[2]{%
  \href{http://www.ams.org/mathscinet-getitem?mr=#1}{#2}
}
\providecommand{\href}[2]{#2}

\end{document}